\tikzstyle{every picture}=[thick,inner sep=0mm]
\tikzstyle{vertex}=[minimum size=1mm,circle,fill]
\newcommand{\deriv}[2]{\textrm{d}#1/\textrm{d}#2}
\numberwithin{equation}{section}
\def\nfrac#1#2{{\textstyle\frac{#1}{#2}}}
\tikzstyle{every picture}=[thick,inner sep=0mm]
\tikzstyle{vertex}=[minimum size=2mm,circle,fill]
\newtheorem{thm}{Theorem}[section]
\newtheorem{lem}[thm]{Lemma}
\newtheorem{lemma}[thm]{Lemma}
\newcommand{\arxiv}[1]{\href{http://arxiv.org/abs/#1}{\texttt{arXiv:#1}}}
\newcommand*\dtv{\ensuremath{\mathrm{d_\mathrm{TV}}}}
\newcommand*\M{\ensuremath{\mathcal{M}}}
\newcommand\mix[1][]{\ensuremath{\tau\ifthenelse{\equal{#1}{}}{}{_{\mathrm{#1}}}(\varepsilon)}}
\newcommand*\set[1]{\ensuremath{\left\{#1\right\}}}
\newcommand*{\function}[2]{\ensuremath{\mathrm{#1}\!\left(#2\right)}}	
\newcommand*{\bigO}[1]{\function{O}{#1}}
\newcommand*{\bigTh}[1]{\function{\Theta}{#1}}	
\newcommand*\h{\ensuremath{h}}  
\newcommand*\mf{\ensuremath{\mathcal{M}_\mathrm{F}}}
\newcommand*\ms{\ensuremath{\mathcal{M}_\mathrm{S}}}
\newcommand*\OS{\ensuremath{\Omega_\mathrm{S}}}
\newcommand*\OF{\ensuremath{\Omega_\mathrm{F}}}
\newcommand*\GS{\ensuremath{\Gamma_\mathrm{S}}}
\newcommand*\GF{\ensuremath{\Gamma_\mathrm{F}}}
\newcommand*\msc{\ensuremath{\mathcal{M}_\mathrm{SC}}}
\newcommand*\GSC{\ensuremath{\Gamma_\mathrm{SC}}}
\newcommand*\SSC{\ensuremath{\Sigma_\mathrm{SC}}}
\newcommand*\SF{\ensuremath{\Sigma_\mathrm{F}}}
\newcommand*\piSC{\ensuremath{\pi_\mathrm{SC}}}
\newcommand*\piF{\ensuremath{\pi_\mathrm{F}}}
\newcommand*\PS{\ensuremath{P_\mathrm{S}}}
\newcommand*\PSC{\ensuremath{P_\mathrm{SC}}}
\newcommand*\PF{\ensuremath{P_\mathrm{F}}}
\newcommand*\Switch[5][]{\ensuremath{(#2#3,#4#5\ifthenelse{\equal{#1}{}}{\Rightarrow}{\Rightarrow_{#1}}#2#4,#3#5)}}
\newcommand*\flippath{\ensuremath{\sigma}}
\newcommand*\DS{\ensuremath{D_\mathrm{SC}}}
\newcommand*\DF{\ensuremath{D_\mathrm{F}}}
\newcommand{\st}{\rule{0pt}{8pt}}	
\title{The flip Markov chain\\ for connected regular graphs~\footnote{An earlier version of this paper appeared as an extended abstract in PODC 2009~\cite{CDH}.}~\footnote{This research was partly performed  while the first three authors were visiting the Simons Institute for the Theory of Computing.}}
\author{
Colin Cooper\thanks{Supported by EPSRC Research Grant EP/M004953/1.}\\
{\small Department of Computer Science}\\[-0.8ex]
{\small King's College London}\\[-0.8ex]
{\small London WC2R 2LS, U.K.}\\
{\small \texttt{colin.cooper@kcl.ac.uk}}\\
\and
Martin Dyer\footnotemark[3]\\
{\small School of Computing}\\[-0.8ex]
{\small University of Leeds}\\[-0.8ex]
{\small Leeds LS2 9JT, U.K.}\\
{\small \texttt{m.e.dyer@leeds.ac.uk}}\\
\and
Catherine Greenhill\thanks{Supported by the Australian Research Council Discovery Project DP140101519.}\\
{\small School of Mathematics and Statistics}\\[-0.8ex]
{\small UNSW Australia}\\[-0.8ex]
{\small Sydney, NSW 2052, Australia}\\
{\small \texttt{c.greenhill@unsw.edu.au}}\\
\and
Andrew Handley\footnote{Supported by EPSRC Research Grant EP/D00232X/1.}\\
{\small School of Computing}\\[-0.8ex]
{\small University of Leeds}\\[-0.8ex]
{\small Leeds LS2 9JT, U.K.}\\
{\small \texttt{jobriath@gmail.com}}
}
\date{13 June 2018}
\begin{document}

\maketitle

\begin{abstract}
Mahlmann and Schindelhauer (2005) defined a Markov chain which
they called $k$-Flipper, and showed that it is irreducible on
the set of all connected regular graphs of a given degree (at least 3).
We study the 1-Flipper chain, which we call the flip chain,
and prove that the flip chain converges rapidly to the uniform
distribution over connected $2r$-regular graphs with $n$ vertices, 
where $n\geq 8$ and $r = r(n)\geq 2$.
Formally, we prove that the distribution of the flip chain 
will be within $\varepsilon$ of uniform in total variation
distance after $\text{poly}(n,r,\log(\varepsilon^{-1}))$ steps.
This polynomial upper bound on the mixing time is given explicitly,
and improves markedly
on a previous bound given by Feder et al.\ (2006).
We achieve this improvement by using a direct two-stage
canonical path construction, which we define in a general setting.

This work has applications to decentralised networks based on random regular
connected graphs of even degree, as a self-stabilising protocol in which nodes
spontaneously perform random flips in order to repair the network.

\bigskip
\noindent \emph{Keywords:}\ Markov chain; graph; connected graph; regular graph
\end{abstract}

\section{Introduction}

Markov chains which walk on the set of random regular graphs
have been well studied. The switch chain
is a very natural Markov chain for sampling random regular graphs.
A transition of the switch chain is called a \emph{switch},
in which two edges are deleted and replaced with two other edges,
without changing the degree of any vertex. The switch chain was
studied by Kannan et al.~\cite{kannan99markovbipartite} for bipartite graphs, and
in~\cite{cooper05sampling} for regular graphs.
Although a switch changes only a constant number of edges per transition,
it is not a local operation since these edges may be anywhere in the
graph.

Random regular graphs have several properties which make them
good candidates for communications networks: with high probability,
a random $\Delta$-regular graph has logarithmic
diameter and high connectivity, if $\Delta\geq 3$. (See,
for example,~\cite{Wormald99}.)
Bourassa and Holt~\cite{bourassa03swan} proposed a protocol for a
decentralised communications network based on random regular graphs
of even degree. Since then, Markov chains which sample regular graphs
using local operations have been applied to give
\emph{self-stabilising} or \emph{healing} protocols for
decentralised communications networks which have a regular topology.
To help the network to
recover from degradation in performance due to arrivals and departures,
clients in a decentralised network can spontaneously perform local
transformations to re-randomize the network, thereby recovering
desirable properties.

Mahlmann and Schindelhauer~\cite{schindelhauer05kflipper} observed
that the switch operation is unsuitable for this purpose, since a badly-chosen
switch may disconnect a connected network. Furthermore, since the
switch move is non-local there are
implementation issues involved in choosing two random edges in a decentralised
network. (If the network is an expander then this problem can be solved
by performing two short random walks in the network and taking the final edge
of each for the next operation~\cite{bourassa03swan}.)

To overcome these problems, Mahlmann and Schindelhauer proposed
an alternative operation which they called a $k$-Flipper. When $k=1$,
the 1-Flipper (which we call the \emph{flip}) is simply a restricted
switch operation, in which the two edges to be switched must be at
distance one apart in the graph. Figure~\ref{fig:switchAndFlip} illustrates
a switch and a flip.
\medskip

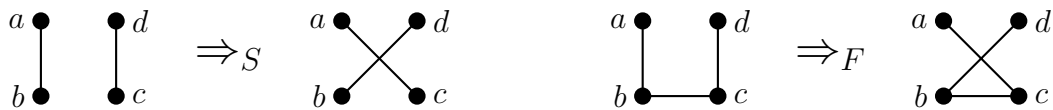
\begin{figure}[ht!]
\begin{center}
\begin{tikzpicture}
\draw [fill] (-1,0) circle (0.1);
\node [left] at (-1.2,0) {$b$};
\draw [fill] (-1,1) circle (0.1);
\node [left] at (-1.2,1.0) {$a$};
\draw [fill] (0,0) circle (0.1);
\node [right] at (0.2,0) {$c$};
\draw [fill] (0,1) circle (0.1);
\node [right] at (0.2,1) {$d$};
\draw [-] (-1,0) -- (-1,1);
\draw [-] (0,0) -- (0,1);
\node at (1.5,0.5) {{\Large $\Rightarrow_S$}};
\draw [fill] (3,0) circle (0.1);
\node [left] at (2.8,0) {$b$};
\draw [fill] (3,1) circle (0.1);
\node [left] at (2.8,1.0) {$a$};
\draw [fill] (4,0) circle (0.1);
\node [right] at (4.2,0) {$c$};
\draw [fill] (4,1) circle (0.1);
\node [right] at (4.2,1) {$d$};
\draw [-] (3,1) -- (4,0);
\draw [-] (3,0) -- (4,1);
\draw [fill] (7,0) circle (0.1);
\node [left] at (6.8,0) {$b$};
\draw [fill] (7,1) circle (0.1);
\node [left] at (6.8,1.0) {$a$};
\draw [fill] (8,0) circle (0.1);
\node [right] at (8.2,0) {$c$};
\draw [fill] (8,1) circle (0.1);
\node [right] at (8.2,1) {$d$};
\draw [-] (7,1) -- (7,0) -- (8,0) -- (8,1);
\node at (9.5,0.5) {{\Large $\Rightarrow_F$}};
\draw [fill] (11,0) circle (0.1);
\node [left] at (10.8,0) {$b$};
\draw [fill] (11,1) circle (0.1);
\node [left] at (10.8,1.0) {$a$};
\draw [fill] (12,0) circle (0.1);
\node [right] at (12.2,0) {$c$};
\draw [fill] (12,1) circle (0.1);
\node [right] at (12.2,1) {$d$};
\draw [-] (11,1) -- (12,0) -- (11,0) -- (12,1);
\end{tikzpicture}
\caption{Left: a switch operation, denoted by $\Rightarrow_S$. Right: a
flip, denoted by $\Rightarrow_F$.}
\label{fig:switchAndFlip}
\end{center}
\end{figure}

By design, the flip operation preserves the degree of all vertices
and preserves connectivity. Indeed, it is the smallest such randomising
operation: no edge exchange on fewer than four vertices
preserves the degree distribution, and the only shallower tree than the 3-path
is the 3-star, which has no degree-preserving edge exchange.
In order to use flips as a self-stabilising protocol, each peer waits for
a random time period using a Poisson clock, after which it instigates a new flip.
This involves communication only with peers at distance at most two away
in the network.

We will show that flips have asymptotic behaviour
comparable to switches, in that both operations randomise a network in
polynomial time.  Simulations also suggest that they operate equally fast, up
to a constant factor (see~\cite[Section 2.3.6]{AHthesis}).
A discussion of a distributed implementation of the flip operation is
given in~\cite{CDH}.

Our approach will be to relate the mixing time of the flip chain $\mf$
to an intermediate chain $\msc$, which is the switch chain restricted to connected
graphs.
In turn, the mixing time of the connected switch chain $\msc$
will be related
to the mixing time of the (standard) switch chain $\ms$, using the
analysis from~\cite{cooper05sampling}. Each of these two steps will be
performed using a technique which we call ``two-stage direct canonical
path construction'', described in Section~\ref{sec:two-stage-direct}.
Specifically, we prove the following result.
This is a corrected version of the bound given
in~\cite{CDH}, which failed on certain rare graphs.

\begin{thm}
For each $n\geq 8$ let $\Delta=\Delta(n)\geq 4$ be a positive even integer
such that $n\geq \Delta +1$.
The mixing time of the flip Markov chain $\mf$ on the set of $\Delta$-regular graphs with
$n$ vertices is at most
\[ 480\, \Delta^{35}\, n^{15}\, \left(\Delta n\log(\Delta n) + \log(\varepsilon^{-1})\right),
\]
of which the two-stage direct construction is
responsible for a factor of $480\, \Delta^{12} n^7$.
\label{thm:flip-mixing-time}
\end{thm}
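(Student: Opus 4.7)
The plan is to apply the comparison (canonical paths / multicommodity flow) technique twice in sequence, as the introduction foreshadows. First I would quote the mixing-time bound for the switch chain $\ms$ from Cooper, Dyer and Greenhill (2005) as a black box. Then I would bound the mixing time of the connected switch chain $\msc$ in terms of that of $\ms$, and finally bound the mixing time of $\mf$ in terms of that of $\msc$. Multiplying together the spectral-gap losses from the two comparisons and the bound on $\ms$, and converting back to total-variation mixing time, gives the claimed polynomial bound on $\mix[F]$; the factor $480\,\Delta^{12}n^7$ absorbs the product of the two comparison ratios, while the remaining factor of $\Delta^{23}n^8(\Delta n\log(\Delta n)+\log(\varepsilon^{-1}))$ absorbs the bound on $\ms$.

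For the comparison from $\msc$ to $\ms$, I would construct, for each switch transition $\Switch{a}{b}{c}{d}$ between connected $\Delta$-regular graphs $G$ and $G'$, a canonical path in $\msc$ from $G$ to $G'$ that stays entirely inside the connected stratum of the state space. In the generic case the switch itself already leaves $G$ connected, so the path has length one. The problematic case is when the switch in $\ms$ would exit the connected stratum; here I would detour through a carefully chosen auxiliary connected graph, using a small number of additional switches to repair connectivity. The two-stage direct canonical path construction of Section~\ref{sec:two-stage-direct} is tailored to exactly this situation, treating the generic case as the first stage and the connectivity-repair case as the second.

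For the comparison from $\mf$ to $\msc$, the key observation is that a switch $\Switch{a}{b}{c}{d}$ can be simulated by a sequence of flips, provided the two edges $ab$ and $cd$ are joined by a short path $\flippath$ in the current graph: one slides an auxiliary edge along $\flippath$ by successive flip operations until the target switch is realised. When such a short connecting path exists the construction is direct; when it does not, a detour through an auxiliary edge is inserted, which is precisely the ``rare graph'' situation whose omission in the PODC extended abstract~\cite{CDH} the present theorem corrects. Again the two-stage direct framework isolates these two cases, and in each case the canonical flip-path has length bounded polynomially in $n$ and $\Delta$.

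The main obstacle will be the congestion analysis, particularly for the flip-to-switch comparison. A flip is far more local than a switch, so the simulating paths are longer and each flip transition is crossed by many canonical paths; bounding this load requires a careful encoding argument that recovers the simulated switch $\Switch{a}{b}{c}{d}$ together with the position along $\flippath$ from the flip transition and a small amount of auxiliary information, parameterised by the combinatorial structure surrounding the flipped $3$-path. The ``rare graph'' second stage contributes an additional term whose congestion must be controlled separately, and keeping both contributions inside the stated $480\,\Delta^{12}n^7$ is where most of the quantitative work is concentrated.
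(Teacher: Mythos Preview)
Your high-level architecture is right: compare $\mf$ to $\msc$, compare $\msc$ to $\ms$, multiply the congestion losses, and invoke the Cooper--Dyer--Greenhill bound for $\ms$. That is exactly what the paper does.

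There is, however, a genuine gap in your $\msc$-to-$\ms$ step. You propose to simulate each switch transition \emph{between connected $\Delta$-regular graphs $G$ and $G'$} by a short $\msc$-path. But the canonical paths for $\ms$ from~\cite{cooper05sampling} pass through \emph{disconnected} intermediate states even when both endpoints are connected (the paper flags this explicitly in the introduction). So you must simulate \emph{every} $\ms$-transition $(W,W')$, including those with $W,W'\in\OS\setminus\OF$, and your simulation path must run between states of $\OF$. The paper resolves this by fixing a surjection $h\colon\OS\to\OF$ (the ``chaining'' construction of Section~\ref{sec:disconnected}) and defining the simulation path for $(W,W')$ to go from $h(W)$ to $h(W')$; bounding $\max_G|h^{-1}(G)|$ then enters the congestion analysis. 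Without this ingredient your construction does not yield canonical paths for $\msc$ at all, because the concatenated simulation paths would not join up. Your ``detour through a carefully chosen auxiliary connected graph'' hints at the right idea but only covers the case where one endpoint of the transition is connected.

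A smaller point: you have misread the phrase ``two-stage direct canonical path construction''. It does not refer to a generic-case/rare-case split; it refers to the outer stage (the $\M'$-canonical path $\gamma'_{wz}$) and the inner stage (simulating each $\M'$-transition by an $\M$-path), as formalised in Theorem~\ref{thm:twostagedirect}. The Case~1/Case~2 split in the long-flip (and the detour flip for rare graphs) is a separate, lower-level case analysis inside the construction of a single simulation path.
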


Section~\ref{sec:solution} relates the flip chain
and the connected switch chain, while Section~\ref{sec:disconnected}
relates the connected switch chain and the switch chain.
The proof of Theorem~\ref{thm:flip-mixing-time} can be found at the
end of Section~\ref{sec:disconnected}.
The rest of this section contains some background material, some helpful
graph theory results, precise definitions of the flip and switch Markov
chains, and some definitions regarding canonical paths.

We remark that a ``one-stage direct analysis''
of the flip chain seems difficult.  Furthermore, the multicommodity flow
defined for the switch chain in~\cite{cooper05sampling} does not
apply directly to the connected switch chain, since the flow may pass
through disconnected graphs, even when both end-states are connected.
It may be possible to circumvent this problem by using an alternative
multicommodity flow argument, but we do not explore that option here.

\subsection{Notation, terminology and history}

We will use the notation $\Switch{a}{b}{c}{d}$ to denote the switch operation
as illustrated in Figure~\ref{fig:switchAndFlip}.

The switch chain $\ms$ applies a random switch at each step,
and the flip chain $\mf$ applies a random flip at each step.
(We define both chains formally in Section~\ref{s:chains}.)
The state space $\Omega_S$ for the switch chain is the set
of all $\Delta$-regular graphs on the vertex set $[n]=\{1,2,\ldots, n\}$,
while the state space $\Omega_F$ for the flip chain is the set
of all connected graphs in $\Omega_S$.

The switch chain is known to be ergodic, and has uniform stationary
distribution on $\Omega_S$.
Mahlmann and Schindelhauer~\cite{schindelhauer05kflipper}
showed that the flip chain is ergodic on $\Omega_F$.
They also proved that the transitions of the flip chain are symmetric,
so that the stationary distribution of the flip chain is uniform on
$\Omega_F$.
(We use a slightly different transition procedure, described in
Section~\ref{s:chains}, which is also symmetric.)

The mixing time of a Markov chain is a measure of how many steps
are required before the Markov chain has a distribution
which is close to stationary.
Kannan et al.~\cite{kannan99markovbipartite} considered the mixing time of the switch
Markov chain on ``near regular'' bipartite graphs. Cooper et
al.~\cite{cooper05sampling} gave a polynomial bound on the mixing time $\mix[CDG]$
of the switch chain for all regular simple graphs. The result of~\cite{cooper05sampling}
allows the degree $\Delta=\Delta(n)$ to grow arbitrarily with $n$: it is not restricted
to regular graphs of constant degree.

In Section~\ref{sec:solution} we extend the result of~\cite{cooper05sampling}
from switches to flips, using a method which we
call \emph{two-stage direct canonical path construction}.
We restrict our attention to even degree, 
since this allows the network to have any number of vertices (even or odd).
So for the rest of the paper we assume that $\Delta=2r$ for some $r\geq 2$,
unless otherwise stated.
It happens that even-degree graphs have some other
desirable properties, as will we see in Section~\ref{s:useful}, which
will simplify parts of our argument. However, we see no reason why
the flip chain would fail to be rapidly mixing for odd degrees.

Mahlmann and Schindelhauer~\cite{schindelhauer05kflipper}
 do not provide a bound for the mixing time of
$\mf$, though they do show~\cite[Lemma~9]{schindelhauer05kflipper} that a similar chain with the flip edges
at a distance of
$\bigTh{\Delta^2 n^2 \log{\varepsilon^{-1}}}$ apart will give an $\varepsilon$-approximate expander
in time $\bigO{\Delta n}$. However, these moves are highly non-local and therefore
are unsuitable for the application to self-stabilisation  of decentralised
networks.

An upper bound for the mixing time of $\mf$ was given by Feder et
al.~\cite{saberi06switchflip}
using a comparison argument with the switch Markov chain, $\ms$.
Applying a comparison argument to relate these chains is a difficult
task, since a switch may disconnect a connected graph and a flip
cannot.
Feder et al.~\cite{saberi06switchflip} solve this difficulty
by embedding
$\ms$ into a restricted switch chain which only walks on connected graphs.
They prove that this can be done
in such a way that a path in $\ms$ is only polynomially lengthened.
Their result relaxes the bound in~\cite{cooper05sampling}
by $\bigO{\Delta^{41} n^{45}}$.
By employing a two-stage direct construction,
we give a much tighter result, relaxing the bound on $\mix[CDG]$ by
a factor of $\bigO{r^{12} n^7}$
in the case of $2r$-regular graphs with $r\geq 2$.

Recently, Allen-Zhu et al.~\cite{ABLMO} proved a result related to the work of 
Mahlmann and Schindelhauer~\cite[Lemma 9]{schindelhauer05kflipper}.
They proved that for $\Delta=\Omega(\log n)$, after at most $O(n^2\Delta^2\sqrt{\log n})$ steps of the flip chain, 
the current graph will be an algebraic expander in the sense that the eigenvalue gap of the adjacency matrix 
is $\Omega(\Delta)$.  The proof is based on tracking the change of the graph Laplacian using a 
potential function rather than analysing the flip chain directly,
and does not address the distribution of the output graph, or the 
case $\Delta = o(\log n)$, which 
may be the most convenient in practical applications.

Indeed, it may be possible to reduce the mixing time bound by combining
our analysis with the results of~\cite{ABLMO}.  If $\Delta=o(\log n)$ then 
the bound from Theorem~\ref{thm:flip-mixing-time}, is $O(n^{16}\operatorname{poly}(\log n))$.
Otherwise, when $\Delta = \Omega(\log n)$, the first $O(n^2\Delta^2\sqrt{\log n})$ steps of the flip chain
may be viewed as a preprocessing phase.  As shown by~\cite{ABLMO}, at the end of
this preprocessing phase the current graph is an expander with high probability,
and hence has logarithmic diameter.
The diameter appears in our analysis, 
for example in the proof of Lemma~\ref{lem:lF} below,
and in the analysis of the switch chain~\cite{cooper05sampling}.
Therefore, it is possible that this preprocessing
phase could result in an improvement in the mixing time bound of at least one factor
of the order $\frac{\log n}{n}$.  To achieve this improvement, it appears necessary to
show that a canonical path between two expanders only visits graphs with logarithmic diameter.
We do not pursue this approach here.

\subsection{Graph-theoretical preliminaries}\label{s:useful}

For further details see, for example, Diestel~\cite{diestel}.
Unless otherwise stated, $G$ denotes a simple graph $G = (V,E)$ with
$V=\{1,2,\ldots,n\}$.  The set of neighbours in $G$ of a vertex
$v$ is denoted by $N_G(v)$.

A vertex set $S\subseteq V$ has \emph{edge boundary}
$\partial_E{S} = \{uv\,|\,u \in S, v \notin S\}$, which
contains the edges crossing from inside to outside of $S$.

A graph $G$ is 2-edge-connected if it is connected and every
edge of $G$ lies in a cycle.
More generally, an edge-cut in $G$ is a set $F\subseteq E$ of edges of
$G$ such that
the graph
$G - F$ (which is $G$ with the edges in $F$ deleted)
is disconnected. If $|F|=k$ then $F$ is a $k$-edge-cut.
If vertices $v$, $w$ belong to distinct components of $G-F$ then we
say that $F$ \emph{separates} $v$ and $w$ in $G$.

A \emph{cut vertex} in a graph $G$ is a vertex $v$ such that deleting
$v$ disconnects the graph.

If $p$ is a path from $v$ to $w$ then we say that $p$ is a $(v,w)$-path.
Further, if $p=v\cdots a\cdots b\cdots w$ then we write $p[a:b]$ to denote
the (inclusive) subpath of $p$ with endvertices $a$ and $b$.

We make use of the following structural properties of $2r$-regular graphs
with $r\geq 2$.

\begin{lem}
Let $r\geq 2$ be an integer and let $G=(V,E)$ be a $2r$-regular graph
on $n\geq \max\{ 8,\, 2r+1\}$ vertices.
Then $G$ satisfies the following properties:
\begin{enumerate}
\item[\emph{(i)}]
For any $S \subseteq V$, the edge boundary $\partial_E{S}$ of $S$ is even.
\item[\emph{(ii)}]
Every edge of $G$ lies in a cycle; that is, every connected
component of $G$ is 2-edge-connected.
\item[\emph{(iii)}]
$G$ has fewer than $n/(2r)$ connected components.
\item[\emph{(iv)}]
Let $u,v$ be distinct vertices in the same connected component of $G$.
The number of $2$-edge-cuts in $G$ which separate $u$
and $v$ is at most $n^2/(15 r^2)$.
\end{enumerate}
\label{lem:even-d}
\end{lem}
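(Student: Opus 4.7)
I would handle parts (i)--(iii) by short standard arguments. Part (i) is the handshaking identity applied to $S$: $2r|S| = 2|E(G[S])| + |\partial_E S|$, so $|\partial_E S|$ must be even. Part (ii) then follows by contradiction: if some edge $e$ lay in no cycle of $G$, then $e$ would be a bridge, and one side $S$ of $G-e$ would have $|\partial_E S| = 1$, contradicting (i). Part (iii) follows because any connected component of $G$ is itself a simple $2r$-regular graph, so has at least $2r + 1$ vertices, giving at most $n/(2r+1) < n/(2r)$ components.

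Part (iv) is the substantive part. I would restrict attention to the connected component $H$ containing $u$ and $v$, which is $2r$-regular and 2-edge-connected by (ii). The first ingredient is a size lemma: for every $S \subseteq V(H)$ with $u \in S$, $v \notin S$ and $|\partial_E S| = 2$, the subgraph $H[S]$ has exactly $r|S| - 1$ edges, and $r|S| - 1 \leq \binom{|S|}{2}$ rearranges to $|S|^2 - (2r+1)|S| + 2 \geq 0$, which forces $|S| \geq 2r + 1$ for $r \geq 2$; symmetrically $|V(H) \setminus S| \geq 2r + 1$. The second ingredient is a submodularity lemma: writing $\mathcal{F}$ for the family of such sets $S$, I would show $\mathcal{F}$ is closed under union and intersection, because $|\partial_E(S_1 \cap S_2)| + |\partial_E(S_1 \cup S_2)| \leq |\partial_E S_1| + |\partial_E S_2| = 4$, and $u \in S_1 \cap S_2$, $v \notin S_1 \cup S_2$ force each term on the left to be at least $2$, hence exactly $2$.

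To finish, I would invoke the cactus representation of minimum $u,v$-cuts due to Dinitz, Karzanov and Lomonosov: there is a cactus $K$ and a map $\phi\colon V(H) \to V(K)$ under which 2-cuts of $H$ separating $u$ and $v$ correspond to 2-cuts of $K$ separating $\phi(u)$ and $\phi(v)$, and these arise only from pairs of edges on cycles of $K$ that lie on the (block-)path from $\phi(u)$ to $\phi(v)$. Each such cycle of length $l$, split by that path into arcs of lengths $p$ and $q = l - p$, contributes $pq \leq l^2/4$ separating cuts. The size lemma lifts to the cactus: every cactus-node on such a cycle lies above at least $2r+1$ vertices of $H$, and after an accounting for cactus-nodes lying on two or more path-cycles (where the uniform $2r+1$ lower bound can weaken, most starkly when $r=2$), one obtains $\sum_C l_C \leq n/(2r+1)$. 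Therefore
\[
 \sum_C p_C q_C \;\leq\; \tfrac14 \sum_C l_C^2 \;\leq\; \tfrac14 \Bigl(\textstyle\sum_C l_C\Bigr)^2 \;\leq\; \frac{n^2}{4(2r+1)^2} \;\leq\; \frac{n^2}{15\, r^2},
\]
where the final inequality holds because $4(2r+1)^2 - 15 r^2 = r^2 + 16 r + 4 > 0$. The main obstacle is precisely this accounting for shared cactus-nodes, especially in the $r=2$ case where a shared node can lie above a single vertex of $H$; the modest slack between $16\, r^2$ and $15\, r^2$ in the final inequality is what lets those degenerate configurations be absorbed into the clean bound, and making that rigorous is where most of the work concentrates.
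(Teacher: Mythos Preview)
Your arguments for (i)--(iii) are correct and essentially match the paper's (the paper phrases (ii) as ``every component is Eulerian'', but this is equivalent). For (iv) your framework is also the paper's: the cactus of minimum $(u,v)$-cuts is exactly the paper's ``node-link multigraph'' $\widetilde{G}$ obtained by contracting the $3$-edge-connected equivalence classes, and the count $\sum_C p_C q_C$ over the cycles on the $h_0$--$h_k$ path is the same.

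The gap is in your claimed inequality $\sum_C l_C \leq n/(2r+1)$. Your size lemma bounds each \emph{side} of a $2$-cut, not each cactus node, so ``lifts to the cactus'' is not justified; and in fact the inequality is false. For $r=2$, take the paper's chain-of-$K_4$'s construction: the cactus is a path of $\ell$ double edges, so $\sum_C l_C = 2\ell$, while $n = 5\ell/2 + O(1)$, giving $\sum_C l_C \approx 4n/5$, not $n/5$. Even for $r\geq 3$ the inequality can fail: with three double edges in a row one has $\sum_C l_C = 6$ but only $N=4$ nodes, so $n$ can be as small as $4(2r+1)$ and $n/(2r+1)=4<6$. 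The slack between $4(2r+1)^2$ and $15r^2$ is only a constant factor and cannot absorb a factor-of-$4$ blowup in $\sum_C l_C$. The paper does not go through $\sum_C l_C$ at all: it writes $\Lambda(u,v)\leq \sum_j s_j^2$ with $s_j=l_{C_j}/2$, uses the node count $N = 2\sum_j s_j - (\ell-1)$, and optimises directly over $(s_1,\ldots,s_\ell,\ell)$. For $r\geq 3$ each node carries $\geq 2r+1$ vertices (a degree argument, since nodes have degree $\leq 4$ in $\widetilde{G}$ but vertices have degree $2r\geq 6$), the optimum is at $\ell=1$, and one gets $\Lambda(u,v)\leq N^2/4 \leq n^2/(4(2r+1)^2)$. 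For $r=2$ join nodes can be single vertices, and the paper does a separate case analysis to obtain $\Lambda(u,v)\leq n^2/60$. Your proposal correctly flags this as the hard case but does not supply the argument; the displayed chain of inequalities, as written, does not go through.
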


\begin{proof}
For (i), we argue by contradiction.
Assume that there is an odd
number $k = 2k' + 1$ of edges incident upon some set $S$. Let $G[S]$ be
the subgraph induced by $S$. Let $|S| = s$ and suppose that $G[S]$
has $m$ edges.
Denote the sum of degrees of the vertices in $S$ by $d_G(S)$.
Clearly $d_G(S) = 2rs$. Exactly $k$ edges incident with $S$ have
precisely one endvertex in $S$, so
\[ 2m = d_{G[S]}(S) = 2r s - k = 2rs - 2k' - 1 = 2(rs-k') - 1,\]
but no integer $m$ satisfies this
equation. Hence our assumption is false, proving (i).

For (ii), it suffices to observe that every component
of $G$ is Eulerian, as all vertices have even degree.
Next, (iii) follows since the smallest $2r$-regular graph
is $K_{2r+1}$, and $G$ can have at most $n/(2r+1) < n/(2r)$ components
this small.
The proof of (iv) is deferred to the Appendix, as it is somewhat lengthy.
\end{proof}

\subsection{Canonical paths}\label{s:canonical}

We now introduce some Markov chain terminology and describe
the canonical path method. For further details,
see for example Jerrum~\cite{jerrum03lectures} or
Sinclair~\cite{Sincla92}.

Let $\mathcal{M}$ be a Markov chain with finite state space
$\Omega$ and transition matrix $P$. We suppose that $\mathcal{M}$
is ergodic and has unique stationary distribution $\pi$.
The graph underlying the Markov chain $\mathcal{M}$ is given
by $\mathcal{G}=(\Omega,E(\mathcal{M}))$ where
the edge set of $\mathcal{G}$
corresponds to (non-loop) transitions of $\mathcal{M}$; specifically,
\[ E(\mathcal{M}) = \{ xy  \mid  x,y\in\Omega,\, x\neq y,\, P(x, y) > 0\} \]
We emphasise that  $\mathcal{G}$ has no loops.

Let $\sigma$, $\rho$ be two probablity distributions defined on a finite set $\Omega$.
The \emph{total variation distance} between $\sigma$ and $\rho$,
denoted $\dtv(\sigma,\rho)$, is defined by
\[ \dtv(\sigma,\rho) = \nfrac{1}{2}\sum_{x\in\Omega} |\sigma(x) - \rho(x)|.\]
The \emph{mixing time} $\mix$ of a Markov chain is given by
\[\mix = \max_{x\in{}X} \,\min\,\{T \ge 0\ |\ \dtv(P^t_x,
\pi) \le \varepsilon \mbox{ for all } t \ge T\},\]
where $P^t_x$ is the distribution of the random state $X_t$ of the
Markov chain after $t$ steps with initial state $x$. 
Let the eigenvalues of $P$ be
\[ 1 = \lambda_0 > \lambda_1 \geq \lambda_2 \cdots \geq \lambda_{N-1} \geq -1.\]
Then the mixing time satisfies
\begin{equation}
\label{mixing}
 \tau(\varepsilon) \leq (1-\lambda_\ast)^{-1}\left(\ln(1/\pi^\ast) +
                  \ln(\varepsilon^{-1}) \right)
\end{equation}
where $\lambda_\ast = \max\{ \lambda_1,\, |\lambda_{N-1}|\}$ and
$\pi^\ast = \min_{x\in\Omega}\pi(x)$ is the smallest stationary probability.
(See~\cite[Proposition 1]{Sincla92}.)

The \emph{canonical path method},
introduced by Jerrum and Sinclair~\cite{jerrum89conductancepermanent},
gives a method for bounding $\lambda_1$.
Given any two states $x,y\in\Omega$, a ``canonical'' directed path $\gamma_{xy}$
is defined from $x$ to $y$ in the graph $\mathcal{G}$ underlying the Markov chain,
so that each step in the path corresponds to a transition of the chain.
The \emph{congestion} $\bar{\rho}(\Gamma)$ of the set
$\Gamma = \{ \gamma_{xy} \mid x,y\in\Omega \}$ of
canonical paths is given by
\[ \bar{\rho}(\Gamma) = \max_{e\in E(\M)}  Q(e)^{-1} \sum_{\substack{x,y\in\Omega\\ e\in\gamma_{xy}}}\, \pi(x)\pi(y)|\gamma_{xy}|
\]
where $Q(e) = \pi(u)\, P(u,v)$ when $e=uv\in E(\M)$.
If the congestion along each transition is low then the chain should mix rapidly.
This can be made precise as follows: if $\lambda^\ast = \lambda_1$ then
\begin{equation}
\label{canonicalpath}
(1-\lambda_1)^{-1} \le\ \bar{\rho}(\Gamma).
\end{equation}
(See~\cite[Theorem 5]{Sincla92} or~\cite{jerrum03lectures} for more details.)
In many applications the chain $\mathcal{M}$ is made \emph{lazy} by
replacing its transition matrix $P$ by $(I+P)/2$: this ensures that the
chain has no negative eigenvalues and hence that $\lambda_\ast = \lambda_1$.
However, in many cases it is easy to see that $(1+\lambda_{N-1})^{-1}$
is smaller than the best-known upper bound on $(1-\lambda_1)^{-1}$,
such as that provided by the canonical path method (\ref{canonicalpath}).
In this situation, combining (\ref{mixing}) and (\ref{canonicalpath})
gives an upper bound on the mixing time without the need to make the chain
lazy.

The \emph{multicommodity flow} method is a generalisation of the
canonical path method. Rather than send $\pi(x)\pi(y)$ units
down a single ``canonical'' path from $x$ to $y$, a set $\mathcal{P}_{xy}$
of paths is defined, and the flow from $x$ to $y$ is divided among them.
The congestion can be defined similarly, and low congestion leads to
rapid mixing: see Sinclair~\cite{Sincla92}.

Cooper et al.~\cite{cooper05sampling} gave a multicommodity flow
argument to bound the mixing time of the switch chain $\ms$.
For every pair of distinct states $x,y$, a set of paths from $x$ to $y$ was defined,
one for each vector $(\phi_v)_{v\in [n]}$, where $\phi_v$ is a ``pairing''
of the edges of the symmetric difference
of $x$ and $y$ incident with the vertex $v$.  Then the flow between
$x$ and $y$ was shared equally among all these paths.
In our analysis below, we assume
that a pairing has been \emph{fixed} for each distinct pair of states $x,y$, leading to one
canonical path $\gamma_{xy}$ between them. We will work with the set
of these canonical paths for the switch chain, $\Gamma_S = \{ \gamma_{xy}\}$.
Since our congestion bounds will be independent on the choice of pairing, the
same bound would hold if the original multicommodity flow from~\cite{cooper05sampling}
was used.

\subsection{The switch and flip Markov chains}\label{s:chains}

Given $\Delta\geq 3$ and $n\geq \Delta+1$, the
\emph{switch chain} $\ms$ has state space $\Omega_S$ given by the set of all
$d$-regular graphs on $n$ vertices.
The flip chain $\mf$ has state space $\Omega_F$ given by the set of all
connected $\Delta$-regular graphs on $n$ vertices.
We restrict our attention to regular graphs
of even degree, setting $\Delta=2r$ for some integer $r\geq 2$.

Let $P_S$ (respectively, $P_F$)
denote the transition matrix of the switch (respectively, flip) Markov chain.
The stationary distribution of the switch (respectively, flip)
chain is denoted by $\pi_S$ (respectively, $\pi_F$).
The graph underlying the Markov chain will be denoted by
$\mathcal{G}_S=(\Omega_S,E(\ms))$ and
$\mathcal{G}_F=(\Omega_F,E(\mf))$, respectively.

The transition procedure for the switch Markov chains is
given in Figure~\ref{fig:MarkovSwitchChain}.

\begin{figure}[ht!]
\centering
\fbox{
	\begin{minipage}{10cm}
	\begin{tabbing}
	\hspace{0.6cm}\=\hspace{0.6cm}\=\kill
	From $G \in \Omega_S$ do\\
	\>choose non-adjacent distinct edges $ab$, $cd$, u.a.r.,\\
	\>choose a perfect matching $M$ of $\{a,b,c,d\}$ u.a.r.,\\
	\>if $M \cap E = \emptyset$ then\\
	\>\>delete the edges $ab$, $cd$ and add the edges of $M$,\\
	\>otherwise\\
	\>\>do nothing\\
	\>end if;\\
	end;
	\end{tabbing}
	\end{minipage}
}
\caption{The switch Markov chain $\ms$.}
\label{fig:MarkovSwitchChain}
\end{figure}

The switch chain is irreducible~\cite{petersen,taylor}. Furthermore,
as $P_S(x,x) \geq 1/3$ for all $x\in\Omega_S$, the switch chain
is aperiodic, and hence ergodic. Using an approach of
Diaconis and Saloff-Coste~\cite[p.~702]{DiaSal93}, this also
implies that the smallest eigenvalue $\lambda_{N-1}$ of $\ms$ satisfies
\[ (1+\lambda_{N-1})^{-1}\leq
    \nfrac{1}{2}\, \max_{x\in\Omega_S} P(x,x)^{-1}\leq 3/2.
\]
Since the upper bound on $(1-\lambda_1)^{-1}$
proved in~\cite{cooper05sampling}
is (much) larger than constant, this proves that we can apply the
canonical path method to the switch chain without making the switch
chain lazy.

If $x, y$ are distinct elements of $\Omega_S$ which differ by a switch then
$P_S(x,y) = 1/(3 a_{n,2r})$ where (see~\cite{cooper05sampling})
\begin{equation}
\label{an2r}
 a_{n,2r} = \binom{rn}{2} - n\, \binom{2r}{2},
\end{equation}
Hence the transition matrix $P_S$ is symmetric and the stationary distribution
of the switch chain is uniform over $\Omega_S$.

Figure~\ref{fig:MarkovFlipChain} gives the transition procedure of the
flip chain.
Here we start from a randomly chosen vertex $a$ and perform a random walk
from $a$ of length 3, allowing backtracking.
There are $(2r)^3n$ choices for the resulting walk $(a,b,c,d)$,
and we choose one of these uniformly at random.

\begin{figure}[ht!]
\centering
\fbox{
	\begin{minipage}{10cm}
	\begin{tabbing}
	\hspace{0.6cm}\=\hspace{0.6cm}\=\kill
	From $G \in \Omega_F$ do\\
        \>choose a vertex $a\in [n]$ u.a.r.,\\
	\>choose a walk $(a,b,c,d)$ of length 3 from $a$, u.a.r.,\\
	\>if $|\{ a,b,c,d\}| = 4$ and $E \cap \{ac, bd\} = \emptyset$ then\\
	\>\>delete the edges $ab$, $cd$ and add the edges $ac, bd$,\\
	\>otherwise\\
	\>\>do nothing\\
	\>end if;\\
	end;
	\end{tabbing}
	\end{minipage}
}
\caption{The flip Markov chain $\mf$.}
\label{fig:MarkovFlipChain}
\end{figure}

Mahlmann and Schindelhauer proved that the flip chain is irreducible
on $\Omega_F$~\cite{schindelhauer05kflipper}.
For all states $x\in\Omega_F$ we have $P(x,x) \geq 1/(2r)$,
since in particular the proposed transition is rejected when $c=a$.
Therefore the flip chain is aperiodic, and hence ergodic.
Again, by~\cite[p.~702]{DiaSal93}, this implies
that the smallest eigenvalue $\lambda_{N-1}$ of $\mf$ satisfies
$(1+\lambda_{N-1})^{-1} \leq r$.  This
in turn allows us to apply the canonical path method without making the
chain lazy, since the upper bound we obtain on $(1-\lambda_1)^{-1}$
will be (much) bigger than $r$.

Now suppose that $x, y$ are distinct elements of $\Omega_F$ which differ by a
flip, and that this flip replaces edges $ab, cd$ with $ac, bd$.
Then if exactly one of the edges $bc$, $ad$ is present in $x$, say
$bc$, then the transition from $x$ to $y$ takes place if and only if
the chosen 3-walk is $(a,b,c,d)$ or $(d,c,b,a)$.  Hence in this case
\[ P_F(x,y) = \frac{2}{(2r)^3 n} = P_F(y,x).\]
If both of the edges $bc$ and $ad$ are present then either of them could
take the role of the ``hub edge'' of the flip, and
\[ P_F(x,y) = \frac{4}{(2r)^3 n} = P_F(y,x).\]
Hence $P_F$ is a symmetric matrix and the stationary distribution of the
flip chain is uniform, as claimed.
(Mahlmann and Schindelhauer use a slightly different transition procedure,
and must argue about the number of triangles which contain a given edge
in order to show that their transitions are symmetric:
see~\cite[Lemma 3]{schindelhauer05kflipper}.)

\section{Two-stage direct canonical path construction}
\label{sec:two-stage-direct}

Suppose $\M$ is an ergodic Markov chain whose mixing time we would like to
study, and $\M'$ is a chain for which a canonical path set~$\Gamma'$
has been defined,
with congestion bounded by $\bar{\rho}(\Gamma')$. Let~$P$, $\pi$ and $\Omega$
(respectively, their primed counterparts) be the transition matrix, stationary
distribution and state space of $\M$ (respectively, $\M'$).
Guruswami~\cite{guruswami} proved that under mild conditions on~$\M'$, a bound on the
mixing time of~$\M'$ implies the existence of an appropriate set~$\Gamma'$
with bounded congestion.

We wish to ``simulate'' transition in~$\M'$ using
sequences of transitions in~$\M$, and showing that not too many transitions
in~$\M$ can be involved in any one such simulation. Together with
$\bar{\rho}(\Gamma')$ this results on a bound of the congestion of $\Gamma$, and
hence on the mixing time of $\M$.  Here we assume that $\Omega\subseteq\Omega'$.

If~$\Omega$ is a \emph{strict} subset of~$\Omega'$ then canonical paths
in~$\Gamma'$ that begin and end in~$\Omega$ may travel through states
in~$\Omega'\setminus\Omega$.
(This will be the case in Section~\ref{sec:disconnected}.)
To deal with this, we must define a
surjection~$h\colon \Omega'\longrightarrow\Omega$ 
such that $h(u)=u$ for any $u\in\Omega$, and such that the
preimage of any $y\in\Omega$ is not too large. This means that
each state in~$\Omega$ ``stands in'' for only a few states
in~$\Omega'$.
If $\Omega=\Omega'$ then $h$ is the identity map, by definition.

Suppose that $\gamma'_{wz}\in\Gamma'$ is a canonical path from $w$ to $z$
in $\mathcal{G}'$. Write
\[\gamma'_{wz} = (Z_0, Z_1,\dotsc, Z_j,\, Z_{j+1},\dotsc, Z_\ell),\]
with $Z_0=w,Z_\ell=z$ and $(Z_j,Z_{j+1})\in E(\M')$ for all
$j\in[\ell-1]$.
We now define the \emph{two-stage direct canonical path}~$\gamma_{w,z}$
with respect to the fixed surjection $h:\Omega'\rightarrow \Omega$.

For each $(Z_j,Z_{j+1})\in\gamma'_{wz}$, construct a
path
\[\sigma_{Z_jZ_{j+1}} = (X_0,X_1,\dotsc,X_i, X_{i+1},\dotsc,X_\kappa),\]
with $X_0=h(Z_j)$ and $X_{\kappa}=h(Z_{j+1})$, such that
$X_i\in\Omega$ for $i=0,\ldots, \kappa$
 and $(X_i,X_{i+1})\in E(\M)$ for $i=0,\ldots, \kappa-1$.
We call $\sigma_{Z_j Z_{j+1}}$ a $(\M,\M')$-\emph{simulation path} (or
just a \emph{simulation path}, if no confusion can arise), since this path
simulates a single transition in~$\M'$ using edges of~$\M$.

These simulation paths are concatenated together
to form a canonical path~$\gamma_{wz}$ from $w$ to $z$ in $\mathcal{G}$, as follows:
\[\gamma_{wz} = (\sigma_{Z_0Z_1}, \sigma_{Z_1Z_2}, \dotsc,
                 \sigma_{Z_{\ell-1}Z_\ell}).\]
The following algorithmic interpretation of $\gamma_{wz}$ may be useful as an
illustration. Begin by querying $\gamma'_{wz}$ for the first transition
$(Z_0,Z_1)$ to simulate.  Beginning from $h(Z_0)$, perform transitions in the
corresponding simulation path until state~$h(Z_1)$ is reached. Now query
$\gamma'_{wz}$ for the next transition $(Z_1,Z_2)$ and simulate that; and so on,
until you have simulated all of $\gamma'_{wz}$,  completing the simulation path
$\gamma_{wz}$.

Denote the set of all $(\M,\M')$-simulation paths as
\[\Sigma = \set{\sigma_{xy}\mid (x,y)\in E(\M')}.\]
For these simulation paths to give rise to canonical paths for $\M$
with low congestion, no transition in~$\M$ can feature
in too many simulation paths. For each $t\in E(\M)$, let
\[\Sigma(t) = \set{\sigma\in\Sigma\mid t\in\sigma}\]
be the number of simulation paths containing the transition $t$.
The measures of quality of~$\Sigma$ are the maximum number of simulation paths
using a given transition,
\[B(\Sigma) = \max_{t\in E(\M)} |\Sigma(t)|,\]
and the length of a maximal simulation path,
\[\ell(\Sigma) = \max_{\sigma\in\Sigma} |\sigma|.\]
Note that the simulation paths all depend on the fixed surjection $h$,
and so the same is true of the bounds $B(\Sigma)$ and $\ell(\Sigma)$.
In particular, the quality of the surjection $h$ is measured by
$\max_{y\in\Omega}\, |\{ z\in\Omega' \mid h(z)=y\}$
and this quantity will be needed when calculating $B(\Sigma)$ in particular
(as we will see in Section~\ref{sec:disconnected}).

Some chains are ill-suited to simulation. For instance, a transition in~$\M'$
with large capacity might necessarily be simulated in~$\M$ by a simulation
path that contained an edge with small capacity. Hence we define two
quantities that measure the gap between~$\pi$ and $\pi'$, and $P$ and $P'$.
We define the \emph{simulation gap} $D(\M,\M')$ between $\M$ and $\M'$ by 
\[ D(\M,\M')=\max_{\substack{uv\in E(\M)\\zw\in E(\M')}}\,
    \frac{\pi'(z)\, P'(z,w)}{\pi(u)\, P(u,v)}.\]

We combine these ingredients in the following lemma. The key point
is that the congestion across transitions in $E(\M')$ (and hence across the
simulation paths) is already bounded. If the number of simulation paths making
use of any given transition in $E(\M)$ is bounded, and the graphs are not too
incompatible then the congestion of each two-stage direct canonical path
in~$\Gamma$ will also be bounded.

\begin{thm}
\label{thm:twostagedirect}
Let $\M$ and $\M'$ be two ergodic Markov chains on~$\Omega$ and~$\Omega'$.
Fix a surjection $h:\Omega'\rightarrow\Omega$.
Let~$\Gamma'$ be a set of canonical paths in $\M'$ and let~$\Sigma$ be a set
of $(\M,\M')$-simulation paths defined with respect to $h$.
Let $D=D(\M,\M')$ be the simulation
gap defined above. Then there exists a set $\Gamma$ of canonical paths
in~$\M$ whose congestion satisfies
\[\bar{\rho}(\Gamma)\leq  D\, \ell(\Sigma)\, B(\Sigma)\, \bar{\rho}(\Gamma').\]
\end{thm}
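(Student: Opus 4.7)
The plan is to bound the congestion at an arbitrary transition $e = (u,v) \in E(\M)$ directly, by tracing how canonical paths in $\Gamma$ are built from simulation paths on top of canonical paths in $\Gamma'$. Fixing such an $e$, I want to bound
\[
T_e \,:=\, Q(e)^{-1} \sum_{\substack{x,y \in \Omega \\ e \in \gamma_{xy}}} \pi(x)\pi(y)\,|\gamma_{xy}|,
\]
with the goal of showing $T_e \le D\,\ell(\Sigma)\,B(\Sigma)\,\bar{\rho}(\Gamma')$ for every $e$; the theorem then follows by taking the maximum over $e$.

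The first step is a reindexing that turns a sum indexed by endpoints of paths in $\Gamma$ into one indexed by simulation paths. By construction, $\gamma_{xy} = \sigma_{Z_0Z_1}\,\sigma_{Z_1Z_2}\cdots\sigma_{Z_{\ell-1}Z_\ell}$ where $\gamma'_{xy} = (Z_0,\ldots,Z_\ell)$, so $e \in \gamma_{xy}$ forces the existence of some $\sigma_{zw}\in\Sigma(e)$ with $(z,w)\in\gamma'_{xy}$. This yields
\[
T_e \,\le\, Q(e)^{-1} \sum_{\sigma_{zw}\in\Sigma(e)}\; \sum_{\substack{x,y\in\Omega \\ (z,w)\in\gamma'_{xy}}} \pi(x)\pi(y)\,|\gamma_{xy}|,
\]
which may over-count (if several edges of $\gamma'_{xy}$ produce simulation paths through $e$) but only in the upward direction. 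The length bound $|\gamma_{xy}|\le \ell(\Sigma)\,|\gamma'_{xy}|$, valid because each edge of $\gamma'_{xy}$ contributes at most $\ell(\Sigma)$ transitions to $\gamma_{xy}$, extracts an $\ell(\Sigma)$ factor, and the outer sum has at most $B(\Sigma)$ terms by the definition of $B(\Sigma)$.

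For each fixed $(z,w)\in E(\M')$ the remaining inner sum is compared with the congestion of $\Gamma'$ at $(z,w)$: the inequality
\[
\sum_{\substack{x,y\in\Omega' \\ (z,w)\in\gamma'_{xy}}} \pi'(x)\pi'(y)\,|\gamma'_{xy}| \,\le\, Q'(z,w)\,\bar{\rho}(\Gamma')
\]
is immediate from the definition of $\bar{\rho}(\Gamma')$. Passing from the sum restricted to $\Omega\times\Omega$ (with measure $\pi$) to the sum over $\Omega'\times\Omega'$ (with measure $\pi'$) is trivial when $\Omega=\Omega'$ (the case of Section~\ref{sec:solution}), and otherwise uses the bounded preimage sizes of the surjection $h$ to control the change of measure. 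Finally, the simulation-gap bound $Q'(z,w)/Q(e)\le D$ converts each $Q'(z,w)$ into $D\,Q(e)$, cancelling the $Q(e)^{-1}$ prefactor and leaving $D\,\ell(\Sigma)\,B(\Sigma)\,\bar{\rho}(\Gamma')$, as claimed. The main obstacle I anticipate is precisely this change-of-measure step when $\Omega\subsetneq\Omega'$, where the preimage structure of $h$ must be accounted for carefully so that no extra factor beyond $D$ is incurred; the reindexing and counting arguments are routine once that comparison is pinned down.
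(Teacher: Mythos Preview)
Your approach is essentially the paper's: the same reindexing over simulation paths $\sigma_{zw}\in\Sigma(e)$, the same length bound $|\gamma_{xy}|\le\ell(\Sigma)\,|\gamma'_{xy}|$, the same count $|\Sigma(e)|\le B(\Sigma)$, and the same use of the simulation gap to convert $Q(e)^{-1}$ into $D\,Q'(z,w)^{-1}$. (The paper also remarks that one may assume each transition occurs at most once in any simulation path by pruning cycles, which you might mention.)

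The one substantive divergence is the change-of-measure step you flag as the main obstacle. The paper does \emph{not} invoke the preimage sizes of $h$ here; it remarks explicitly after the proof that $h$ is used only to construct $\Sigma$, not in the proof itself. Instead, the paper simply enlarges the inner sum from $x,y\in\Omega$ to $x,y\in\Omega'$, citing nonnegativity of the summands and the inclusion $\Omega\subseteq\Omega'$, and in the same step replaces $\pi(x)\pi(y)$ by $\pi'(x)\pi'(y)$. So your plan to bring in preimage bounds of $h$ would introduce a factor absent from the stated inequality and is unnecessary for matching the paper's argument. Your instinct that this passage deserves scrutiny is not unreasonable---``nonnegativity and $\Omega\subseteq\Omega'$'' justifies enlarging the index set but does not by itself address substituting $\pi'$ for $\pi$---but that is how the paper handles it, and no preimage bound appears.
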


\begin{proof}
We begin with the definition of congestion of the set of paths $\Gamma$ for $\M)$, 
and work towards rewriting all quantities in terms of the set of canonical
paths $\Gamma'$ for $\M'$, as follows:
\begin{align}
\bar{\rho}(\Gamma)&=
	\max_{uv\in E(\M)}\, \frac1{\pi(u)P(u,v)}
	\sum_{\substack{x,y\in\Omega\\uv\in\gamma_{xy}}}
		\pi(x)\pi(y)\,|\gamma_{xy}|\notag\\
	&=\max_{uv\in E(\M)}\, \frac1{\pi(u)P(u,v)}
	\sum_{\substack{zw\in E(\M')\\ \sigma_{zw}\in\Sigma(uv)}}
\:
	\sum_{\substack{x,y\in\Omega\\ \sigma_{zw}\in\gamma_{xy}}}
		\pi(x)\pi(y)\,|\gamma_{xy}|\label{line:path-split}\\
	&\leq \max_{uv\in E(\M)}\, \frac1{\pi(u)P(u,v)}
	\sum_{\substack{zw\in E(\M')\\ \sigma_{zw}\in\Sigma(uv)}}
\:
	\sum_{\substack{x,y\in\Omega\\zw\in\gamma'_{xy}}}
		\pi(x)\pi(y)\,|\gamma_{xy}|.
\label{line:sigmas-switch}
\end{align}
Line~(\ref{line:path-split}) splits the two-stage canonical paths into simulation paths.
By a slight abuse of notation, we write $\sigma_{zw}\in \gamma_{xy}$ to mean that $\sigma_{zw}$ is 
one of the simulation paths which was concatenated together to make $\gamma_{xy}$.
Here we assume that
each edge~$uv$ occurs at most once in any given simulation path $\sigma\in\Sigma$.
Otherwise, the
second use of~$uv$ in any~$\sigma$ forms a cycle which may be pruned: continue
pruning until no edge appears more than once in any simulation path.
We also assume that no self-loops appear in any simulation path
(these may also be pruned.)
Line~(\ref{line:sigmas-switch}) uses the
one-to-one correspondence between $\gamma_{xy}$ and $\gamma'_{xy}$
for all~$x,y\in\Omega$. Note that this correspondence is still one-to-one even
when~$\Omega\subset\Omega'$, since the endpoints are identical (as $h$ acts as
the identity map on $\Omega$) and each
canonical path is determined by its endpoints.

Now we observe that for all $uv\in E(\M)$ and $zw\in E(\M')$,
by definition of the simulation gap $D$,
\[ \frac{1}{\pi(u)P(u,v)} \leq  D\, \frac{1}{\pi'(z)P'(z,w)}.\]
Similarly, for all $x,y\in\Omega'$,
\[ |\gamma_{xy}| \leq \ell(\Sigma)\, |\gamma'_{xy}|.
\]
Therefore, as each summand of (\ref{line:sigmas-switch}) is nonnegative
and $\Omega\subseteq \Omega'$, we obtain
\begin{align}
\bar{\rho}(\Gamma)&\leq D\, \ell(\Sigma)\, \max_{uv\in E(\M)} \,
	\sum_{\substack{zw\in E(\M')\\\sigma_{zw}\in \Sigma(uv)}}
		\frac1{\pi'(z)P'(z,w)}
	\sum_{\substack{x,y\in\Omega'\\zw\in\gamma'_{xy}}}
		\pi'(x)\pi'(y)\,|\gamma'_{xy}|\notag\\
	&\leq D\, \ell(\Sigma)\, \max_{uv\in E(\M)} \,
	\sum_{\substack{zw\in E(\M)\\\sigma_{zw}\in \Sigma(uv)}}
		\bar{\rho}(\Gamma')\notag\\
	&\leq D\, \ell(\Sigma)\, B(\Sigma)\, \bar{\rho}(\Gamma'),\notag
\end{align}
as required.
\end{proof}

We remark that in the above theorem, the surjection $h$ is only needed to construct the 
set of simulation paths (which are given as input to the theorem).

\section{Analysis of the flip chain}\label{sec:solution}

Our aim is to define canonical paths in the
flip chain that simulate canonical paths defined in the switch chain.
There are two main differences between the switch and flip chains which
cause difficulties.
The state space of the flip chain is restricted to connected graphs,
and no flip can disconnect a connected graph.  However, a
switch can increase or decrease the number of components in a graph by one.
Hence, from a connected graph, the set of available flips may be
strictly smaller than the set of available switches.
Following~\cite{saberi06switchflip} we
overcome these difficulties in two steps, using an intermediate chain to
bridge the gap.

Define the \emph{connected switch chain} $\msc$ to be the
projection of the switch chain~$\ms$ onto state space~$\OF$.
A transition of $\msc$ is a \emph{connected switch}, which is a switch on a
connected graph that preserves connectedness.
To be precise, if
$\PSC$ denotes the transition matrix of $\msc$ and $\PS$ denotes the
transition matrix of $\ms$, then for all $x\neq y\in\OS$,
\begin{equation}
\label{conn-switch}
\PSC(x,y) = \begin{cases} \PS(x,y)&\text{if }x,y\in\OF,\\
                               0&\text{otherwise}.
\end{cases}
\end{equation}
Any transition of $\ms$ which produces a disconnected graph is replaced by
a self-loop in $\msc$. Hence,
for all~$x\in\OF$,
\[\PSC(x,x) = \PS(x,x)+\sum_{y\in\OS\setminus\OF} \PS(x,y).\]
Transitions are symmetric, and so the stationary distribution $\piSC$ of
$\msc$ is the uniform stationary distribution on $\Omega_F$ (that is,
$\piSC = \piF$).

We proceed as follows. In Section~\ref{sec:long-flip}
we show that a connected switch can be
simulated by sequences of flips, which we will call a \emph{long-flip}.
This gives a set of $(\mf,\msc)$-simulation paths. Here the surjection
$h$ is the identity map, since these chains share the same state space.
We apply Theorem~\ref{thm:twostagedirect} to this set of simulation paths.

Then in Section~\ref{sec:disconnected} we apply the theorem again to
relate $\msc$ and $\ms$. We use a construction of Feder et
al.~\cite{saberi06switchflip}
to define a surjection $h:\Omega_S\rightarrow \Omega_F$. We then define
$(\msc,\ms)$-simulation paths with respect to this surjection
and apply Theorem~\ref{thm:twostagedirect}.
Theorem~\ref{thm:flip-mixing-time} is proved
by combining the results of these two sections.

The two-stage direct method has lower cost, in the sense of relaxation of the
mixing time, than the comparison method used in \cite{saberi06switchflip}.
Both approaches have at their heart the construction of paths in $\mf$ to
simulate transitions in~$\ms$, but the two-stage canonical path analysis
is more direct and yields significantly lower mixing time bounds
than those obtained in~\cite{saberi06switchflip}.

\subsection{Simulating a connected switch: the long-flip} \label{sec:long-flip}

We must define a set of $(\mf,\msc)$-simulation paths,
as defined in Section~\ref{sec:two-stage-direct}.
The two Markov chains have the same state space $\Omega_F$, and so
we use the identity map as the surjection.

Let $S = (Z,Z')$ be a connected switch which we wish to simulate using
flips.
Then $Z$ and $Z'$ are connected graphs which
differ by exactly four edges, two in $Z\setminus Z'$ and two
in~$Z'\setminus Z$. We say that the two edges $ab, cd\in Z\setminus Z'$
have been \emph{switched out} by $S$, and the two edges
$ac, bd\in Z'\setminus Z$ have been \emph{switched in} by $S$.
Given an arbitrary labelling of one switch edge as $ab$ and the other
as $cd$, these four permutations of the vertex labels
\begin{equation}
 (\, ),\quad (a\, b)(c\, d),\quad (a\, c)(b\, d),\quad (a\, d)(b\, c)
\label{perms}
\end{equation}
preserve the edges switched in and out by $\Switch{a}{b}{c}{d}$.

To simulate the connected switch~$S=(Z,Z')$ we will define a
simulation path called a \emph{long-flip}, which consists of
a sequence of flips
\[\flippath_{Z Z'} = (X_0, X_1, \dotsc, X_{i-1},X_{i} \dotsc, X_\kappa)\]
with $X_0=Z$ and $X_\kappa = Z'$, such that $(X_{i-1},X_i)\in E(\mf)$
for $i=1,\ldots, \kappa$. We define these simulation paths in this
section, and analyse their congestion in Section~\ref{sec:longflip-congestion}.

A long-flip uses a \emph{hub path} $p=(b,p_1,p_2,\ldots,p_{\nu-1},c)$ from $b$
to $c$.
In the simplest case (in which $p$ does not contain~$a$
or $d$ and no internal vertex of~$p$ is adjacent to either of these vertices) we will
simply ``reverse'' $p$ using flips, 
as illustrated later in Figure~\ref{fig:switchAsFlips}.
Here by ``reversing $p$'' we imagine removing the path $p$ with vertex $b$ at the ``left''
and $c$ at the ``right''.  Now remove $p$ from the graph and
replacing it in the reverse orientation, but with $a$ reattached to the left-most
vertex of $p$, which is now $c$, and with $d$ reattached to the right-most vertex of $p$,
which is now $b$.
This has the effect of deleting the
edges~$ab,cd$ and replacing them with~$ac,bd$, as desired.
This process is described in more detail in Section~\ref{case1} below 
The choice of hub path is important, and we now explain how the hub
path will be (deterministically) selected.

Suppose that $\Switch{a}{b}{c}{d}$ is the connected switch which takes $Z$ to $Z'$,
which we wish to simulate using flips.
We say that a path $p$ is a \emph{valid hub path} for the switch~$\Switch{a}{b}{c}{d}$
in $Z$
if it satisfies the following \emph{$(a,b,c,d)$~path conditions}:
\begin{enumerate}
\setlength\itemsep{-0.5mm}
  \item[(i)] $p$ is a $(b,c)$-path in $Z$;
  \item[(ii)] $a$ and $d$ do not lie on $p$; and
  \item[(iii)] the vertices of $p$ are only adjacent through path edges; that is, there
are no edges in $Z$ of the form~$p_ip_j$ with $|i-j|\neq 1$
for~$i,j\in\set{0,\dotsc,|p|}$.
\end{enumerate}
We refer to condition~(iii) as the \emph{no-shortcut property}.
Given a path $p$ which satisfies conditions (i) and (ii), it is easy to find
a path with the no-shortcut property, by simply following the shortcuts. 
A shortest $(b,c)$-path is a valid $(a,b,c,d)$~path provided it avoids $a$
and $d$.

Given $Z$ and the switch edges $ab,cd$, we perform the following procedure:
\begin{itemize}
\item
Let $p$ be the lexicographically-least shortest $(b,c)$-path in $Z$ which
avoids $a$ and $d$, if one exists;
\item If not, let $p$ be the lexicographically-least
shortest $(a,d)$-path in $Z$ which avoids $b$ and $c$, if one exists, and apply
the vertex relabelling $(a\, b)(c\, d)$.  (The result of this relabelling is that
now $p$ is the lexicographically-least shortest $(b,c)$-path in $Z$ which avoids
$a$ and $d$.)
\item
If a path $p$ was found, return the path $p$. Otherwise, return FAIL.
\end{itemize}

If a path $p$ was found by the above procedure then we will say that we are in Case~1.
Otherwise the output is ``FAIL'', and we say that we are in Case~2.

Suppose that if $p$ is a lexicographically-least shortest path between
vertices $\alpha$ and $\beta$ in a graph $Z$.  We treat $p$ as a
directed path from $\alpha$ to $\beta$, with $\alpha$ on the ``left''
and $\beta$ on the ``right''.  Suppose that
vertices $\gamma$, $\delta$ lie on $p$, with $\gamma$ to the ``left'' of
$\delta$.  Then the lexicographically-least shortest $(\gamma,\delta)$-path
in $Z$ is a subpath of $p$ (or else we contradict the choice of $p$).
This important property, satisfied by hub paths, will be called the
\emph{recognisability property}. It will enable us to reconstruct the hub
path cheaply, as we will see in Section~\ref{sec:longflip-congestion}.

We now define the simulation path for Case 1 and Case 2 separately.
Recall that since flips cannot disconnect a connected graph, the simulation
paths always remain within $\Omega_F$.

\subsubsection{Case 1: the long-flip, and triangle-breaking}\label{case1}

We must define a simulation path $\sigma_{ZZ'}$ for the connected
switch from $Z$ to $Z'$ given by $\Switch{a}{b}{c}{d}$.
The path $p$ selected above (the lexicographically-least shortest
$(b,c)$-path which avoids $a$ and $d$) is a valid hub path
for this switch, and we refer to it as the hub path.
If $p$ has length 1 then this switch is a flip, and so we assume that
$p$ has length at least two.

We simulate the connected switch $\Switch{a}{b}{c}{d}$ by reversing the path $p$,
as shown in Figure~\ref{fig:switchAsFlips}.
In (A), the switch $\Switch{a}{b}{c}{d}$ is shown with alternating solid
and dashed edges, with solid edges belong to $Z$ and dashed edges
belonging to $Z'$.
Write the hub path as $p = (b, p_1, \dots, p_{\nu-1},c)$, where $b=p_0$ and
$c=p_{\nu}$.

\begin{figure}[ht!]
\begin{center}
\begin{tikzpicture}
\draw [fill] (0,1) circle (0.1);
\node [left] at (-0.2,1.0) {$a$};
\draw [-] (0,0.1) -- (0.1,0) -- (0,-0.1) -- (-0.1,0) -- (0,0.1); 
\node [left] at (-0.2,0) {$b$};
\draw [fill] (5,0) circle (0.1);
\node [right] at (5.2,0) {$c$};
\draw [fill] (5,1) circle (0.1);
\node [right] at (5.2,1) {$d$};
\node [above] at (2.5,1.0) {$(A)$};
\draw [-] (0,0.1) -- (0,1);
\draw [-] (5,0) -- (5,1);
\draw [dashed,-] (0,1) -- (5,0.071);
\draw [dashed,-] (5,1) -- (0.071,0.071);
\draw [-] (0.1,0) -- (2,0);
\draw [-] (4,0) -- (5,0);
\draw [dotted,-] (2,0) -- (4,0);
\draw [fill] (1,0) circle (0.1);
\node [below] at (1.0,-0.2) {$p_1$};
\draw [fill] (2,0) circle (0.1);
\node [below] at (2.0,-0.2) {$p_2$};
\draw [fill] (4,0) circle (0.1);
\node [below] at (4.0,-0.2) {$p_{\nu-1}$};
\node at (6.5,0.5) {{\Large $\Rightarrow_F$}};
\draw [fill] (8,1) circle (0.1);
\node [left] at (7.8,1.0) {$a$};
\draw [fill] (8,0) circle (0.1);
\node [left] at (7.8,0) {$p_1$};
\draw [fill] (13,0) circle (0.1);
\node [right] at (13.2,0) {$c$};
\draw [fill] (13,1) circle (0.1);
\node [right] at (13.2,1) {$d$};
\node [above] at (10.5,1.0) {$(B)$};
\draw [-] (8,0) -- (8,1);
\draw [-] (13,0) -- (13,1);
\draw [-] (8,0) -- (8.9,0); 
\draw [-] (9.1,0) -- (10,0); 
\draw [-] (12,0) -- (13,0);
\draw [dotted,-] (10,0) -- (12,0);
\draw [-] (9,0.1) -- (9.1,0) -- (9,-0.1) -- (8.9,0) -- (9,0.1); 
\node [below] at (9.0,-0.2) {$b$};
\draw [fill] (10,0) circle (0.1);
\node [below] at (10.0,-0.2) {$p_2$};
\draw [fill] (12,0) circle (0.1);
\node [below] at (12.0,-0.2) {$p_{\nu-1}$};
\node at (0,-2.5) {{\Large $\Rightarrow_F^\ast$}};
\draw [fill] (1.5,-2) circle (0.1);
\node [left] at (1.3,-2.0) {$a$};
\draw [-] (1.5,-2.9) -- (1.6,-3) -- (1.5,-3.1) -- (1.4,-3) -- (1.5,-2.9); 
\node [left] at (1.3,-3) {$p_1$};
\draw [fill] (6.5,-3.0) circle (0.1);
\node [right] at (6.7,-3.0) {$b$};
\draw [fill] (6.5,-2) circle (0.1);
\node [right] at (6.7,-2) {$d$};
\node [above] at (4.0,-2.0) {$(C)$};
\draw [-] (1.5,-2.9) -- (1.5,-2);
\draw [-] (6.5,-3) -- (6.5,-2);
\draw [-] (1.6,-3) -- (3.5,-3);
\draw [-] (5.5,-3) -- (6.5,-3);
\draw [dotted,-] (3.5,-3) -- (5.5,-3);
\draw [fill] (2.5,-3) circle (0.1);
\node [below] at (2.5,-3.2) {$p_2$};
\draw [fill] (3.5,-3) circle (0.1);
\node [below] at (3.5,-3.2) {$p_3$};
\draw [fill] (5.5,-3) circle (0.1);
\node [below] at (5.5,-3.2) {$c$};
\node at (8.0,-2.5) {{\Large $\Rightarrow_F^\ast$}};
\draw [fill] (9.5,-2) circle (0.1);
\node [left] at (9.3,-2.0) {$a$};
\draw [fill] (9.5,-3) circle (0.1);
\node [left] at (9.3,-3) {$c$};
\draw [fill] (14.5,-3.0) circle (0.1);
\node [right] at (14.7,-3.0) {$b$};
\draw [fill] (14.5,-2) circle (0.1);
\node [right] at (14.7,-2) {$d$};
\node [above] at (12.0,-2.0) {$(D)$};
\draw [-] (9.5,-3) -- (9.5,-2);
\draw [-] (14.5,-3) -- (14.5,-2);
\draw [-] (9.5,-3) -- (11.5,-3);
\draw [-] (13.5,-3) -- (14.5,-3);
\draw [dotted,-] (11.5,-3) -- (13.5,-3);
\draw [fill] (10.5,-3) circle (0.1);
\node [below] at (10.5,-3.2) {$p_{\nu-1}$};
\draw [fill] (11.5,-3) circle (0.1);
\node [below] at (11.5,-3.2) {$p_{\nu-2}$};
\draw [fill] (13.5,-3) circle (0.1);
\node [below] at (13.5,-3.2) {$p_1$};
\end{tikzpicture}
\caption{A hub path reversed by flips.}
\label{fig:switchAsFlips}
\end{center}
\end{figure}
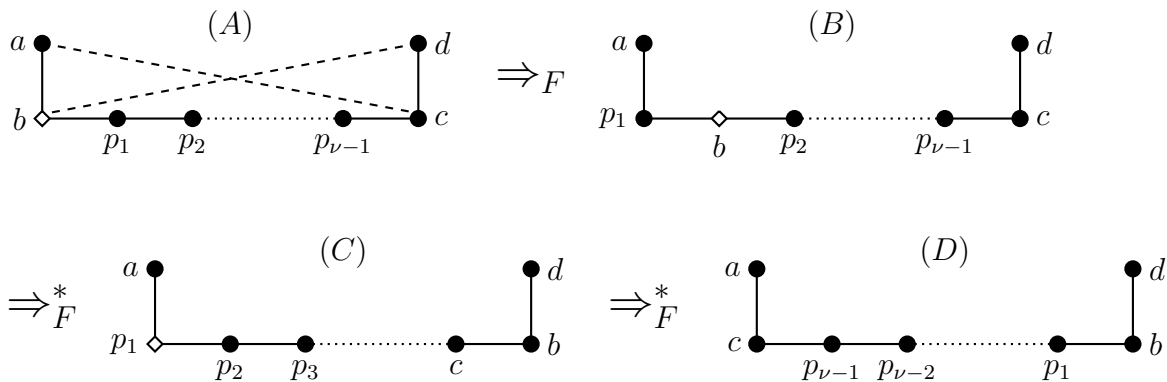

The reversal of~$p$ proceeds in \emph{runs}, in which successive vertices are
moved along the path one at a time into their final position. The run that
moves vertex~$p_i$ into place is called the \emph{$p_i$-run}, during which
$p_i$ is referred to as the \emph{bubble} due to the way it percolates along
path~$p$.
The first stage of the path reversal is the $b$-run, which moves~$b$ adjacent to
$d$. Next is the~$p_1$-run, and then the~$p_2$-run, until the
$p_{\nu-2}$-run, which puts~$c$ adjacent to~$a$ and completes the reversal
of~$p$.
Specifically, the $p_i$-run consists of the sequence of flips
\[(a,p_{i},p_{i+1},p_{i+2}), (p_{i+1},p_{i},p_{i+2},p_{i+3}),
(p_{i+2},p_{i},p_{i+3},p_{i+4}), \dotsc, (p_{\nu-1},p_{i},c,p_{i-1}),\]
where $p_0=b$, $p_\nu=c$ (and in the case of the $b$-run, $p_{-1}$ is
understood to denote~$d$).
After the~$p_i$ run, $c$ has moved one step closer to $a$ on path~$p$.
See Figure~\ref{fig:switchAsFlips}, where the current bubble is shown as a diamond.
The result of the first flip of the $b$-run is shown
in (B), and the situation at the end of the $b$-run is shown in (C),
At this stage, $b$ is in its final
position, adjacent to $d$.  Finally in (D) we see the graph after all runs
are complete, with the path reversed.
(Note that $\Rightarrow_F$ denotes a single flip while $\Rightarrow_F^\ast$
denotes a sequence of zero or more flips.)

Since the distance from $c$ to $a$ decreases by one after each run,
the total number of flips used to reverse the path $p$ is
$\frac12\nu(\nu+1)$. The no-shortcut property of hub paths is
required so that no flip in any run is blocked by parallel edges within~$p$.

However, the first flip of the $b$-run is blocked if $p_1$ is adjacent to $a$.
Before we can reverse the path $p$, we must break this \emph{triangle}~$[a,b,p_1]$
using the following procedure. Choose the least-labelled vertex~$x$ that is adjacent
to $b$ but not~$p_1$. The $(a,b,c,d)$ path
conditions imply that $p_2$ is adjacent to $p_1$, but not to $b$, so the
existence of $x$ follows from $2r$-regularity. Then the triangle~$[a,b,p_1]$ can
be removed by performing a flip on the path~$(x,b,p_1,p_2)$, as shown below.

\begin{center}
\begin{tikzpicture}[scale=1.4,inner sep=0.5mm,line width=1pt]
\pgfsetxvec{\pgfpoint{1cm}{0cm}}
\pgfsetyvec{\pgfpoint{0cm}{1cm}};
\path
(0,0) node [circle,draw,fill=black,label=below:\st $x$] (x){}
(1,1) node [circle,draw,fill=black,label=left:\st $a$] (a){}
(1,0) node [circle,draw,fill=black,label=below:\st $b$] (b){}
(2,0) node [circle,draw,fill=black,label=below:\st $p_1$] (p_1){}
(3,0) node [circle,draw,fill=black,label=below:\st $p_2$] (p_2){}
(3.5,0) node (c){};
\draw (x)--(b)--(p_1)--(p_2) (b)--(a)--(p_1);
\draw[densely dashed] (p_2)--(c);
\end{tikzpicture}
\raisebox{1cm}{\quad\LARGE $\Rightarrow$\quad}
\begin{tikzpicture}[scale=1.5,inner sep=0.5mm,line width=1pt,]
\pgfsetxvec{\pgfpoint{1cm}{0cm}}
\pgfsetyvec{\pgfpoint{0cm}{1cm}};
\path
(0,0) node [circle,draw,fill=black,label=below:\st $x$] (x){}
(2,1) node [circle,draw,fill=black,label=right:\st $a$] (a){}
(1,0) node [circle,draw,fill=black,label=below:\st $p_1$] (p_1){}
(2,0) node [circle,draw,fill=black,label=below:\st $b$] (b){}
(3,0) node [circle,draw,fill=black,label=below:\st $p_2$] (p_2){}
(3.5,0) node (c){};
\draw (x)--(b)--(p_1)--(p_2) (b)--(a)--(p_1);
\draw[densely dashed] (p_2)--(c);
\end{tikzpicture}
\end{center}

Now $p_1$ is still adjacent to $a$, but it is no longer on the $(b,c)$-path,
and the length of the $(b,c)$-path has been reduced by $1$. However, it is
possible that $p_2$ is also adjacent to $a$, so we may still have a triangle.
If so, we must repeat the process, as indicated below, successively shifting
$b$ to the right on the $(b,c)$-path, and reducing its length.
To break the triangle $[a,b,p_2]$ we flip on $(p_1,b,p_2,p_3)$.

\begin{center}
\begin{tikzpicture}[scale=1.4,xscale=0.75,inner sep=0.5mm,line width=1pt,]
\pgfsetxvec{\pgfpoint{1cm}{0cm}}
\pgfsetyvec{\pgfpoint{0cm}{1cm}};
\path
(0,0) node [circle,draw,fill=black,label=below:\st $x$] (x){}
(2,1) node [circle,draw,fill=black,label=above:\st $a$] (a){}
(1,0) node [circle,draw,fill=black,label=below:\st $p_1$] (p_1){}
(2,0) node [circle,draw,fill=black,label=below:\st $b$] (b){}
(3,0) node [circle,draw,fill=black,label=below:\st $p_2$] (p_2){}
(4,0) node [circle,draw,fill=black,label=below:\st $p_3$] (p_3){}
(4.5,0) node (c){};
\draw (x)--(b)--(p_1)--(p_2)--(p_3) (b)--(a)--(p_1) (a)--(p_2);
\draw[densely dashed] (p_3)--(c);
\end{tikzpicture}
\raisebox{1cm}{\quad\LARGE $\Rightarrow$\quad}
\begin{tikzpicture}[scale=1.5,xscale=0.75,inner sep=0.5mm,line width=1pt,]
\pgfsetxvec{\pgfpoint{1cm}{0cm}}
\pgfsetyvec{\pgfpoint{0cm}{1cm}};
\path
(0,0) node [circle,draw,fill=black,label=below:\st $x$] (x){}
(3,1) node [circle,draw,fill=black,label=above:\st $a$] (a){}
(1,0) node [circle,draw,fill=black,label=below:\st $p_1$] (p_1){}
(2,0) node [circle,draw,fill=black,label=below:\st $p_2$] (p_2){}
(3,0) node [circle,draw,fill=black,label=below:\st $b$] (b){}
(4,0) node [circle,draw,fill=black,label=below:\st $p_3$] (p_3){}
(4.5,0) node (c){};
\draw (x)--(b)--(p_1)--(p_2)--(p_3) (b)--(a)--(p_1) (a)--(p_2);
\draw[densely dashed] (p_3)--(c);
\end{tikzpicture}
\end{center}

More generally, if the edges $a p_1,\ldots, a p_k$ are all present
but the edge $a p_{k+1}$ is not
then the $k$'th triangle $[a,b,p_k]$ is broken with the flip ($p_{k-1},b,p_k,p_{k+1}$).
After this flip, $b$ is adjacent to $p_k$ and $p_{k+1}$.
We call these \emph{triangle-breaking flips}.

The no-shortcut property of the $(b,c)$-path guarantees that all the
triangle-breaking flips after the first one are valid. The triangle-breaking
process must end when $bc$ becomes an edge, at the very latest.
This implies that $k\leq \nu-1$.
Furthermore, since $a$ has degree $2r$ we also have $k\leq 2r-1$.
When the process terminates, the edge $ap_{k+1}$ is absent and $bp_{k+1}$
is the first edge on the $(b,c)$-path. Now the long-flip can be performed
on this new $(b,c)$ hub path.  If $k=\nu-1$ then $p_{k+1}=c$ and this
long-flip is a single flip.

After the path is fully reversed we must undo the triangle removals,
by reversing each triangle-breaking flip, in reverse order.
That is, if $k$ triangle-breaking flips were performed, then performing
these flips in order will reinstate these triangles:
\[ (p_{k-1},p_k,b,p_{k+1}),\,\, (p_{k-2}, p_{k-1}, b, p_k),\ldots
        (p_1,p_2,b,p_3),\,\,\, (x, p_1,b,p_2).
\]
(When $k=1$, only the last flip is performed.)
We call these \emph{triangle-restoring flips}.
These flips are not blocked
because the only edges whose presence could block them are guaranteed not to
exist by the no-shortcut property of~$p$, and by the fact that these edges
were removed by the triangle-breaking flips.

\subsubsection{Case 2: The detour flip}\label{case2}

In Case~2, we know that there are no paths from $b$ to $c$ in $Z$ which avoid
both $a$ and $d$,  and that there are no paths from $a$ to $d$ in
$Z$ which avoid $b$ and $c$.
It follows that $Z$ must have the structure illustrated in Figure~\ref{fig:Case2structure}.
\begin{figure}[ht!]
\begin{center}
\begin{tikzpicture}
\node [above] at (0.3,3.7) {$Z_{ab}$};
\node [above] at (3.7,3.7) {$Z_{ac}$};
\node [below] at (3.7,0.3) {$Z_{cd}$};
\node [below] at (0.3,0.3) {$Z_{bd}$};
\begin{scope}[shift={(3.05,3.05)}]  
  \begin{scope}[rotate=-45]
    \begin{scope}[shift={(-3,-3)}]
   \draw[gray,dashed,fill=lightgray,fill opacity=0.5] (3,3) ellipse (1.30cm and 0.6cm);
     \end{scope}
   \end{scope}
\end{scope}
\begin{scope}[shift={(3.05,0.95)}]  
  \begin{scope}[rotate=45]
    \begin{scope}[shift={(-3,-1)}]
   \draw[gray,dashed,fill=lightgray,fill opacity=0.5] (3,1) ellipse (1.30cm and 0.6cm);
     \end{scope}
   \end{scope}
\end{scope}
\begin{scope}[shift={(0.95,0.95)}]  
  \begin{scope}[rotate=-45]
    \begin{scope}[shift={(-1,-1)}]
   \draw[gray,dashed,fill=lightgray,fill opacity=0.5] (1,1) ellipse (1.30cm and 0.6cm);
     \end{scope}
   \end{scope}
\end{scope}
\begin{scope}[shift={(0.95,3.05)}]  
  \begin{scope}[rotate=45]
    \begin{scope}[shift={(-1,-3)}]
   \draw[gray,dashed,fill=lightgray,fill opacity=0.5] (1,3) ellipse (1.30cm and 0.6cm);
     \end{scope}
   \end{scope}
\end{scope}
\draw [thick,-] (0.0,2) -- (2.0,4.0);
\draw [thick,-] (2.0,0) -- (4.0,2.0);
\draw [fill] (2.0,0.0) circle (0.13);
\node [below] at (2.0,-0.2) {$d$};
\draw [fill] (0.0,2) circle (0.13);
\node [left] at (-0.2,2) {$b$};
\draw [fill] (2,4) circle (0.13);
\node [above] at (2,4.2) {$a$};
\draw [fill] (4.0,2.0) circle (0.13);
\node [right] at (4.2,2.0) {$c$};
\end{tikzpicture}
\end{center}
\caption{The structure of a graph $Z$ in Case 2.}
\label{fig:Case2structure}
\end{figure}
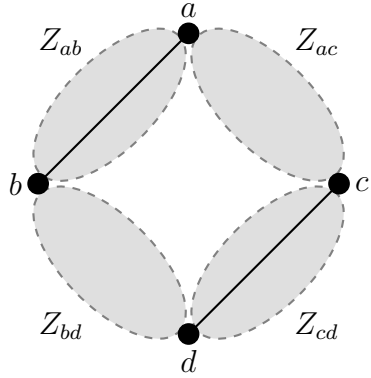
In this figure, each $Z_{ij}$ (shown as a grey ellipse) is an induced subgraph of $Z$,
with $i,j\in Z_{ij}$.
There are no edges from $Z_{ij}\setminus \{ i,j\}$ to $Z_{k\ell}\setminus \{ k,\ell\}$
whenever $\{i,j\}\neq \{k,\ell\}$, by the assumption that $Z$ is in Case 2.  Furthermore,
$Z_{ab}$ and $Z_{cd}$ are both connected, since $Z$ is connected and $ab, cd\in E(Z)$.

At most one of $Z_{ac}$ or $Z_{bd}$ may be empty or
disconnected, since $Z$ is connected.
A deterministic procedure will now be described
for constructing a 3-path $(u,d,w,z)$ in $Z$ which will be used
for the first flip in the simulation path.

There are two subcases to consider for the choice of $w$:
\begin{itemize}
\item
First suppose that, after applying the relabelling $(a\, b)(c\, d)$ if necessary,
$Z_{ac}$ is empty or disconnected.
Since $cd$ is not a bridge in $Z$, by Lemma~\ref{lem:even-d}(ii), there
exists a cycle in $Z_{cd}$ which passes through the edge $cd$. It follows that
$d$ has at least one neighbour in $Z_{cd}\setminus c$ which is connected to $c$ by a
path in $Z_{cd}\setminus cd$. Let $w$ be the least-labelled such neighbour of $d$
in $Z_{cd}\setminus cd$.
\item
Secondly, suppose that both $Z_{ac}$ and $Z_{bd}$ are nonempty and connected.
(So far, we have not performed any relabelling.)
Then at most one of $Z_{ab} \setminus ab$, $Z_{cd}\setminus cd$ may be
disconnected, or else the graph $Z'$ will be disconnected,
a contradiction. Applying the relabelling $(a\, c)(b\, d)$ if necessary,
we can assume that $Z_{cd} \setminus cd$ is connected.
Hence there is a path from $c$ to $d$ in $Z_{cd}\setminus cd$,
so $d$ has a neighbour in $Z_{cd}\setminus c$ which is joined to $c$ by a
path in $Z_{cd}\setminus cd$.  Let $w$ be the least-labelled such neighbour
of $d$ in $Z_{cd}\setminus c$.
\end{itemize}
In both subcases, we have established the existence of a suitable vertex $w$.
Next, observe that since (in both the above subcases)
$Z_{bd}$ is nonempty and connected,  there is a path from
$b$ to $d$
in $Z_{bd}$ of length at least two (as $bd\not\in E(Z)$). This implies
that $d$ has a neighbour in $Z_{bd}$ which is joined to $b$ by a path in
$Z_{bd}\setminus d$.
Let $u$ be the least-labelled such neighbour of $d$ in $Z_{bd}\setminus b$.
Finally, since all neighbours of $w$ belong to $Z_{cd}$ and $u\not\in Z_{cd}$,
it follows that $\{ u,w\}\subseteq N_Z(d)\setminus N_Z(w)$.
Since $Z$ is regular, this implies that
\[ |N_Z(w)\setminus N_Z(d)| = |N_Z(d)\setminus N_Z(w)| \geq 2.\]
Hence there is some vertex other than $d$ in
$N_Z(w)\setminus N_Z(d)$.  Let $z$ be the least-labelled such vertex.
Then $z\in Z_{cd}\setminus \{c,d\}$, since $c\in N_Z(d)$ and every neighbour of
$w$ belongs to $Z_{cd}$.

Now we can define the simulation path $\sigma_{ZZ'}$ from $X_0=Z$ to $X_\kappa = Z'$.
The first step in the simulation path is the flip on the path $(u,d,w,z)$,
giving the graph $X_1$.
We call this the \emph{detour flip}, as it will give us a way to
avoid $d$.  See the first line of Figure~\ref{fig:Case2}.

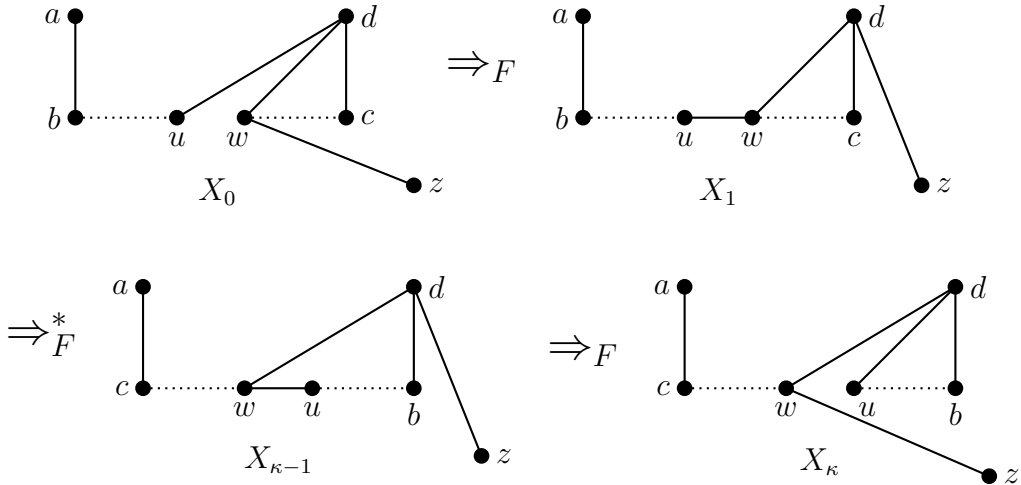
\begin{figure}[ht!]
\begin{center}
\begin{tikzpicture}[scale=0.9]
\draw [fill] (0.5,2.5) circle (0.1);
\node [left] at (0.3,2.5) {$a$};
\draw [fill] (0.5,1) circle (0.1);
\node [left] at (0.3,1) {$b$};
\draw [fill] (4.5,1) circle (0.1);
\node [right] at (4.7,1) {$c$};
\draw [fill] (4.5,2.5) circle (0.1);
\node [right] at (4.7,2.5) {$d$};
\draw [fill] (2.0,1) circle (0.1);
\node [below] at (2.0,0.8) {$u$};
\draw [fill] (3.0,1) circle (0.1);
\node [below] at (2.9,0.8) {$w$};
\draw [fill] (5.5,0) circle (0.1);
\node [right] at (5.7,0) {$z$};
\node [above] at (2.6,-0.3) {$X_0$};
\draw [-] (0.5,1) -- (0.5,2.5);
\draw [-] (4.5,1) -- (4.5,2.5);
\draw [dotted,-] (0.5,1) -- (2.0,1);
\draw [dotted,-] (3.0,1) -- (4.5,1);
\draw [-] (2.0,1) -- (4.5,2.5);
\draw [-] (3.0,1) -- (4.5,2.5);
\draw [-] (3.0,1) -- (5.5,0);
\node at (6.5,1.7) {{\Large $\Rightarrow_F$}};
\draw [fill] (8,2.5) circle (0.1);
\node [left] at (7.8,2.5) {$a$};
\draw [fill] (8,1) circle (0.1);
\node [left] at (7.8,1) {$b$};
\draw [fill] (12.0,1) circle (0.1);
\node [below] at (12.0,0.8) {$c$};
\draw [fill] (12.0,2.5) circle (0.1);
\node [right] at (12.2,2.5) {$d$};
\draw [fill] (9.5,1) circle (0.1);
\node [below] at (9.5,0.8) {$u$};
\draw [fill] (10.5,1) circle (0.1);
\node [below] at (10.5,0.8) {$w$};
\draw [fill] (13.0,0) circle (0.1);
\node [right] at (13.2,0) {$z$};
\node [above] at (10.0,-0.3) {$X_1$};
\draw [-] (8,1) -- (8,2.5);
\draw [-] (12.0,1) -- (12.0,2.5);
\draw [dotted,-] (8,1) -- (9.5,1);
\draw [dotted,-] (10.5,1) -- (12.0,1);
\draw [-] (9.5,1) -- (10.5,1);
\draw [-] (10.5,1) -- (12.0,2.5);
\draw [-] (12.0,2.5) -- (13.0,0);
\node at (0,-2.2) {{\Large $\Rightarrow_F^\ast$}};
\draw [fill] (1.5,-1.5) circle (0.1);
\node [left] at (1.3,-1.5) {$a$};
\draw [fill] (1.5,-3) circle (0.1);
\node [left] at (1.3,-3) {$c$};
\draw [fill] (5.5,-3) circle (0.1);
\node [below] at (5.5,-3.2) {$b$};
\draw [fill] (5.5,-1.5) circle (0.1);
\node [right] at (5.7,-1.5) {$d$};
\draw [fill] (3.0,-3) circle (0.1);
\node [below] at (3.0,-3.2) {$w$};
\draw [fill] (4.0,-3) circle (0.1);
\node [below] at (4.0,-3.2) {$u$};
\draw [fill] (6.5,-4) circle (0.1);
\node [right] at (6.7,-4) {$z$};
\node [above] at (3.5,-4.3) {$X_{\kappa-1}$};
\draw [-] (1.5,-3) -- (1.5,-1.5);
\draw [-] (5.5,-3) -- (5.5,-1.5);
\draw [dotted,-] (1.5,-3) -- (3.0,-3);
\draw [dotted,-] (4.0,-3) -- (5.5,-3);
\draw [-] (3.0,-3) -- (4.0,-3);
\draw [-] (3.0,-3) -- (5.5,-1.5);
\draw [-] (5.5,-1.5) -- (6.5,-4);
\node at (8.0,-2.5) {{\Large $\Rightarrow_F$}};
\draw [fill] (9.5,-1.5) circle (0.1);
\node [left] at (9.3,-1.5) {$a$};
\draw [fill] (9.5,-3) circle (0.1);
\node [left] at (9.3,-3) {$c$};
\draw [fill] (13.5,-3) circle (0.1);
\node [below] at (13.5,-3.2) {$b$};
\draw [fill] (13.5,-1.5) circle (0.1);
\node [right] at (13.7,-1.5) {$d$};
\draw [fill] (11.0,-3) circle (0.1);
\node [below] at (11.0,-3.2) {$w$};
\draw [fill] (12.0,-3) circle (0.1);
\node [below] at (12.2,-3.2) {$u$};
\draw [fill] (14.0,-4.3) circle (0.1);
\node [right] at (14.2,-4.3) {$z$};
\node [above] at (11.5,-4.3) {$X_{\kappa}$};
\draw [-] (9.5,-3) -- (9.5,-1.5);
\draw [-] (13.5,-3) -- (13.5,-1.5);
\draw [dotted,-] (9.5,-3) -- (11.0,-3);
\draw [dotted,-] (12.0,-3) -- (13.5,-3);
\draw [-] (12.0,-3) -- (13.5,-1.5);
\draw [-] (11.0,-3) -- (13.5,-1.5);
\draw [-] (11.0,-3) -- (14.0,-4.3);
\end{tikzpicture}
\end{center}
\caption{Case 2: the simulation path}
\label{fig:Case2}
\end{figure}
After the detour flip, there is a path from
$b$ to $u$ in $X_1\setminus \{ a, d\}$, by choice of $u$,
and there is a path from $c$ to $w$ in $X_1\setminus \{ a,d\}$,
by choice of $w$.  It follows that there exists a $(b,c)$-path in $X_1$
which avoids $\{ a,d\}$: let $p$ be the lexicographically-least shortest such path.
Note that $p$ necessarily contains the edge $uw$.
(See $X_1$ in Figure~\ref{fig:Case2}: the path $p$ is indicated by the dotted lines
together with the edge from $u$ to $w$.)
We proceed as in Case~1, simulating
the switch $\Switch{a}{b}{c}{d}$ from $X_1$
using a long-flip with hub path $p$.  At the end of the long-flip the path
$p$ is reversed and we reach
the graph $X_{\kappa-1}$ shown at the bottom left of Figure~\ref{fig:Case2}.

For the final step of the simulation path, we reverse the detour flip
by performing a flip on the path $(u,w,d,z)$. This produces the
final graph $X_\kappa = Z'$ shown at the bottom right of Figure~\ref{fig:Case2},
completing the description of the simulation path in Case~2.

\subsection{Bounding the congestion}\label{sec:longflip-congestion}

\newcommand*\TR{TR}
\newcommand*\PR[1]{PR#1}
\newcommand*\D{D}
\newcommand*\DI{D$^{-1}$}

The previous section defines a set~$\SF$ of $(\mf,\msc)$-simulation paths.
First, we prove an upper bound on the maximum length $\ell(\SF)$
of a simulation path in $\SF$.

\begin{lemma}
If $r\geq 2$ then $\ell(\SF)\leq \nfrac{1}{2} n^2$.
\label{lem:lF}
\end{lemma}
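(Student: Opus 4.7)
The plan is to count the maximum number of flips in any simulation path $\sigma_{ZZ'}\in\SF$ by tallying the contribution of each phase of the construction from Sections~\ref{case1} and~\ref{case2}. First I would bound the length of the hub path. By definition $p$ is a $(b,c)$-path which avoids $a$ and $d$, so it has at most $n-2$ vertices, and hence length $\nu \leq n-3$. (This bound also applies in Case~2, where the hub path is chosen in $X_1$ but still avoids $a$ and $d$.)

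Next, I would enumerate the flips phase by phase. In Case~1 the simulation path consists of $k$ triangle-breaking flips (with $0\leq k\leq \nu-1$), followed by the reversal of a hub path of length $\nu-k$, which uses exactly $\tfrac{1}{2}(\nu-k)(\nu-k+1)$ flips according to the calculation in Section~\ref{case1}, followed by $k$ triangle-restoring flips. In Case~2 the construction adds precisely two further flips: the initial detour flip and its reverse at the end. The main check is that no other flips are hidden in the construction; in particular that the Case~2 path is exactly (detour flip) $+$ (Case~1 simulation path in $X_1$) $+$ (reverse detour flip).

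The key algebraic estimate is that triangle-breaking does not increase the total count. Expanding gives
\[
\tfrac{1}{2}(\nu-k)(\nu-k+1) + 2k = \tfrac{1}{2}\nu(\nu+1) - k\left(\nu - \tfrac{k}{2} - \tfrac{3}{2}\right),
\]
and for $0\leq k\leq \nu-1$ and $\nu\geq 2$ one has $\nu - k/2 - 3/2 \geq \nu/2 - 1 \geq 0$, while the case $\nu=1$ corresponds to a single flip with no triangle breaking. Hence the Case~1 simulation length is at most $\tfrac{1}{2}\nu(\nu+1)$, and Case~2 adds at most two more. Substituting $\nu\leq n-3$ gives a uniform bound of $\tfrac{1}{2}(n-3)(n-2) + 2$ on $|\sigma_{ZZ'}|$, and the difference $\tfrac{1}{2}n^2 - \tfrac{1}{2}(n-3)(n-2) - 2 = \tfrac{5(n-2)}{2}$ is nonnegative for $n\geq 2$, which is comfortably guaranteed by the standing assumption $n\geq 8$.

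The only step requiring care is the bookkeeping: verifying that the enumeration of phases is exhaustive, and that the reversal of a hub path of length $\nu'$ really uses $\tfrac{1}{2}\nu'(\nu'+1)$ flips (which follows immediately from the run structure described in Section~\ref{case1}, since the $p_i$-run contains $\nu'-i$ flips and summing from $i=0$ to $\nu'-1$ telescopes). The arithmetic bound is then elementary and leaves plenty of slack.
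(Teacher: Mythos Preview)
Your argument is correct, and it takes a slightly different route from the paper's proof. The paper simply bounds the triangle-breaking and triangle-restoring flips separately by $2(2r-1)$, adds the detour flips, and then uses the inequality $\tfrac{1}{2}(n-2)(n-3)+4r\le\tfrac{1}{2}n^2$, which requires $n\ge 2r+1$. You instead exploit the fact that each triangle-breaking flip shortens the effective hub path by one, so the combined cost of $k$ triangle-breaking flips, the reversal of a path of length $\nu-k$, and $k$ triangle-restoring flips is at most $\tfrac12\nu(\nu+1)$; this is a cleaner piece of bookkeeping and yields the bound using only $n\ge 2$, without ever invoking the relationship between $n$ and $r$. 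Two cosmetic remarks: the sum $\sum_{i=0}^{\nu'-1}(\nu'-i)=\tfrac12\nu'(\nu'+1)$ is an arithmetic-series identity rather than a telescoping one, and the paper's text says the runs stop at $p_{\nu-2}$ while the stated total $\tfrac12\nu(\nu+1)$ (which you also use) corresponds to running through $p_{\nu-1}$; your count matches the paper's formula, which is the correct one.
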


\begin{proof}
Since hub paths are simple and do not include $a$ or $d$,
every hub path contains at most $n-2$ vertices.
Therefore, at most
$\frac12(n-2)(n-3)$ path-reversing flips will be required.
Furthermore, we may have up to $2r-1$ triangle-breaking flips,
and up to $2r-1$ triangle-restoring flips. Additionally, if we are
in Case 2 then we also have a detour flip and its inverse.
Therefore
\[
\ell(\SF) \leq \frac{(n-2)(n-3)}{2}+4r,
\]
and this is bounded above by $\nfrac{1}{2} n^2$ whenever
$r\geq 2$, since $n \geq 2r+1$.
\end{proof}

For the remainder of this subsection we work towards an upper bound on $B(\SF)$,
the maximum number of simulation paths containing a given flip.
A given flip~$t=(X,X')$ could perform one of a number of roles in a
simulation path for a connected switch $S = (Z,Z')$:
\begin{itemize}
\setlength\itemsep{-0.5mm}
\item a triangle-breaking flip;
\item a triangle-restoring flip;
\item a flip on the path reversal (that is, part of the long-flip);
\item in Case 2, the detour flip~$(u,w,d,z)$;
\item in Case 2, the inverse detour flip $(u,d,w,z)$.
\end{itemize}

We bound the number of simulation paths that could use~$t$ in each of these
roles separately, in Lemmas~\ref{lem:detour}--\ref{lem:pathreversal} below.
Summing over these bounds over all roles yields an upper bound for~$B(\SF)$.
In our exposition, we attempt to uniquely determine a given
simulation path $\flippath_{Z Z'}$ which contains $t$, by ``guessing''
from a number of alternatives.
Since the number of possible alternatives is precisely the number of
simulation paths $\flippath_{ZZ'}$ that use~$t$, this approach gives
the desired bound.

The simulation path is determined once we
have identified the switch $\Switch{a}{b}{c}{d}$ and either $Z$ or $Z'$
(since each of these can be obtained from the other using the switch or its
inverse).  Observe that in order to uniquely identify the switch
it is not enough to just identify the two edges $e_1$, $e_2$ which are switched
out (or to just identify the two edges that are switched in).
This follows since there are up to two possible switches on these
edges: if $e_1 = ab$ and $e_2=cd$ then we could switch in either the edges
$ac$, $bd$ or the edges $ad$, $bc$. We pay a factor of 2 to distinguish between these two
possibilities,
and will describe this as \emph{orienting} one of the switch edges.

Finally, suppose that the current flip $t = (X,X')$ switches out the edges $\alpha\beta$,
$\gamma\delta$ and replaces them by $\alpha\gamma$, $\beta\delta$.
If both $\alpha\delta$ and $\beta\gamma$ are present in $X$ (and hence,
both present in $X'$) then either one of these could have been the middle edge
of the 3-path chosen for the flip. In this case, if we wish to identify
$\beta$, say, then there are 4 possibilities. However, if we know a priori
that the edge $\alpha\delta$ is not present in $X$, say, then we can guess
$\beta$ from two possibilities.

\begin{lemma}
If $t=(X,X')$ is a detour flip then we can guess the connected switch $S=(Z,Z')$
from at most $8r^2 n$ possibilities.  The same bound holds
if $t$ is an inverse detour flip.
\label{lem:detour}
\end{lemma}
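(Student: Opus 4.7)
My plan is to show that knowing the detour flip $t=(X,X')$ leaves at most $8r^2 n$ possibilities for the switch $S=(Z,Z')$. Since $t$ is the first flip of the simulation path starting from $Z$, we have $Z=X$, so $Z$ is already pinned down by $t$, and I only need to count how many switches $\Switch{a}{b}{c}{d}$ in $Z$ could give rise to $t$ as their detour flip.

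From $t$ one can read off the four vertices of the underlying 3-path and its hub edge (the unique edge common to $X$ and $X'$ among the edges touched by the flip). Since the detour flip acts on $(u,d,w,z)$ as in Section~\ref{case2}, the sets $\{d,w\}$ (internal vertices of the 3-path) and $\{u,z\}$ (external vertices) are pinned down, but not their individual labels. I first guess which internal vertex is $d$, costing a factor of $2$; once $d$ is fixed, $w$ is the other internal vertex, $u$ is the external vertex adjacent to $d$ in $X$, and $z$ is the external vertex adjacent to $w$ in $X$. Next, since $cd\in E(Z)$, the vertex $c$ is one of the $2r$ neighbors of $d$ in $Z$. Finally, the ordered pair $(a,b)$ (ordered because $\Switch{a}{b}{c}{d}$ assigns distinct roles to $a$ and $b$ via the replacement edges $ac, bd$) is guessed by picking $a$ as any vertex of $Z$ (at most $n$ choices) and then $b$ as a neighbor of $a$ in $Z$ (at most $2r$ choices), giving $2rn$ possibilities. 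Multiplying yields at most $2\cdot 2r\cdot 2rn = 8r^2 n$ candidate switches. There is no hidden overcounting: among the four symmetries of $\Switch{a}{b}{c}{d}$ listed in~(\ref{perms}), only the identity fixes $d$, so once $d$ is committed each labeled tuple $(a,b,c,d)$ above corresponds to a distinct switch.

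For the inverse detour flip, $t$ is the last flip of the simulation path, so $X'=Z'$ is determined, and the underlying 3-path $(u,w,d,z)$ is read off similarly. Two choices for $d$ (among the 2 internal vertices), $2r$ choices for $b$ (a neighbor of $d$ in $Z'$, since $bd\in E(Z')$), and $2rn$ choices for the ordered pair $(a,c)$ (picking $a$ as any vertex of $Z'$ and $c$ as a neighbor of $a$ in $Z'$, since $ac\in E(Z')$) again give the bound $8r^2 n$. The main step to verify is simply that the constraints used at each stage ($cd\in E(Z)$, $ab\in E(Z)$ for the detour case; $bd\in E(Z')$, $ac\in E(Z')$ for the inverse case) are immediate consequences of the definition of $\Switch{a}{b}{c}{d}$, so no further structural argument about Case~2 is needed.
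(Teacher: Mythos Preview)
Your approach matches the paper's almost exactly: pin down $Z=X$ (respectively $Z'=X'$), then guess $d$ among the inner vertices of the flip's 3-path, guess $c$ (respectively $b$) as a neighbour of $d$, and finally guess and orient the remaining switch edge. However, there is a small but genuine gap in your justification of the factor~$2$ for~$d$.

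You assert that the hub edge of the flip is ``the unique edge common to $X$ and $X'$ among the edges touched by the flip,'' and hence that the unordered pair $\{d,w\}$ of inner vertices is pinned down. But a flip transition $(X,X')$ only determines the four vertices and the two deleted/added edge pairs; among the two remaining potential edges on these four vertices (here $dw$ and $uz$), \emph{both} could in principle be present in $X$, in which case there would be two candidate hub edges and four candidates for~$d$, not two. The paper closes this gap by observing that $uz\notin E(X)$: by the Case~2 block decomposition, $u\in Z_{bd}\setminus\{d\}$ and $z\in Z_{cd}\setminus\{c,d\}$, and there are no edges between these sets. Thus the hub edge is forced to be~$dw$, giving only two choices for~$d$. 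Your closing remark that ``no further structural argument about Case~2 is needed'' is therefore not quite right: this one structural fact about Case~2 is precisely what makes the factor~$2$ (rather than~$4$) valid. The analogous observation is needed for the inverse detour flip as well; it holds because $u,z\notin\{a,b,c,d\}$, so $uz$ is unaffected by both the switch and the inverse detour flip and remains a non-edge in $X'=Z'$ and in $X$. With this one-line addition your argument is complete and coincides with the paper's.
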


\begin{proof}
First suppose that $t$ is a detour flip.
This is the first flip in the simulation path, so $Z=X$.
\begin{itemize}
\setlength\itemsep{-0.5mm}
\item The flip is on the 3-path $(u,w,d,z)$, and $uz\not\in E(X)$ by construction,
so we can guess $d$ from 2 possibilities.
\item Next, we can guess $c\in N_X(d)$ from at most $2r$ possibilities.
\item
Finally, guess the edge $ab\in E(X)$ and orient it, from $2rn$ possibilities.
\end{itemize}
From the switch edges and $Z$ we can obtain $Z'$.
Hence we can guess the switch $S$ from at most $8r^2 n$ possibilities overall
when $t$ is a detour flip.
The same argument holds when $t$ is an inverse detour flip, replacing $X$ and $Z$ by $X'$ and $Z'$ and exchanging the roles of $b$ and $c$.
\end{proof}

\begin{lemma}
Suppose that $t=(X,X')$ is a triangle-breaking flip.  Then we can guess
the connected switch $S=(Z,Z')$ from at most $4(2r)^8 n$ possibilities.
The same bound holds if $t$ is a triangle-restoring flip.
\label{lem:trianglebreaking}
\end{lemma}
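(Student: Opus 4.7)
The plan is to bound the number of switches $S = (Z, Z')$ whose deterministic simulation contains $t$ as a triangle-breaking flip, by guessing $S$ piece by piece from $t = (X, X')$. First, orient $t$ as a $3$-walk $(\alpha, \beta, \gamma, \delta)$: at most $4$ orderings are consistent with the edges removed ($\alpha\beta, \gamma\delta$) and added ($\alpha\gamma, \beta\delta$) by $t$. The form of a triangle-breaking flip forces the roles $\beta = b$, $\gamma = p_k$, $\delta = p_{k+1}$, and $\alpha \in \{x, p_{k-1}\}$, where $b = p_0, p_1, \dots, p_{\nu-1}, c = p_\nu$ is the hub path of the underlying switch.

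Next, identify the remaining switch vertices. Because the edges $ab$ and $cd$ are untouched by all triangle-breaking flips (and by the detour flip in Case~2), we have $a \in N_X(b)$, contributing at most $2r$; then guess $d \in [n]$ (factor $n$), so that $c \in N_X(d)$ (factor $2r$). This determines the switch $\Switch{a}{b}{c}{d}$. Finally, recover $Z$ from $X$ by undoing the $k-1$ preceding triangle-breaking flips, which requires knowing $x, p_1, \dots, p_{k-2}$. Since $k$ can be as large as $2r - 1$, a naive enumeration of these vertices from $N_X(a)$ would yield $(2r)^{\Omega(r)}$ possibilities; the plan is instead to use the \emph{recognisability property} of the hub path, which lets us recover long subpaths as lexicographically-least shortest paths in one step, rather than by guessing each intermediate vertex independently. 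Together with a few auxiliary guesses (including any Case~2 detour-flip information and the index $k$ itself), this step should contribute a factor of $(2r)^6$. Multiplying the factors yields the claimed bound $4 \cdot 2r \cdot n \cdot 2r \cdot (2r)^6 = 4(2r)^8 n$.

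The analogous bound for triangle-restoring flips follows by the symmetric argument on the reverse simulation: one guesses $Z'$ from $X'$, with $X'$ playing the role that $X$ played above, and the same factor-by-factor count goes through. The main technical obstacle in both cases is the reconstruction of the starting (respectively ending) graph from the flip and the switch: the triangle-breaking phase can be as long as $2r - 2$ flips, so to avoid an exponential-in-$r$ blow-up one must exploit the recognisability property to identify hub-path prefixes as lex-least shortest paths with only constantly many independent vertex guesses. This is the heart of the argument.
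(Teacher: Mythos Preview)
Your proposal is correct and matches the paper's approach: identify $b$ among the four flip vertices, guess the remaining switch data (the paper guesses $a\in N_X(b)$ and then an oriented edge $cd$, just as you do), and reconstruct the hub-path prefix via the recognisability property from its endpoints rather than vertex by vertex. The paper's breakdown of your $(2r)^6$ reconstruction factor is $(2r)^2$ for $(x,p_1)$ as a $2$-path from $a$, a factor $2r$ for $p_k\in N_X(b)$, and $(2r)^3$ for the possible Case~2 detour flip; your suggestion to guess the index $k$ separately is unnecessary, since once the endpoints $p_1$ and $p_k$ are fixed the subpath (and hence $k$) is determined by recognisability.
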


\begin{proof}
Suppose that $t$ is a triangle-breaking flip.
\begin{itemize}
\setlength\itemsep{-0.5mm}
\item One of the vertices involved in the flip $t$ is $b$,
so we can guess it from $4$ alternatives.
(If $t$ is not the first triangle-breaking flip then
by the no-shortcuts property, we can guess $b$ from 2 alternatives.
But in the first triangle-breaking flip $xp_2$ may be an edge of $Z$,
in which case there are 4 possibilities for $b$.)
\item We can guess~$a$ by adjacency to $b$, from $2r$ possibilities.
\item We can guess the edge $cd\in E(X)$ and orient it, from $2rn$ possibilities.
\end{itemize}
We now know the switch vertices, so
recovering~$Z$ will give us the rest of the path.  There may have been
some other triangle-breaking flips which took place before $t$, and if we
are in Case 2 then there was also a detour flip at the start of the simulation
path.  We continue our accounting below.
\begin{itemize}
\setlength\itemsep{-0.5mm}
\item
First suppose that we are in Case 1, and that $t$ is the $k$th triangle-breaking
flip which has been performed (so far).  Then $k\geq 1$. Let
$p'=(x,p_1,\ldots,p_{k},b)$ denote the path containing vertices which were
removed from the hub path by the first $k$ triangle-breaking flips.
Now $(a,p_1,x)$ is a 2-path in $X$, so we can guess $x$ and $p_1$ from
$2r(2r-1)\leq (2r)^2$ alternatives.
\item Next, $p_k$ belongs to $N_X(b)\setminus a$, so
we can guess $p_k$ from at most $2r$ alternatives.  (Here we allow the possibility that $p_1=p_k$, which covers the case that $t$
is the first triangle-breaking flip.)
Then we can reconstruct $p'$ using the
fact that $(p_1,\ldots, p_k)$ is the lexicographically-least
shortest $(p_1,p_k)$-path in $X\setminus \{a,d\}$,
by the recognisability property.
Using $p'$, we can reverse these $k$ triangle-breaking flips to obtain $Z$
from $X$.
\item
Finally, we must guess whether we are in Case 1 (1 possibility) or Case 2.
If we are in Case 2 then we must also reverse the detour flip before
we obtain $Z$.  Since $u$ and $w$ are neighbours of $d$ and $z$ is a neighbour of
$w$,
in this step we guess from at most $1 + (2r)^2\, (2r-1) < (2r)^3$ possibilities.
\end{itemize}
Overall, we can guess the switch $S$ from a total of at most
$  4 (2r)^8 n$
possibilities, when $t$ is a triangle-breaking flip.

The argument is similar when $t$ is a triangle-restoring flip:
\begin{itemize}
\setlength\itemsep{-0.5mm}
\item First, suppose that we are in Case 1.
Guess $b$, then guess $d$ from the neighbourhood of $b$,
then guess and orient the edge $ac\in E(X)$. This gives the
switch edges, guessed from at most $4\,(2r)^3$ alternatives.
\item Recover $p'=(x,p_1,\ldots, p_k,b)$ as in the triangle-breaking
case, from at most $(2r)^3$ alternatives.
Now we can $Z'$ uniquely by performing the remaining triangle-restoring
flips to move~$b$ past~$p_1$.
\item Again, there at most $(2r)^3$ possibilities for deciding
whether or not we are in Case 2 and if so, guessing the inverse
detour flip and applying it to produce $Z'$.
\end{itemize}
Hence there are at most $4(2r)^8 n$ possibilities for the connected switch
$S$ when $t$ is a triangle-restoring flip.
\end{proof}

\begin{lemma}
Suppose that $t=(X,X')$ is a flip on the path-reversal.
We can guess the connected switch $S=(Z,Z')$
from at most $4(2r)^8\, \left((2r)^2 - 1\right) n^3$ possibilities.
\label{lem:pathreversal}
\end{lemma}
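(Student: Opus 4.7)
The plan is to bound $|\Sigma_{\mathrm{F}}(t)|$ by counting the number of connected switches $S=(Z,Z')$ that generate a simulation path $\sigma_{ZZ'}$ using $t$ as a path-reversal flip.  Since each simulation path is deterministically constructed from $S$, this is equivalent to bounding the number of ways to reconstruct $S$ from the flip $t=(X,X')$ via a series of guesses, following the same ``guess-and-multiply'' template used in Lemmas~\ref{lem:detour} and~\ref{lem:trianglebreaking}.

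I begin by identifying the internal structure of the 3-path of $t$.  The middle edge of the 3-path (the ``hub edge'' of the flip) is preserved by $t$, so it can be guessed from at most 2 alternatives: if both diagonals of the 4-vertex set happen to be edges of $X$, either could serve as hub.  I then orient the hub edge to distinguish the ``bubble'' $p_i$ from its current partner $p_{i+j}$, contributing another factor 2.  The two remaining vertices of $t$ are then uniquely identifiable as $p_{i+j-1}$ and $p_{i+j+1}$ via their adjacencies to $p_i$ and $p_{i+j}$ respectively.  This step contributes a total factor of $4$.

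Next I identify the switch vertices and reconstruct $Z$.  Unlike in Lemma~\ref{lem:trianglebreaking}, where $b$ was one of the four vertices of $t$, a middle path-reversal flip involves only hub-path vertices, so $a,b,c,d$ need not appear in $t$ at all; this is the source of the additional factors over Lemma~\ref{lem:trianglebreaking}.  I proceed in analogy to that proof: guess and orient an edge of $X$ playing the role of $cd$ (factor $2rn$); guess $a$ from a local neighbourhood of a hub-path vertex close to $p_i$ (factor $2r$); guess the triangle-breaking data $(x,p_1)$ and $p_k$ (factor $(2r)^2\cdot 2r=(2r)^3$); and guess the Case~1/Case~2 split together with the detour-flip configuration (factor $(2r)^3$).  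To pin down the endpoints $b,c$ of the hub path and the run index $i$, I additionally guess two ``global'' vertices (factor $n^2$) together with one local pair-choice whose raw count is $(2r)^2$ minus one forbidden configuration, contributing $(2r)^2-1$.  With all this data in hand, the recognisability property of hub paths (the lex-least shortest $(b,c)$-path avoiding $\{a,d\}$) and the determinism of the triangle-breaking/detour-flip procedures allow us to undo every flip preceding $t$ in $\sigma_{ZZ'}$ and recover $Z$ uniquely.  Multiplying: $4\cdot 2rn\cdot 2r\cdot (2r)^3\cdot (2r)^3\cdot n^2\cdot((2r)^2-1)=4(2r)^8((2r)^2-1)n^3$.

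The main obstacle is the case analysis.  A path-reversal flip may be the first flip of a run (so $a$ lies in $t$), a middle flip (so none of $a,b,c,d$ lies in $t$), the last flip of a $p_i$-run for $i\ge 1$ (so $c$ lies in $t$), or the last flip of the $b$-run (so $b,c,d$ all lie in $t$).  In each subcase the role played by the guessed data changes, and the challenge is to present a uniform scheme whose worst-case count is covered by the single bound $4(2r)^8((2r)^2-1)n^3$.  In particular, the $(2r)^2-1$ factor must be justified by exhibiting, in every subcase, one specific local configuration that is structurally forbidden (rather than just the naive $(2r)^2$ possibilities); this is where the proof requires the most care.
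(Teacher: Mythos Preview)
Your proposal has a genuine gap: the guessing scheme you describe does not correspond to a valid reconstruction procedure, and the factors appear to be reverse-engineered to hit the target bound rather than derived from the structure.

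Two concrete failures. First, you claim to guess $a$ ``from a local neighbourhood of a hub-path vertex close to $p_i$'' for a factor of $2r$. But in a generic mid-run flip $(p_j,p_i,p_{j+1},p_{j+2})$, the only vertex adjacent to $a$ in $X$ is $a'=p_{i+1}$, the current leftmost vertex of the (partially reversed) path, and $a'$ is \emph{not} one of the four vertices of $t$ unless $t$ is the first flip of a run. So you cannot reach $a$ by a single neighbourhood step from $t$. Second, you propose to guess $cd$ as an oriented edge of $X$; but once the $b$-run has completed, the edge $cd$ has been removed (it is switched out by the final flip of the $b$-run), so $cd\notin E(X)$ for most path-reversal flips. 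Your $2rn$ factor for $cd$ is therefore invalid. Finally, your $(2r)^2-1$ factor is asserted without identifying the pair of local choices or the forbidden configuration; in the paper this number is not a ``pair minus one'' at all, but the product $(2r+1)(2r-1)$ of two separate guesses.

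The paper's approach is structurally different and avoids these traps. It guesses the bubble $p_i$ (factor $4$), then guesses $a,c,d$ \emph{globally} as three arbitrary vertices (factor $n^3$), and only then uses neighbourhoods of $a$, $c$, and $d$ to recover the local data: $b$ from $N(d)$ or as the bubble itself ($2r{+}1$ options), $a'$ and $c'$ from $N(a)$ and $N(c)$ ($(2r)^2$), and $p_{k+1}$ from $N(b)\setminus d$ ($2r{-}1$). The recognisability property then pins down each segment of the hub path as a lex-least shortest path in $X'\setminus\{a,d\}$, after which the triangle-breaking and detour data contribute the two $(2r)^3$ factors you already have. The product is $4\,(2r{+}1)(2r{-}1)(2r)^8 n^3$, which is exactly the stated bound, with the $(2r)^2-1$ arising transparently as $(2r+1)(2r-1)$.
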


\begin{proof}
Suppose that $t$ is a flip on the path-reversal with respect to the hub path
$p = (b,p_1,\ldots, p_{\nu-1},c)$, where $b=p_0$ and $c=p_{\nu}$.
Then $t$ is part of a $p_i$ run, for
some $i$. The path-reversal may have been preceded by some triangle-breaking
flips, and (if we are in Case 2) a detour flip.  We need to guess the
switch edges and reconstruct $Z$.

First suppose that we
are in Case 1. The typical situation is shown in Figure~\ref{fig:p-prime},
which shows that graph $X'$ produced by the flip $t$. The current bubble $p_i$
is shown as a diamond, and the flip $t$ has just been performed on the 3-path
$(p_j,p_i,p_{j+1},p_{j+2})$, moving $p_i$ past $p_{j+1}$.

\begin{figure}[ht!]
\begin{center}
\begin{tikzpicture}
\draw [fill] (1.0,0) circle (0.1);
\node [below] at (1.0,-0.2) {$x$};
\draw [fill] (2.0,0) circle (0.1);
\node [below] at (2.0,-0.2) {$p_1$};
\draw [fill] (3.0,0) circle (0.1);
\node [below] at (3.0,-0.2) {$p_2$};
\draw [dotted,-] (3.0,0) -- (4.5,0);
\draw [fill] (4.5,0) circle (0.1);
\node [left] at (4.4,-0.3) {$p_{k}$};
\draw [fill] (5.5,1.5) circle (0.1);
\node [right] at (5.7,1.5) {$a$};
\draw [fill] (5.5,0) circle (0.1);
\node [left] at (5.3,0.0) {$a'$};
\draw [dotted,-] (5.5,0) -- (7,0);
\draw [fill] (7,0) circle (0.1);
\node [above] at (7,0.2) {$p_j$};
\draw [fill] (8,0) circle (0.1);
\node [above] at (8,0.2) {$p_{j+1}$};
\draw [-] (9,0.1) -- (9.1,0) -- (9,-0.1) -- (8.9,0) -- (9,0.1); 
\node [above] at (9,0.2) {$p_{i}$};
\draw [fill] (10,0) circle (0.1);
\node [above] at (10,0.2) {$p_{j+2}$};
\draw [dotted,-] (10,0) -- (12,0);
\draw [fill] (12,0) circle (0.1);
\node [above] at (12,0.2) {$c$};
\draw [fill] (13,0) circle (0.1);
\node [above] at (13,0.2) {$c'$};
\draw [fill] (14.5,0) circle (0.1);
\node [above] at (14.5,0.2) {$p_{k+1}$};
\draw [fill] (15.5,0) circle (0.1);
\draw [dotted,-] (13,0) -- (14.5,0);
\draw [-] (14.5,0) -- (15.5,0);
\node [right] at (15.7,0) {$b$};
\draw [fill] (15.5,1.5) circle (0.1);
\node [right] at (15.7,1.5) {$d$};
\draw [-] (1.0,0) -- (3.0,0) -- (5.5,1.5) -- (3.0,0);
\draw [-] (2.0,0) -- (5.5,1.5) -- (5.5,0);
\draw [-] (4.5,0) -- (5.5,1.5);
\draw [-] (7,0) -- (8.9,0);
\draw [-] (9.1,0) -- (10,0);
\draw [-] (12,0) -- (13,0);
\draw [-] (15.5,0) -- (15.5,1.5);
\draw [rounded corners,-] (4.5,0) -- (4.5,-1.8) -- (15.5,-1.8) -- (15.5,0);
\draw [decoration={brace,mirror,amplitude=7}, decorate] (5.6,-0.3) -- (7.9,-0.3);
\node [below] at (6.75,-0.55) {$p[a' : p_{j+1}]$};
\draw [decoration={brace,mirror,amplitude=7}, decorate] (10.1,-0.3) -- (11.9,-0.3);
\node [below] at (11,-0.55) {$p[p_{j+2} : c]$};
\draw [decoration={brace,mirror,amplitude=7}, decorate] (13.1,-0.3) -- (14.4,-0.3);
\node [below] at (13.8,-0.55) {$\operatorname{rev} p[p_{k+1} : c']$};
\end{tikzpicture}
\end{center}
\caption{Reconstruction of $Z$ from $X'$, the flip edges and some guessed information.}
\label{fig:p-prime}
\end{figure}
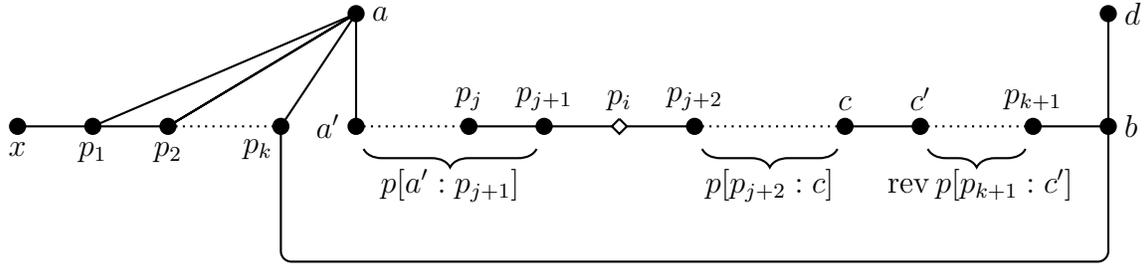

\begin{itemize}
\setlength\itemsep{-0.5mm}
\item The current bubble $p_i$ is one of the four vertices involved in the flip,
so it can be guessed from four possibilities.
\item Next we guess
$a$, $c$ and $d$ from~$n^3$ alternatives.
\item We recover~$b$ as
follows: if the current run is the~$b$-run then $b$ is the bubble;
otherwise, $b$ is already adjacent to $d$, and we can guess it from the
$2r$~neighbours of~$d$. This gives $2r+1$ possibilities for $b$.
\item
Vertices $a'$ and $c'$ can be guessed from at most
$2r$~alternatives each, by their adjacency to $a$ and $c$.
\end{itemize}
Now we know the switch $\Switch{a}{b}{c}{d}$,
and it remains to recover $Z$.  In order to do that, we must recover the
path $p$, reverse the path-reversal flips and then reverse the triangle-breaking
flips, if any.

Observe that having identified $p_i$, the identities of $p_{j+1}$ and
$p_{j+2}$ can be inferred. To see this, note that 
$p_ip_{j+2}$ is one of the two edges switched in by the flip $t$, 
which identifies $p_{j+2}$, while $p_{j+1}p_{j+2}$ is one of the two
edges switched out by the flip $t$, which identifies $p_{j+1}$.

Next, observe that $a'=p_{i+1}$, and so the subpath $p_{i+1}\cdots p_{j+1}$
is the lexicographically-least shortest $(p_{i+1},p_{j+1})$-path in
$X' \setminus \{ a,d\}$, by the recognisability property.
This holds since the flips made on the path-reversal so far, between
$Z$ and $X'$, have not altered this section of the original
path. 
Similarly, the subpath $p_{j+2}\cdots c$ can be recovered as
the lexicographically-least shortest $(p_{j+2},c)$-path in
$X'\setminus \{a,d\}$.

If the current bubble is $b$ then $c'=c$.
Otherwise, observe that $c'=p_{i-1}$, since $c'$ is the most recent
bubble that successfully completed its run.
We continue our accounting below.
\begin{itemize}
\setlength\itemsep{-0.5mm}
\item
Guess $p_{k+1}$ from $N_{X'}(b)\setminus d$, from at most $2r-1$ possibilities.
Reconstruct the subpath $p_{k+1}\cdots p_{i-1}$  as the lexicographically-least
shortest $(p_{k+1},p_{i-1})$-path in $X'\setminus \{a,d\}$.
\item
By now we have recovered all sections of the path between $a'$ and $b$ as shown
in Figure~\ref{fig:p-prime}, and we can reverse all path-reversing flips.
Now there may have been no triangle-restoring flips, in which case we have
constructed $Z$, or else we may need to
reverse the triangle-breaking flips as explained earlier,
after guessing the identities of $x$, $p_1$ and $p_{k-1}$ from at most
$(2r)^2(2r-1)$ possibilities.  This gives a factor of at most
$1 + (2r)^2(2r-1) \leq (2r)^3$.
\item
Finally, multiply by $1 + (2r)^2(2r-1)\leq (2r)^3$ to account for the fact that we
may be in Case 1 or Case 2, as explained in Lemma~\ref{lem:trianglebreaking}.
\end{itemize}
Overall, we have guessed the switch $S$ from at most
\[  4(2r+1)(2r-1)(2r)^8\, n^3 = 4 (2r)^{8}\left((2r)^2-1\right)  n^3
\]
possibilities, when $t$ is a flip on a path-reversal. 
\end{proof}

Now we apply the two-stage direct canonical path theorem to
our $(\mf,\msc)$-simulation paths.
Let $\bar{\rho}(\GSC)$ denote the congestion of some set of
canonical paths for the connected switch $\msc$.  We will define
such a set of canonical paths in Section~\ref{sec:disconnected} below.

\begin{lemma}
\label{lem:BF}
The $(\mf,\msc)$-simulation paths defined in Section~\ref{sec:long-flip}
define a set of canonical paths for $\mf$ with congestion
$\bar{\rho}(\GF)$ which satisfies
\[
\bar{\rho}(\GF) \leq  8\,(2r)^{11} n^4\, \bar{\rho}(\GSC).
 \]
\end{lemma}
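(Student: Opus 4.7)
The plan is to apply Theorem~\ref{thm:twostagedirect} directly with $\M = \mf$ and $\M' = \msc$. Since both chains share the state space $\Omega_F$, the surjection $h$ is the identity map, and the set $\SF$ of $(\mf,\msc)$-simulation paths built in Section~\ref{sec:long-flip} is exactly the input $\Sigma$. I must bound the three factors $\ell(\SF)$, $B(\SF)$ and the simulation gap $D(\mf,\msc)$, and substitute into the theorem.

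The bound $\ell(\SF) \le n^2/2$ is immediate from Lemma~\ref{lem:lF}. For $B(\SF)$, I would observe that every flip $t=(X,X')$ that appears on any simulation path plays exactly one of five possible roles (detour, inverse detour, triangle-breaking, triangle-restoring, or part of the path reversal), and sum the per-role bounds from Lemmas~\ref{lem:detour}, \ref{lem:trianglebreaking}, and \ref{lem:pathreversal}. The path-reversal term $4(2r)^{8}((2r)^2-1)n^3$ dominates, so the sum is at most $4(2r)^{10}n^3$.

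For the simulation gap, since $\piF$ and $\piSC$ are both the uniform distribution on $\Omega_F$ their ratio is $1$, so only transition probabilities need to be compared. The smallest nonzero entry of $P_F$ is $2/((2r)^3 n)$, achieved when only one of the two possible hub edges is present, while by (\ref{conn-switch}) every connected-switch transition has probability $1/(3a_{n,2r})$. This gives $D(\mf,\msc) \le (2r)^3 n/(6 a_{n,2r})$. Expanding (\ref{an2r}) as $a_{n,2r} = rn(rn-4r+1)/2$ and combining with $n \ge \max(8,2r+1)$ yields $rn-4r+1 \ge rn/2$ and hence $a_{n,2r} \ge r^2 n^2/4$.

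Substituting these three bounds into Theorem~\ref{thm:twostagedirect},
\[
\bar{\rho}(\GF) \le \frac{(2r)^3 n}{6 a_{n,2r}} \cdot \frac{n^2}{2} \cdot 4(2r)^{10} n^3 \cdot \bar{\rho}(\GSC) \le \tfrac{16}{3}(2r)^{11} n^4\, \bar{\rho}(\GSC) \le 8(2r)^{11} n^4\, \bar{\rho}(\GSC),
\]
as required. The only real obstacle is careful tracking of constants; the role decomposition used to bound $B(\SF)$ and the lower bound $a_{n,2r} \ge r^2 n^2/4$ are both routine given the machinery already assembled.
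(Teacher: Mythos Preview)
Your proposal is correct and follows essentially the same approach as the paper's proof: both apply Theorem~\ref{thm:twostagedirect} with the identity surjection, take $\ell(\SF)\le n^2/2$ from Lemma~\ref{lem:lF}, sum the five role bounds from Lemmas~\ref{lem:detour}--\ref{lem:pathreversal} to get $B(\SF)\le 4(2r)^{10}n^3$, and bound the simulation gap via $P_F(u,v)\ge 2/((2r)^3 n)$ and $P_{SC}(z,w)=1/(3a_{n,2r})$. Your lower bound $a_{n,2r}\ge r^2n^2/4$ is in fact slightly sharper than the paper's implicit $a_{n,2r}\ge r^2n^2/6$, yielding the constant $16/3$ rather than $8$ before the final rounding, but this is a cosmetic difference only.
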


\begin{proof}
Firstly, we bound the maximum number $B(\SF)$ of simulation
paths which contain a given flip~$t$, by adding together
the number of possibilities from
each of the five roles that $t$ may play.
Using Lemmas~\ref{lem:detour}--~\ref{lem:pathreversal},
we obtain
\begin{align*}
B(\SF) &\leq  4(2r)^{8} \left((2r)^2-1\right)  n^3 
 {} + 8 (2r)^8  n 
    {}           + 16 r^2  n   
         \\
	&\leq 4\left( (2r)^{10} - (2r)^8   + 2 (2r)^6 + 1\right) n^3\\
  &\leq 4(2r)^{10}\, n^3,
\end{align*}
since $n\geq 2r\geq 4$.
Next, we calculate the simulation gap $\DF = D(\mf,\msc)$.
Since $\mf$ and $\msc$ share the same
state space $\Omega_F$, and both have uniform stationary distribution,
\[\DF = \max_{\substack{uv\in E(\mf)\\zw\in E(\msc)}}\, 
          \frac{|\OF|\, \PSC(z,w)}{|\OF|\, \PF(u,v)} = 
          \frac{\PSC(z,w)}{\PF(u,v)}.
\]
Let $uv\in E(\mf)$ and $zw\in E(\msc)$. Then $u\neq v$, as self-loop transitions are not
included in $E(\mf)$ or $E(\msc)$. Hence
\[ \frac{1}{\PF(u,v)}\leq \frac{(2r)^3 n}{2} \quad \text{ and } \quad
 \PSC(z,w) = \frac{1}{3 a_{n,2r}} \leq \frac{2}{r^2n^2}\]
using (\ref{an2r}), since $n\geq 8$.
Therefore $\DF\leq 8r/n$.

Substituting these quantities into
Theorem~\ref{thm:twostagedirect}, using Lemma~\ref{lem:lF},
we obtain
\begin{align*}
\bar{\rho}(\GF)&\leq 
   D_{F}\, \ell(\Sigma_{F})\, B(\Sigma_{F})\, \bar{\rho}(\GSC)  \\
	&\leq  8\,(2r)^{11} n^4\, \bar{\rho}(\GSC),
\end{align*}
as claimed.
\end{proof}

\section{Disconnected graphs}
\label{sec:disconnected}

Next, we must define a set of $(\msc,\ms)$-simulation paths
and apply the two-stage direct canonical path method.
The process is similar to the approach introduced by Feder et
al.~\cite{saberi06switchflip}: the
chief difference is in the analysis.

We begin by defining a surjection $\h\colon \OS\to\OF$ such that
$\max_{G\in\OF}|\h^{-1}(G)|$ is polynomially bounded. Using this we construct
short paths in $\msc$ to correspond to single transitions in $\ms$.
In~\cite{saberi06switchflip} this step is responsible for a factor of
$\bigO{r^{34} n^{36}}$ in the overall mixing time bound. In contrast, by
using the
two-stage direct method we construct canonical paths with a relaxation of
only a factor of $120r n^3$.

First we must define the surjection $\h$. Let $G_S\in\OS$ and
let $\{ H_1,\ldots, H_k\}$ be the set of components in~$G_S$.
Let $v_i$ the vertex in~$H_i$ of highest label, for $i=1,\ldots, k$.
By relabelling the components if necessary, we assume that
\[ v_1 < v_2 < \cdots < v_k.\]
We call each $v_i$ an \emph{entry vertex}.
By construction, $v_k$ is the highest-labelled vertex in $G_S$.

In each component we will identify a \emph{bridge edge} $e_i=v_iv'_i$, where $v'_i$
is the neighbour of~$v_i$ with the highest label.
We call edge $v_i'$ an \emph{exit vertex}.
The graph~$G=h(G_S)$ is formed from $G_S$ by removing the bridge
edges~$e_1,\ldots, e_k$ and replacing them with \emph{chain edges}~$e'_1,\ldots, e_k'$,
where $e_i = v'_i v_{i+1}$ for $1\leq i\leq k-1$, and $e_k' = v_k' v_1$.
We call $e_k'$ the \emph{loopback edge} and we call this procedure \emph{chaining}.
(When discussing the chain, arithmetic on indices is performed cyclically;
that is, we identify $v_{k+1}$ with $v_1$ and $v'_{k+1}$ with $v'_1$.)

By construction we have $v_i' < v_i < v_{i+1}$ for $i=1,\ldots, k-1$.
Hence the ``head'' $v_{i+1}$ of the chain edge $e_i'$ is greater
than the ``tail'' $v_i'$, except for possibly the loopback edge:
$v_1$ may or may not be greater than $v_k'$.
We prove that $G=H(G_S)$ is connected in Lemma~\ref{lem:H-range-biconnected} below.

An example of a chained graph $G$ is shown in Figure~\ref{fig:chain} in the case $k=3$.
Here $G=H(G_S)$ for a graph $G_S$ with three components $H_1,H_2,H_3$.
The shaded black vertices are the entry vertices $v_1$, $v_2$ and $v_3$,
and
the white vertices are the exit vertices $v_1'$, $v_2'$ and $v_3'$.
After chaining, $H_1$ contains two cut edges (shown as dashed edges),
because $e_1 = v_1 v'_1$ was part of a cut cycle in $H_1$.
Similarly, $H_3$ contains one cut edge.  Note that these dashed edges are not chain
edges.
The loopback edge is $e'_3$ and $v_1 < v_2 < v_3$.

\begin{center}
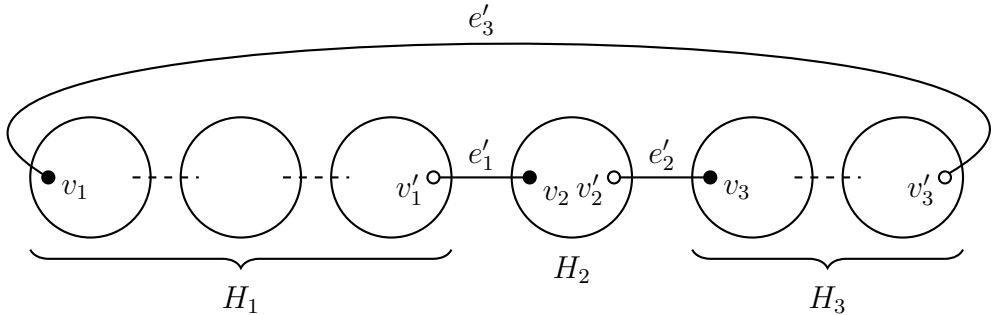
\begin{figure}[ht!]
\begin{center}
\begin{tikzpicture}[scale=0.8]
\draw (1,1) circle (1);
\draw (3.5,1) circle (1);
\draw (6,1) circle (1);
\draw (9,1) circle (1);
\draw (12,1) circle (1);
\draw (14.5,1) circle (1);
\draw [dashed,-] (1.7,1) -- (2.8,1);
\draw [dashed,-] (4.2,1) -- (5.3,1);
\draw [dashed,-] (12.7,1) -- (13.8,1);
\draw [fill] (0.3,1) circle (0.1);
\node [right] at (0.5,0.8) {$v_1$};
\draw  (6.7,1) circle (0.1);
\node [left] at (6.55,0.8) {$v_1'$};
\draw [fill] (8.3,1) circle (0.1);
\node [right] at (8.5,0.7) {$v_2$};
\draw  (9.7,1) circle (0.1);
\node [left] at (9.55,0.8) {$v_2'$};
\draw [fill] (11.3,1) circle (0.1);
\node [right] at (11.5,0.8) {$v_3$};
\draw  (15.2,1) circle (0.1);
\node [left] at (15.05,0.8) {$v_3'$};
\draw [-] (6.8,1) -- (8.3,1);
\node [above] at (7.5,1.1) {$e_1'$};
\draw [-] (9.8,1) -- (11.3,1);
\node [above] at (10.5,1.1) {$e_2'$};
\begin{scope}
\clip (-0.5,-1) rectangle (16.0,4.2);
\draw [-] (0.3,1) to [out=150,in=30] (15.27,1.07);
\end{scope}
\node [below] at (7.5,3.9) {$e_3'$};
\draw [decoration={ brace,mirror,amplitude=7}, decorate] (0,-0.2) -- (7,-0.2);
\node [below] at (3.5,-0.8) {$H_1$};
\node [below] at (9.0,-0.3) {$H_2$};
\draw [decoration={ brace,mirror,amplitude=7}, decorate] (11,-0.2) -- (15.5,-0.2);
\node [below] at (13.25,-0.8) {$H_3$};
\end{tikzpicture}
\caption{A connected graph $\h(G_S)\in\OF$ obtained by chaining a disconnected
graph $G_S\in\OS$ with three components $H_1, H_2, H_3$.}
\label{fig:chain}
\end{center}
\end{figure}
\end{center}

\begin{lemma}
\label{lem:H-range-biconnected}
Let $\h$ be the map defined above. Then
$h(G_S)$ is connected for all $G_S\in \OS$,
and $\h\colon \OS\to\OF$ is a surjection.
\end{lemma}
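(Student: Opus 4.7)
The plan is to show that chaining preserves connectedness within each component and then gluings them together via the new chain edges, and that the map $h$ acts as the identity on $\OF$ and hence is surjective.

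First, I would verify that deleting the bridge edges does not disconnect any component. By Lemma~\ref{lem:even-d}(ii), every connected component of the $2r$-regular graph $G_S$ is $2$-edge-connected (since $r\geq 2$), so no single edge is a bridge inside a component. Consequently, for each $i\in\{1,\ldots,k\}$, the graph $H_i - e_i$ is connected, and in particular $v_i$ and $v_i'$ remain in the same component of $H_i - e_i$.

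Next, I would argue that the chain edges $e_1',\ldots,e_k'$ form a cyclic ``tour'' through the components: $e_i'=v_i'v_{i+1}$ joins a vertex of $H_i$ to a vertex of $H_{i+1}$ (cyclically). Since each $H_i - e_i$ is connected, any two vertices $u\in H_i$ and $w\in H_j$ can be connected in $h(G_S)$ by a path that travels inside $H_i - e_i$ to $v_i'$, follows chain edges through the intermediate components (entering at $v_\ell$, traversing $H_\ell - e_\ell$ to $v_\ell'$ for each intermediate $\ell$), and finally reaches $w$ via $H_j - e_j$. This shows $h(G_S)$ is connected. I would also check the trivial-but-necessary facts that $h(G_S)$ is in $\OS$ (on vertex set $[n]$, $2r$-regular, and simple): chaining preserves degrees because each affected vertex $v_i$ (resp.\ $v_i'$) loses exactly one incidence ($e_i$) and gains exactly one ($e_{i-1}'$ resp.\ $e_i'$); and each chain edge $v_i'v_{i+1}$ joins distinct components of $G_S$ so cannot coincide with an edge of $G_S - \{e_1,\ldots,e_k\}$ or with another chain edge.

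Finally, for surjectivity, I would observe that if $G\in\OF$ (so $G$ is already connected) then $k=1$, the unique entry vertex $v_1$ is the highest-labelled vertex of $G$, and $v_1'$ is its highest-labelled neighbour. The single bridge edge $e_1=v_1v_1'$ is removed, and the single chain edge (the loopback) $e_1'=v_1'v_1$ is immediately added back: it is literally the same edge. Hence $h(G)=G$, so $h$ restricted to $\OF$ is the identity, and in particular $h\colon\OS\to\OF$ is surjective.

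No step here is really the main obstacle; the only nontrivial ingredient is the $2$-edge-connectedness of components, which is already handed to us by Lemma~\ref{lem:even-d}(ii). The remainder is verification that degrees, simplicity, and connectivity are preserved by chaining, together with the identity-on-$\OF$ observation.
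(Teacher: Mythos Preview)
Your proposal is correct and follows essentially the same approach as the paper: invoke Lemma~\ref{lem:even-d}(ii) to see that each $H_i-e_i$ stays connected, observe that the chain edges link the components in a ring, check degree preservation, and note that $h$ is the identity on $\OF$. Your version is in fact slightly more thorough than the paper's, since you also verify simplicity of the resulting graph explicitly.
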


\begin{proof}
Let $G_S\in\OS$ and suppose $G=\h(G_S)$.
Lemma~\ref{lem:even-d}(ii) implies that each connected
component $H_i$ remains at least $1$-connected after removal of bridge edges.
The chain edges then connect these components in a ring, forming a connected graph.
Additionally, $\h$ preserves
degrees since each vertex $v_i$, $v'_i$ is adjacent to exactly one bridge edge
that is removed and exactly one chain edge that is added. Hence, $G\in\OF$.
Finally, observe that $h(G)=G$ whenever $G$ is connected, which proves
that $h$ is a surjection since $\OF\subset \OS$.
\end{proof}

\begin{lemma}
\label{lem:roughly-injective}
Let $\h\colon \OS\to\OF$ be defined as above.
Then
\[\max_{G\in\OF}\,|\h^{-1}(G)|\leq 2rn.\]
\end{lemma}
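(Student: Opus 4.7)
The plan is to exhibit an injection $\phi \colon \h^{-1}(G) \to E_o(G)$, where $E_o(G)$ denotes the set of oriented edges of $G$. Since $G$ is $2r$-regular on $n$ vertices, $|E_o(G)| = 2rn$, which immediately yields the claimed bound.

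Given $G_S \in \h^{-1}(G)$ with entry vertices $v_1 < v_2 < \cdots < v_k$ (so $v_k = n$) and corresponding exit vertices $v_1', \ldots, v_k'$, I would define $\phi(G_S)$ to be the loopback chain edge $e_k' = v_k' v_1$ oriented from $v_k'$ to $v_1$. Since every chain edge is present in $G = \h(G_S)$, this is a valid oriented edge of $G$. In the degenerate case $k = 1$ the loopback coincides with the bridge and $\phi(G) = (v_1', n)$, while when $k \geq 2$ we have $v_1 < n$, so the second coordinate already separates the case $k = 1$ from all others.

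To prove injectivity I would give a deterministic reconstruction procedure that, given $G$ and an oriented edge $(u, v)$, either fails or returns the unique $G_S$ with $\phi(G_S) = (u, v)$. The procedure recurses on $k$. The base case $v = n$ returns $G_S = G$. When $v \neq n$, set $v_k = n$, $v_k' = u$, $v_1 = v$; identify $v_{k-1}'$ as the unique neighbour $w$ of $v_k$ such that $G - \{v_k'v_1,\, v_kw\}$ places $v_k$ and $v_1$ in different components; take $H_k$ to be the vertex set of the component of $v_k$ in this disconnected graph; and set $G'' := G[V\setminus H_k] + v_{k-1}' v_1$. Then recurse on $(G'', (v_{k-1}', v_1))$ to obtain $G_S[V \setminus H_k]$, and reassemble $G_S$ as $G_S[V\setminus H_k] \cup G[H_k] \cup \{v_k v_k'\}$. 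For this to close up, one must verify that $G''$ is the chaining of $G_S[V \setminus H_k]$ with loopback $(v_{k-1}', v_1)$; this is routine because deleting $H_k$ preserves the surviving components $H_1, \ldots, H_{k-1}$, their entry and exit vertices, and the smallest entry vertex $v_1$.

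The main obstacle is the uniqueness of $v_{k-1}'$. The edges $e_{k-1}' = v_{k-1}' v_k$ and $e_k' = v_k' v_1$ are the only two edges of $G$ that join $H_k$ to $V \setminus H_k$, so removing both separates $v_k$ from $v_1$. Conversely, for any other neighbour $w$ of $v_k$ in $G$, the edge $v_k w$ lies strictly inside $H_k$, so after removing just $v_k' v_1$ and $v_k w$, the chain edge $e_{k-1}'$ is still present and connects $v_k$ to $v_{k-1}' \in H_{k-1}$; and by Lemma~\ref{lem:even-d}(ii) each $H_i - e_i$ is connected, so one can traverse the surviving chain $H_{k-1} \to H_{k-2} \to \cdots \to H_1$ down to $v_1$ without reusing either removed edge. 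Hence $w = v_{k-1}'$ is forced. The injectivity of $\phi$ then follows by induction on $k$, yielding $|\h^{-1}(G)| \leq 2rn$.
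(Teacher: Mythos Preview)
Your proof is correct and uses the same injection as the paper (the oriented loopback edge $(v_k',v_1)$), but you prove injectivity by a different reconstruction. You peel off $H_k$ from the ``top'' by a connectivity test---finding the unique neighbour $w$ of $v_k$ whose removal together with the loopback separates $v_k$ from $v_1$---and then recurse. The paper instead reconstructs from the ``bottom'': starting at $v_1$, it searches $G\setminus e_k'$ until it meets a vertex with label greater than $v_1$; since $v_1$ is the maximum label in $H_1$, this vertex must be $v_2$ and the edge just crossed must be $e_1'$, and one iterates to recover all chain edges. The paper's argument is shorter because it exploits the label ordering directly rather than invoking Lemma~\ref{lem:even-d}(ii), and it is also phrased as an explicit \emph{unchaining algorithm} that is reused verbatim throughout Section~\ref{sec:disconnected} when bounding $B(\SSC)$. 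Your recursion works, but note that the base case should be stated as ``$v$ is the maximum-labelled vertex of the current graph'' rather than ``$v=n$'', since after stripping $H_k$ the ambient vertex set shrinks; you implicitly rely on this when recursing on $G''$.
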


\begin{proof}
Let $G_F\in\Omega_F$ be given.
First we prove that given an edge $e = \{ v,w\}$ of $G_F$ and an orientation $(v,w)$
of $e$, there is a unique disconnected graph
$G_S\in h^{-1}(G_F)$ such that when the chaining procedure is performed
with input $G_S$ (producing $G_F$), the loopback edge $e_k'$ equals $e$
and $v_1=v$.
We recover $G_S$ from $G_F$ and the oriented edge $e$ using the \emph{unchaining
algorithm} which we now describe.

To identify $e'_1$ given $e'_{k}$ and $v_1$,
begin by searching the graph~$G_F\setminus e'_{k}$, starting from~$v_1$,
until a vertex is reached with a label greater than~$v_1$.  By construction,
this vertex must be $v_2$, as $v_1$ has the greatest label of any vertex in $H_1$.
The edge traversed to reach~$v_2$ must be $e'_1$.  Repeat this process
until the chain edges $e_1',\ldots, e_{k-1}'$ have been discovered.
Then $G_S$ is recovered by deleting all chain edges (including the loopback edge)
from $G_F$ and
replacing them with the bridge edges $e_1,\ldots, e_k$.

For future reference, note that this unchaining algorithm will still correctly
recover $G_S$ so long as $v'_i$ is any neighbour of $v_i$ in $G_S$
(that is, $v_i'$ does not need to be the neighbour with largest label).

Hence the number of choices for the oriented loopback edge  gives an upper bound
on $|h^{-1}(G_S)|-1$.  We use the naive estimate of~$2r(n-1)$
as an upper bound on this quantity, since the loopback edge could be
any oriented edge which does not originate in $v_k$ (the highest-labelled
vertex in $G_F$). Since $G_F$ is the only connected graph in $h^{-1}(G_F)$,
we obtain
$|h^{-1}(G_S)|\leq 2r(n-1) + 1 \leq 2rn$.
(It should be possible to obtain a tighter bound with a little more work, though
we do not attempt that here.)
\end{proof}

\subsection{Simulating general switches}\label{ss:simgen}

\newcommand*\BD{B$\Delta$}
\newcommand*\BR{BR}
\newcommand*\NLF{NS}
\newcommand*\DLF{DS}
\newcommand*\RLF{RS}
\newcommand*\DHK{DHK}
\newcommand*\RHK{RHK}

Now we must construct a set of $(\msc,\ms)$-simulation paths $\SSC$
with respect to the surjection $h$ defined above.
Given a switch $S = (W,W')\in E(\ms)$, with $W,W'\in\OS$,
we must simulate it by a path $\sigma_{WW'}$ in $\GSC$, consisting
only of connected switches. We then give an upper bound on the congestion
parameters for the set $\SSC$ of simulation paths, and apply
Theorem~\ref{thm:twostagedirect}.
The main difficulty is that switches in~$\ms$
may create or merge components, whereas connected switches in~$\msc$
cannot.

In this section we refer to transitions in $E(\ms)$ as \emph{general switches},
to distinguish them from connected switches. We say that a general switch $S=(W,W')$
is
\begin{itemize}
\setlength\itemsep{-0.5mm}
\item \emph{neutral} if $W$ and $W'$ have the same number of components;
\item \emph{disconnecting} if $W'$ has one more component than~$W$;
\item \emph{reconnecting} if~$W'$ has one fewer component than $W$.
\end{itemize}
The general switch $S=\Switch{a}{b}{c}{d}$ is disconnecting if $ab,cd$ form a
2-edge-cut
in the same component of $W$ and $ac,bd$ do not.
If $S$ is a reconnecting switch then $ab$ and $cd$ below to
distinct components of $W$. Finally, $S$ is a neutral switch if~$ab$
and $cd$ both belong to the same component of $W$, but do not form a
2-edge-cut,
or if $ab$ and $cd$ form a 2-edge-cut in a component of $W$,
and $ac$ and $bd$ form another 2-edge-cut in the same component.

To simulate the general switch $S=(W,W')$, we must define a path
\[\sigma_{W,W'} = (Z_0, Z_1,\dotsc,Z_q),\]
in $E(\msc)$,
where $Z_0=\h(W)$ and $Z_q=\h(W')$, such that $Z_i\in\OF$ for $i=0,\ldots, q$ and
$(Z_i,Z_{i+1})\in E(\msc)$ for $i = 0,\ldots, q-1$.
Here are the kinds of connected switch that we will need in our simulation
paths.
\begin{itemize}
\item  Suppose that a bridge edge $v_j v_j'$ of $W$ is one of the edges to be
removed by the switch~$S$.  This is a problem, since all bridge edges have been
deleted and replaced by chain edges during the chaining process used
to construct $Z_0=h(W)$.
To deal with this, we first perform
a connected switch called a \emph{bridge change} switch.
The bridge change switch reinstates the edge $e_je_j'$ (so that the desired
switch $S$ can be performed) and changes the choice of
exit vertex (previously $e_j'$) in~$H_j$.
Importantly, the resulting chain structure can still be unwound
using the algorithm in Lemma~\ref{lem:roughly-injective}.
Specifically, to reinstate the bridge
edge~$e_j=v_j v'_j$ in component~$H_j$, 
let $v^*_j\in N_{Z_0}(v_j)\setminus\set{v'_{j-1}}$ be the highest-labelled neighbour 
of~$v_j$ in~$Z_0[H_j]$
and
perform the connected switch~$\Switch{v_j}{v^*_j}{v'_j}{v_{j+1}}$. 
(Clearly $v_j^\ast\neq v_j'$ since the edge $v_jv_j'$ is absent in $Z_0$,
and $v_{j+1}v_j^\ast$ is not an edge of $Z_0$ due to the component structure
of $W$, so this connected switch is valid.)
The new chain edge is $v^*_j v_{j+1}$ and the new bridge edge is $v_j v_j^*$.
We denote the bridge change switch by \BD.
See Figure~\ref{fig:bridge-change}.
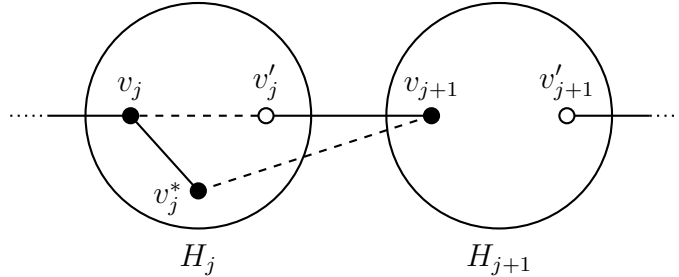
\begin{figure}[ht!]
\begin{center}
\begin{tikzpicture}[scale=1.0]
\draw (2.5,2.5) circle (1.5);
\node [below] at (2.5, 0.8) {$H_j$};
\draw (6.5,2.5) circle (1.5);
\node [below] at (6.5, 0.8) {$H_{j+1}$};
\draw [fill] (1.6,2.5) circle (0.1);
\node [above] at (1.6,2.7) {$v_j$};
\draw  (3.4,2.5) circle (0.1);
\node [above] at (3.4,2.7) {$v_j'$};
\draw [fill] (2.5,1.5) circle (0.1);
\node [left] at (2.3,1.4) {$v_j^*$};
\draw [fill] (5.6,2.5) circle (0.1);
\node [above] at (5.6,2.7) {$v_{j+1}$};
\draw  (7.4,2.5) circle (0.1);
\node [above] at (7.4,2.7) {$v_{j+1}'$};
\draw [-] (3.5,2.5) -- (5.6,2.5);
\draw [dashed,-] (3.3,2.5) -- (1.6,2.5);
\draw [-] (1.6,2.5) -- (2.5,1.5);
\draw [dashed,-] (2.5,1.5) -- (5.6,2.5);
\draw [-] (1.5,2.5) -- (0.5,2.5);
\draw [dotted,-] (0.0,2.5) -- (0.5,2.5);
\draw [-] (7.5,2.5) -- (8.5,2.5);
\draw [dotted,-] (8.5,2.5) -- (9.0,2.5);
\end{tikzpicture}
\caption{A bridge change in component $H_j$. }
\label{fig:bridge-change}
\end{center}
\end{figure}
\item We will need a connected switch to simulate the general switch $S$.
We denote this switch as \NLF, \DLF\ or \RLF\ if $S$ is  neutral, disconnecting
or reconnecting, respectively, and call it the \emph{neutral} (respectively,
\emph{disconnecting} or \emph{reconnecting}) \emph{simulation flip}.
It removes the same two edges that $S$ removes, and inserts the same two
edges that $S$ inserts.
\item  Let $Z$ be the current graph after the \DLF, \RLF\ or \NLF\ has been performed.
We say that component $H_j$ is \emph{healthy} in $Z$ if
there is no neighbour of the entry vertex $v_j$ in $Z$ with an label
higher than the exit vertex $v_j'$.
We must ensure that all components of $Z$ are healthy before performing the
unchaining algorithm, or else we lose the guarantee that $Z=h(W')$.

A component $H_j$ can only become unhealthy if some switch along the
simulation path involves the entry vertex $v_j$
and introduces a new neighbour
of $v_j$, say $v_j^*$, which has a higher label than the (current) exit vertex $v_j'$.
In this case, $v_j^*$ also has the highest label among all current neighbours
of $v_j$.
(In fact, the switch that makes $H_j$ unhealthy will be the switch
which simulates $S$, namely, the \DLF, \RLF\ or \NLF.)
Note, if there has been a bridge change in component $H_j$
at the start of the simulation path, then $v_j'$ might not be the exit vertex
originally produced in $H_j$ by the chaining procedure.
But this will not cause any problems, as we will see.

We can make $H_j$ healthy by performing the bridge change switch
$\Switch{v_j}{v^*_j}{v'_j}{v_{j+1}}$, precisely as described above.
(Now the main purpose of this switch is not to reinstate an edge, but to ensure
the correct vertex of $H_j$ becomes the exit vertex before unchaining.)
When this bridge change is performed at the end of the simulation path
we will refer to it as a \emph{bridge rectification} switch, denoted \BR.
\item If $S$ is a disconnecting switch then performing
this switch will produce a disconnected graph (and destroy the chain structure)
 by splitting $H_j$
into two components, one of which is not chained.
In order to avoid this, we perform a \emph{disconnected housekeeping switch},
denoted \DHK, before the disconnecting simulation flip (\DLF).
The \DHK\ is defined in detail in Section~\ref{ss:disconn} below.
\item Similarly, if $S$ is a reconnecting switch then performing this
switch destroys the chain structure, by merging $H_i$ and $H_j$ together, forming
one component which is incident with four chain edges instead of two.
To fix the chain structure we perform a \emph{reconnecting housekeeping switch},
denoted \RHK, after the
reconnecting simulation flip (\RLF).  More detail is given in Section~\ref{ss:conn} below.
\end{itemize}

We now define simulation paths for the three types of general switch~$S=(W,W')$.
The paths are summarised in Table~\ref{tab:switch-decompositions},
with optional steps denoted by square brackets.
Here ``optional'' means that these steps may or may not appear in the
simulation path for a general switch of that type. However, given
a particular switch $S$, the definition of the simulation path complete
determines whether or not these ``optional'' switches are required.

\renewcommand{\arraystretch}{1.2}
\begin{table}[ht!]
\begin{center}
\begin{tabular}{|lcccl|}
\hline
Neutral switch  && $\longrightarrow$ && [\BD] \NLF\ [\BR]\\
Disconnecting switch && $\longrightarrow$ && [\BD] \DHK\ \DLF\ [\BR] [\BR]\\
Reconnecting switch && $\longrightarrow$ && [\BD] [\BD] \RLF\ \RHK\ [\BR]\\
\hline
\end{tabular}
\caption{Simulation path for a general switch $S\in\OS$ using connected switches.}
\label{tab:switch-decompositions}
\end{center}
\end{table}

\subsubsection{The simulation path for a neutral switch}

Suppose that the general switch $S$ is a \emph{neutral switch}.
The simulation path for $S$ is formed using the following procedure.
\begin{itemize}
\setlength\itemsep{-0.5mm}
\item Let~$H_i$ be the component of $W$ that
contains all vertices involved in~$S$.
\item If the bridge edge~$v_iv'_i$ is switched away
by~$S$ then perform a bridge change.
\item After the bridge change, if any, perform a connected switch \NLF\ which switches
in and out the the edges specified by~$S$.
\item If a bridge edge is unhealthy after this switch then perform a bridge rectification.
\end{itemize}

\subsubsection{The simulation path for a disconnecting switch}\label{ss:disconn}

Suppose that the general switch $S$ is a \emph{disconnecting switch}.
The simulation path for $S$ is formed using the following procedure.
\begin{itemize}
\setlength\itemsep{-0.5mm}
\item Let~$H_j$ contain the
2-edge-cut~$ab,cd$ and let~$B_1,B_2$ be the subgraphs of~$H_j$ separated by this
cut. Without loss of generality, suppose that $B_1$ contains $v_j$.
\item
If a bridge change is required in $H_j$ then perform the bridge change.
In that case, $v_j v_j'$ is one of the edges forming the 2-edge-cut,
so all other neighbours of $v_j$ must lie in $B_1$.  Therefore the new exit
vertex $v_j^*$ also belongs to $B_1$.
\item
After the bridge change, if any, notice that switching in and out the
edges specified by $S$ would split $H_j$ into two components $B_1$ and $B_2$.
Of these, $B_1$ is already in the correct place in the chain, but $B_2$ is
not connected to the chain at all. This is not allowed, since we are
restricted to connected switches.  Therefore we must first perform the
\emph{disconnected housekeeping switch} (\DHK) to place $B_2$ into the
correct position in the chain.  This switch is described in more detail below.
\item After the \DHK\ has been performed, we
perform the disconnecting switch specified by $S$. (This is now safe, as the
\DHK\ ensures that the resulting graph is not disconnected.)
\item Up to two bridge edges may now be unhealthy. If so, then perform up to
two bridge rectification switches. For definiteness, if two \BR\ switches
are required, perform the one with the lower component index first.
\end{itemize}

It remains to specify the disconnected housekeeping switch \DHK.
See Figure~\ref{fig:dhk}.
Let $v_+$ be the vertex in $B_2$ with the highest label and let $v'_+$ be its
greatest neighbour. Then~$e_+ = v_+ v'_+$ must become the bridge edge of $B_2$.
Find $i$ such that $v_i < v_+ < v_{i+1}$, so $B_2$ belongs in the chain
between $H_i$ and $H_{i+1}$ in the chain. (Note that $v_k$ is the highest-labelled
vertex in the graph, so $v_+ \leq v_k$, but it is possible that $v_+ < v_1$,
in which case $B_2$ must become the first component in the chain and
$v_k' v_+$ must become the new loopback edge.)
Then \DHK\ is the switch~$\Switch{v_+}{v'_+}{v'_i}{v_{i+1}}$, which inserts $B_2$ into the correct
position in the chain as required.
\begin{figure}[ht!]
\begin{center}
\begin{tikzpicture}
\draw (0.5,4) circle (1.5);
\node [below] at (0.5, 2.2) {$H_i$};
\draw (4,4) circle (1.5);
\node [below] at (4, 2.2) {$H_{i+1}$};
\draw [fill] (-0.5,4.5) circle (0.1);
\node [below] at (-0.4,4.25) {$v_i$};
\draw  (1.5,4.5) circle (0.1);
\node [below] at (1.4,4.3) {$v_i'$};
\draw [fill] (3,4.5) circle (0.1);
\node [below] at (3.15,4.25) {$v_{i+1}$};
\draw (5,4.5) circle (0.1);
\node [below] at (4.8,4.4) {$v_{i+1}'$};
\draw (10,2.95) circle (2.0 and 2.25);
\draw [fill] (9,4.5) circle (0.1);
\node [below] at (9,4.3) {$v_j$};
\draw  (11,4.5) circle (0.1);
\node [below] at (11,4.3) {$v_j'$};
\draw [fill] (9,1.7) circle (0.1);
\node [below] at (9.2,1.45) {$v_+$};
\draw  (11,1.7) circle (0.1);
\node [below] at (10.7,1.55) {$v_+'$};
\draw [fill] (9.5,2.5) circle (0.1);
\node [left] at (9.3,2.5) {$b$};
\draw [fill] (9.5,3.5) circle (0.1);
\node [left] at (9.3,3.5) {$a$};
\draw [fill] (10.5,2.5) circle (0.1);
\node [right] at (10.7,2.5) {$d$};
\draw [fill] (10.5,3.5) circle (0.1);
\node [right] at (10.7,3.5) {$c$};
\draw [dashed,-] (8,3) -- (12,3);
\draw [-] (9.0,1.7) -- (10.9,1.7);
\draw [-] (9.5,2.5) -- (9.5,3.5);
\draw [-] (10.5,2.5) -- (10.5,3.5);
\draw [-] (1.6,4.5) -- (3.0,4.5);
\draw [-] (-0.5,4.5) -- (-1.5,4.5);
\draw [dotted,-] (-1.5,4.5) -- (-2,4.5);
\draw [-] (5.1,4.5) -- (6,4.5);
\draw [dotted,-] (6,4.5) -- (6.7,4.5);
\draw [-] (9,4.5) -- (8,4.5);
\draw [dotted,-] (8,4.5) -- (7.3,4.5);
\draw [-] (11.1,4.5) -- (12,4.5);
\draw [dotted,-] (12,4.5) -- (12.5,4.5);
\node [left] at (8.3, 1.4) {$H_j$};
\end{tikzpicture}
\caption{The situation just before a disconnecting housekeeping
switch (\DHK).}
\label{fig:dhk}
\end{center}
\end{figure}
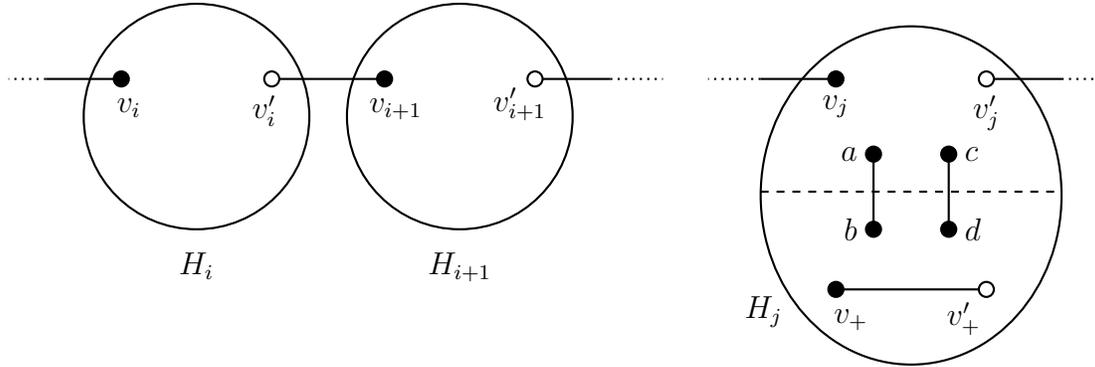

\subsubsection{The simulation path for a reconnecting switch}\label{ss:conn}

Suppose that the general switch $S$ is a reconnecting switch.
The simulation path for $S$ is defined using the following procedure.
\begin{itemize}
\setlength\itemsep{-0.5mm}
\item Suppose that
$ab\in H_i$ and $cd\in H_j$ where, without loss of generality $v_i < v_j$.
Up to two bridge changes may be necessary, as this switch involves two components.
If both are necessary, perform the bridge change in $H_i$ first.
\item
Then perform the \emph{reconnecting
switch} \RLF\ that switches in and out the edges specified by~$S$.
At this point, the component formed by linking $H_i$ and $H_j$
is incident with four chain edges, so the chain structure is not correct.  In
effect, this new component is occuping both position $i$ and position $j$
in the chain. Since $v_i < v_j$, vertex $v_j$ must be the entry vertex
of the new component, so the correct position for this component is between
$H_{j-1}$ and $H_{j+1}$.
\item The
\emph{reconnecting housekeeping}
switch~(\RHK)~$\Switch{v_i}{v'_{i-1}}{v'_i}{v_{i+1}}$ restores
the chain by placing the new component in position between $H_{j-1}$ and
$H_{j+1}$, linking $H_{i-1}$ directly to $H_{i+1}$ (which is now the next
component in the chain) and reinstating the bridge edge $v_iv_i'$.
Here $v_i'$ and $v_{i-1}'$ refer to endvertices of the \emph{current} chain
edges, \emph{after} the at most two bridge changes have been performed.
(These may differ from the endvertices of the original chain edges, before
the bridge rectification steps.)
\item Finally, at most one bridge
rectification may be required to  ensure that the single bridge edge $v_jv_j'$
is healthy.
\end{itemize}

In the special case when the reconnecting switch $S=(W,W')$ results in a connected
graph $W'$, we must have $i=1$ and $j=2$: here the reconnecting housekeeping switch
reinstates \emph{both} bridge edges, and no bridge rectification is required.

\subsection{Bounding the congestion}

Let us calculate upper bounds for ~$B(\SSC)$ and~$\ell(\SSC)$.
From Table~\ref{tab:switch-decompositions}, the longest possible simulation path consists
of five connected switches, so
\begin{equation}
\label{ellSSC}
 \ell(\SSC)=5.
\end{equation}
We spend the rest of this subsection bounding $B(\SSC)$.

Each connected switch~$t=(Z,Z')$ can play a number of roles
in a simulation path. We consider each kind of general switch
separately (neutral, disconnecting, reconnecting).
As in Section~\ref{sec:longflip-congestion}, the number of simulation paths
(of a given kind) containing $t$ is equal to the amount of extra
information required to identify a general switch $S$ of this kind
uniquely, given $t$ (and assuming that $t$ belongs to the simulation
path for $S$): we speak of ``guessing'' this information.
Again, it is enough to identify one of $W$ or $W'$, together
with the edges switched in/out by $S$, and then the other graph ($W'$ or $W$)
can be inferred.

\subsubsection{Neutral switches}\label{neu}

A connected switch~$t=(Z,Z')$ in~$\msc$ could take on three possible roles when
simulating a neutral switch~$S=(W,W')$.   We consider each separately below.

\bigskip

\noindent({\bf \BD})\ First, suppose that the connected switch $t$ is a bridge change in a
simulation path for a neutral (general) switch $S$.
\begin{itemize}
\setlength\itemsep{-0.5mm}
\item  Since $t$ is a bridge change, one of the
edges which is switched in by $t$ is also an edge which
is switched out by $S$. Guess which one, from $2$~alternatives.
\item
The other edge which is switched out by $S$ is an edge of $Z'$, so this
edge and its orientation can be guessed
from fewer than $2rn$ alternatives.
\item Also guess whether $W$ is connected and if not, guess
the oriented loopback edge from at most $2rn$ alternatives
(as in the proof of Lemma~\ref{lem:roughly-injective}).
If $W$ is connected then $W=Z$.
Otherwise, we must apply the unchaining algorithm from Lemma~\ref{lem:roughly-injective}
to the graph~$Z$ with respect to the chosen oriented
loopback edge. This produces the graph $W$, and then using the switch edges
we can determine $W'$.
\end{itemize}
In all we guessed $S$ from $2\cdot (2rn)^2  = 8r^2n^2$ alternatives.

\bigskip

\noindent ({\bf \NLF})\
Next, suppose that the connected switch $t$ is a neutral simulating switch
for a neutral (general) switch $S$.
\begin{itemize}
\setlength\itemsep{-0.5mm}
\item
The edges switched in/out by $t$ are precisely the edges switched in/out by $S$.
This gives us the vertices $\{a,b,c,d\}$ involved in the switch.
Guess the (current) oriented loopback edge in $Z$ from at most $2rn$ alternatives,
and perform the unchaining algorithm to reveal the chain structure of $Z$.
In particular, we now know the entry vertex and (current) exit vertex of
every component.
\item If $t$ was not preceded by a bridge change then $Z=Z_0$.
Otherwise, a bridge change has been performed before $t$,
which implies that $Z=Z_1$. In this case, one of the edges switched out by
$t$ was originally a bridge edge in $W$,
and hence is incident with $v_i$ for some component $H_i$ of $W$.
Since $S$ is a neutral switch, all vertices involved in the switch belong
to the same component $H_i$, and hence $v_i$ has the largest label among
$a,b,c,d$.  Without loss of generality, if $a=v_i$ then $b=v_i'$,
while our knowledge of the chain structure of $Z$ gives us the
current exit vertex $v_i^\ast$. Hence we can uniquely reverse
the bridge change switch to determine $Z_0$.
Therefore we can find $Z_0$ by guessing between just two alternatives, namely,
whether or not a bridge change was performed.

From $Z_0$ we obtain $W$ by deleting the chain edges and reinstating the
bridge edges (taking care to adjust the exit vertex of $H_i$ if a bridge change
has been performed).
Since we know $W$ and the switch edges, we can determine $W'$.
\end{itemize}
We have guessed $S$ from $4rn$ alternatives, when $t$ is a neutral simulating switch.
(The above analysis holds even if the switch $S$ replaces
one 2-edge-cut in $W$ by another.)

\bigskip

\noindent({\bf \BR})\
Finally, suppose that $t$ is a bridge rectification switch in a simulation
path for a neutral (general) switch $S$.
Then $t$ is the final switch in the simulation path.
\begin{itemize}
\setlength\itemsep{-0.5mm}
\item
Guess the oriented loopback edge in $Z'$, from at most $2rn$
alternatives, and
perform the unchaining algorithm on~$Z'$ to produce $W'$.
(Again, this includes the possibility that $W'$ is connected, in which case $W'=Z'$.)
\item
Since $t$ is a bridge rectification, the switch $S$ switches in an edge which is
incident with the entry vertex $v_i$ of $H_i$, giving it a new neighbour
with a higher label than the current exit vertex in $H_i$.
This edge is one of the two edges switched out by $t$: guess which one, from 2 alternatives.
\item
The other edge which is switched in by $S$, together with its orientation, can be guessed
from the (oriented) edges of $Z$: at most~$2rn$ possibilities.
Now $W$ can be inferred from $W'$ and the switch edges.
\end{itemize}
We have guessed $S$ from at most $2rn\cdot 2\cdot 2rn = 8r^2 n^2$
alternatives.

\bigskip

Adding the contributions from these three roles, a connected switch $t$ may be part of at most
\begin{equation}
\label{neutral}
8r^2n^2  + 4rn + 8r^2n^2 \leq  (8r^2+1)n^2
\end{equation}
simulation paths for neutral general switches.

\bigskip

Before we move on to consider disconnecting and reconnecting switches,
we make a couple of comments. The analysis of the neutral simulation switch
(\NLF) shows that once the edges of the general switch $S$ have
been guessed, we have enough information
to decide whether a bridge rectification switch is required and if so, its
specification.  Since these bridge rectification switches do not affect the number of choices
for $S$ in each subcase below, we only mention them when $t$ itself is playing
the role of a bridge rectification, and otherwise we omit them from our accounting.

In every situation we must guess the oriented loopback edge, so that we can perform the
unchaining algorithm. In some cases we must also
be careful to specify which graph is given as input to the
unchaining algorithm. In particular, we must not perform the unchaining algorithm
immediately after a disconnecting housekeeping switch or immediately before
a reconnecting housekeeping switch, since in those graphs the chaining structure
has been temporarily compromised.

\subsubsection{Disconnecting switches}\label{s:discon}

A connected switch~$t=(Z,Z')$ in~$\msc$ could take on five possible roles when
simulating a disconnecting switch~$S=(W,W')$.   We consider each separately below.

\bigskip

\noindent ({\bf \BD})\ If $t$ is a bridge change then we can guess $S$ from at most
$8r^2n^2$ alternatives, as explained in Section~\ref{neu}.

\bigskip

\noindent ({\bf \DHK})\ Next, suppose that $t$ is a disconnecting housekeeping switch.
\begin{itemize}
\setlength\itemsep{-0.5mm}
\item Guess the (current) oriented loopback edge as an edge of $Z$ from
at most $2rn$ possibilities.
Perform the unchaining algorithm on $Z$ to reveal the chain structure of $Z$.
Again, this tells us the entry and (current) exit vertex of every component of $W$.
\item Decide whether or not a bridge change was performed before $t$,
from 2 possibilities.
As described in Section~\ref{neu}, we can uniquely
determine the bridge change from the chain structure of $Z$, so in both cases
we can find $Z_0$ (note that $Z=Z_0$ if there was no bridge change,
and $Z=Z_1$ if a bridge change was performed before $t$).  By reversing the bridge
change if necessary we obtain $Z_0$, and then from our knowledge of the chain
structure we can determine $W$.
(If a bridge change was performed then we must use the original exit vertex for that
component: this vertex is revealed by reversing the bridge change.)

Since $t$ is a disconnected housekeeping switch, it switches out a chain edge
$v'_jv_{j+1}$, for some $j$, and
an edge $v_+v'_+$ which is entirely contained within a component $H_i$, for some $i\neq j$.
Hence we can distinguish between these two edges of $t$ and identify which is the chain
edge and which is $v_+v_+'$, also determining $i$ and $j$.
Furthermore, we can identify $v_+$ uniquely as it has a higher label than $v_+'$.
\item
Let $v_i$ be the entry vertex in $H_i$.
A disconnected housekeeping switch is only performed when the disconnected simulation
flip (\DLF) representing $S$ would switch out
the two edges of a 2-edge-cut separating $v_+$ and $v_i$, say.
By Lemma~\ref{lem:even-d}(iv), we can guess which 2-edge-cut is switched out by $S$
from at most~$n^2/(15 r^2)$ possibilities.
The orientation of the switch $S$ is uniquely determined, since there
is only one possibility which would disconnect $H_i$.
Now we know $W$ and the switch edges, we can determine $W'$.
\end{itemize}
We have guessed $S$ from at most $2rn\cdot 2\cdot n^2/(15 r^2) = 4n^3/(15r)$
possibilities.

\bigskip

\noindent ({\bf \DLF})\   Now suppose that $t$ is a disconnecting simulation switch.
The edges switched in/out by $t$ are precisely the edges switched in/out by $S$,
so we do not need to guess these.
Note that $Z$ does not have a valid chain structure, but $Z'$ does.
\begin{itemize}
\setlength\itemsep{-0.5mm}
\item
We know that $W'$ is not connected, so we guess the oriented
loopback edge from the (oriented) edges of $Z'$: at most $2rn$ possibilities.
Perform the unchaining algorithm on $Z'$, revealing the chain structure of $Z'$.
Suppose that $H_i$ and $H_j$ are the
two components of $Z'$ which contain an edge which was switched in by $t$,
and suppose that $i < j$. Then the disconnecting housekeeping switch (\DHK) is
uniquely determined: it is the 
switch~$\Switch{v_i}{v'_i}{v'_{i-1}}{v_{i+1}}$.
\item
Guess whether or not a bridge change was performed before the disconnecting housekeeping
switch: if so, it is uniquely specified (as explained earlier), so we guess from 2
possibilities.  Reverse this bridge change to obtain $Z_0$ and hence find $W$.
From $W$ and the switch edges we can determine $W'$.
\end{itemize}
We have guessed the general switch $S$ from at most $4rn$ possibilities.

\bigskip

\noindent ({\bf The first \BR})\
Now suppose that $t$ is the first bridge rectification switch in the
simulation path for a disconnecting (general) switch $S$.  There may or may
not be a second bridge rectification switch after $t$.
The calculations are similar to the bridge rectification case in Section~\ref{neu}.
\begin{itemize}
\setlength\itemsep{-0.5mm}
\item Guess the oriented loopback edge in $Z'$, from at most $2rn$ alternatives,
and perform
the unchaining algorithm to reveal the chain structure of $Z'$.   (Note, we know that
$W'$ is disconnected when $S$ is a disconnecting switch.)
\item Since $t$ is a bridge rectification switch, one of the edges switched in by
$S$ is $v_iv_i^\ast$, where $v_i$ is the entry vertex of some component $H_i$ and
$v_i^\ast$ has a higher label than the current exit vertex $v_i'$ in that component.
This edge is then switched out by $t$.
(It is possible that this condition is satisfied by both edges of $S$, but if so,
we know that the first bridge rectification will act on the component $H_i$ with the
lower index first.)
Guess which edge of $t$ is $v_iv_i^\ast$, for a factor of 2.
\item Guess the other
edge switched out by $S$, and the orientation of the switch, from at most $2rn$ possibilities.
(The other edge switched out by $S$ must lie in a distinct component from $H_i$,
but for an upper bound we ignore this.)
Whether or not a second bridge change will be needed after $t$ is completely determined
by the switch edges and the chain structure, as described in Section~\ref{neu}.
If a second bridge rectification is needed, perform it to produce $Z_{q}$. From $Z_q$
we may determine $W'$ using our knowledge of the chain structure of $Z'$
(and adjusting the exit vertex of the component in which the second bridge rectification
was performed, if necessary).
From $W'$ and the switch edges we may deduce $W$.
\end{itemize}
We have guessed $S$ from at most $2rn\cdot 2\cdot 2rn = 8 r^2n^2$ alternatives.

\bigskip

\noindent ({\bf The second \BR})\
Here we assume that one bridge rectification has been already performed to produce $Z$.
The connected switch $t$ is the last one in the simulation path.
\begin{itemize}
\setlength\itemsep{-0.5mm}
\item Again, we guess the oriented loopback edge for $Z'$, from at most $2rn$ possibilities.  Apply the unchaining algorithm to produce $W'$.
\item Choose which edge switched out by $t$ is an edge switched in by $S$,
out of 2 possibilities. The endvertex of this edge with the higher label
is $v_i$, for some $i$.
\item Since this is the second bridge rectification, the other edge
switched in by $S$ was $v_jv_j^*$ for some other component $H_j$,
and this edge has been made into the bridge edge in $H_j$, using the first bridge
rectification.
Therefore we can identify and orient this edge once we guess $H_j$, from at most $n/r$
possibilities (using Lemma~\ref{lem:even-d}(iii)).
This determines the edges switched in by $S$.
From $W'$ and the switch edges, we can determine $W$.
\end{itemize}
We have
guessed $S$ from at most $2rn\cdot 2\cdot n/r = 4n^2$ possibilities.

\bigskip

Summing over these five roles, a connected switch $t$ can be included in at most
\begin{equation}
\label{disconnecting}
8r^2  n^2 +  4n^3/(15r) + 4rn + 8r^2n^2 + 4n^2   \leq 9rn^3
\end{equation}
simulation paths for disconnecting general switches.
(The upper bound follows since $2\leq r\leq n/2$.)

\subsubsection{Reconnecting switches}\label{recon}

A connected switch~$t=(Z,Z')$ in~$\msc$ could take on give possible roles when
simulation a reconnecting (general) switch~$S=(W,W')$. We consider each
separately below.

\bigskip

\noindent ({\bf The first~\BD})\
When $t$ is the first bridge change we can guess $S$ from at most
$8 r^2 n^2$ alternatives,
using arguments very similar to those given in Section~\ref{neu}.

\bigskip

\noindent ({\bf The second~\BD})\
Now suppose that $t$ is the second of two bridge changes in the simulation
path for some reconnecting (general) switch $S$.
\begin{itemize}
\setlength\itemsep{-0.5mm}
\item Guess the oriented loopback edge in $Z$,
from at most $2rn$ possibilities.
 Perform the unchaining algorithm to reveal the chain structure of $Z$.
\item
Since $t$ is a bridge change, one of the edges switched in by $t$ is an edge which will
be switched out by $S$.
Choose which, from 2 possibilities. This edge is incident with the entry
vertex $v_j$ of some component $H_j$.
\item
Since $t$ is the second bridge change,
the other edge which will be switched out by $S$ must be incident with
the entry vertex of some other component $H_i$, with $i<j$.
(It has been put back into $Z$ by the first bridge change.) By Lemma~\ref{lem:even-d}(iii)
there are at most $n/(2r)$ components, so we choose one to determine $v_i$.
\item
Choose a non-chain edge incident with $v_i$ and orient it, from $2(2r-1)$ possibilities.
This determines the edges switched in and out by $S$. From these edges and $W$,
we may determine $W'$.
\end{itemize}
We have guessed $S$ from at most
$2rn\cdot 2\cdot n/(2r)\cdot 2(2r-1)  < 8r n^2$ possibilities.

\bigskip

\noindent ({\bf \RLF})\
Suppose that $t$ is a reconnecting simulation switch.
The edges switched in/out by $t$ are precisely the edges switched in/out by $S$.
\begin{itemize}
\setlength\itemsep{-0.5mm}
\item
Guess the oriented loopback edge for $Z$ from at most $2rn$ possibilities.
Perform the unchaining algorithm on $Z$ to reveal the chain structure of $Z$.
\item
Up to two bridge changes may have been performed before $t$, corresponding
to the up to two edges switched out by $S$ which are incident with entry vertices
in some component of $Z$. If two bridge changes have been performed then their order
is uniquely determined. As argued in Section~\ref{neu}, we must just decide whether
or not a bridge change has been performed in each component, and then the rest is specified,
so there are $2^2=4$ possiblities for the bridge changes before $t$, including
the possibility that there were none.
Once the bridge change switches are known,
they can be reversed, which reveals $W$. Together with the switch edges,
this determines $W'$ and hence $S$.
\end{itemize}
We have guessed $S$ from at most $8rn$ possibilities.

\bigskip

\noindent ({\bf \RHK})\
Suppose that $t$ is a reconnecting housekeeping switch.
Then $Z'$ has a valid chain structure, though $Z$ does not.
\begin{itemize}
\setlength\itemsep{-0.5mm}
\item
Guess the oriented loopback edge for $Z'$ from at most $2rn$ possibilities. Perform the
unchaining algorithm on $Z'$ to reveal the chain structure of $Z'$.
\item
The switch $t$ involves entry vertices $v_i$ and $v_j$, where $i<j$.
Since $t$ is a reconnecting housekeeping switch, the edges switched in by $S$ form a
2-edge-cut in $Z'$
which separate $v_i$ and $v_j$. By Lemma~\ref{lem:even-d}(iv)
we can guess this 2-edge-cut from at most $n^2/(15 r^2)$ possibilities.
This specifies the edges switched out by $S$.
The orientation of the switch is determined by the
fact that one edge switched out by $S$ is contained in $H_i$ and the
other is contained in $H_j$.
\item
Finally, we must decide whether or not a bridge rectification is needed
to produce $W'$, giving 2 possibilities. If a bridge rectification is needed
then it is uniquely determined: perform it to obtain $W'$, and then $W$ can
be obtained using the switch edges.
\end{itemize}
We have guessed $S$ from a total of at most
$2rn\cdot n^2/(15 r^2)\cdot 2 = 4n^3/(15 r)$ possibilities.

\bigskip

\noindent ({\bf \BR})\
Suppose that $t$ is a bridge rectification for a reconnecting (general) switch $S$.
Then $t$ is the final switch in the simulation path. In particular, $W'$ is 
disconnected, since otherwise the reconnecting housekeeping switch produces
$W'$, and no \BR\ or unchaining is necessary.
\begin{itemize}
\setlength\itemsep{-0.5mm}
\item Guess the oriented loopback edge and perform the unchaining
algorithm in $Z'$ to produce $W'$, from at most $2rn$ alternatives.
\item One of the edges switched in by $S$ is an edge which was switched out
by $t$: choose one, from 2 possibilities, and call it $e$.
\item
Let $\widehat{e}$ be the other edge which is switched in by $S$. 
Recall that the edges removed by the \RHK\ are
$e_{i-1}=v_iv_{i-1}'$ and $e_i=v_i'v_{i+1}$, where $e_i'$, $e_{i-1}'$ 
are the chain edges in the graph $Z_{q}$ obtained \emph{after}
the (at most two) bridge changes were performed. 
(See Section~\ref{ss:conn}.)
Let $Z_{q+1}$ be the result of performing the \RLF\ simulating $S$, starting from $Z_q$.
Since the edges switched in by the \RLF\ must be disjoint from
the edges of $Z_q$, we conclude that $\widehat{e}\not\in\{ e_{i-1}',\, e_i'\}$:
the three edges $\widehat{e}$, $e_{i-1}'$, $e_i'$ are distinct and all
belong to $Z_{q+1}$.
Next, the \RHK\ $(Z_{q+1},Z_{q+2})$ removes precisely two edges, namely $e_{i-1}'$,
$e_{i}'$, and hence $\widehat{e}$ is
still present in $Z=Z_{q+2}$.  Therefore we can guess and orient $\widehat{e}$
from among the $2rn$ oriented edges of $Z$. 
(In fact, the two edges
switched in by $S$ form a 2-edge-cut in $Z$, but we do not use that fact here.)
Now we know $W'$ and the switch edges, we can determine $W$.
\end{itemize}
We have guessed $S$ from at most $2rn\cdot 2\cdot 2rn = 8r^2 n^2$
possibilities.

\bigskip

Summing the contribution from these five roles, a connected switch $t$ can be included
in at most
\begin{equation}
\label{reconnecting}
8 r^2 n^2 + 8 r n^2 + 8rn +  4n^3/(15 r) + 8 r^2 n^2 \leq  11 r n^3
\end{equation}
simulation paths for reconnecting general switches.
(The upper bound follows since $2\leq r\leq n/2$.)

\subsection{Completing the analysis}

Now we apply the two-stage direct canonical path construction
to our set of $(\msc,\ms)$-simulation paths.
Here $\bar{\rho}(\GS)$ denotes the congestion of any set
of canonical paths, or multicommodity flow, for the switch chain.
We will use the bound given in~\cite{cooper05sampling}.

\begin{lemma}
The $(\msc,\ms)$-simulation paths defined in Section~\ref{ss:simgen}
define a set of canonical paths for $\msc$ with congestion
$\bar{\rho}(\GSC)$ which satisfies
\[
\bar{\rho}(\GSC) \leq 120\, r n^3\, \bar{\rho}(\GS).
\]
\label{lem:BSSC}
\end{lemma}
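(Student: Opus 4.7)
The plan is to apply Theorem~\ref{thm:twostagedirect} directly, with $\M = \msc$, $\M' = \ms$, the surjection $h:\OS\to\OF$ from Lemma~\ref{lem:H-range-biconnected}, and the set $\SSC$ of $(\msc,\ms)$-simulation paths constructed in Section~\ref{ss:simgen}. Since Theorem~\ref{thm:twostagedirect} immediately yields
\[\bar\rho(\GSC) \leq D(\msc,\ms)\,\ell(\SSC)\,B(\SSC)\,\bar\rho(\GS),\]
the whole proof reduces to bounding the three quantities $D(\msc,\ms)$, $\ell(\SSC)$ and $B(\SSC)$.

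Two of these three ingredients come for free. The length bound $\ell(\SSC)=5$ is already recorded in~(\ref{ellSSC}), since by Table~\ref{tab:switch-decompositions} no simulation path uses more than five connected switches. For the simulation gap I would use that $\piS$ and $\piSC$ are both uniform distributions (on $\OS$ and $\OF$ respectively), and that by~(\ref{conn-switch}) we have $\PSC(u,v) = \PS(u,v) = 1/(3 a_{n,2r})$ for every non-loop transition of $\msc$; the same value is the transition probability $\PS(z,w)$ across any edge of $\ms$. Hence
\[D(\msc,\ms) \;=\; \max_{\substack{uv\in E(\msc)\\ zw\in E(\ms)}}\frac{\piS(z)\PS(z,w)}{\piSC(u)\PSC(u,v)} \;=\; \frac{|\OF|}{|\OS|} \;\leq\; 1.\]

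The bookkeeping for $B(\SSC)$ is the only substantive step, but even this has essentially been done in Sections~\ref{neu}--\ref{recon}. I would partition the simulation paths through a fixed connected switch $t$ according to whether the general switch $S$ they simulate is neutral, disconnecting or reconnecting, and sum the three per-type bounds~(\ref{neutral}), (\ref{disconnecting}) and~(\ref{reconnecting}):
\[B(\SSC) \;\leq\; (8r^2+1)n^2 \,+\, 9 r n^3 \,+\, 11 r n^3.\]
Since $n\geq 2r+1$ and $r\geq 2$ we have $4rn \geq 8r^2 + 4r \geq 8r^2+1$, so $(8r^2+1)n^2 \leq 4rn^3$, and the neutral contribution is absorbed into the $rn^3$ term, giving $B(\SSC) \leq 24\, r n^3$. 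Plugging everything in, $D(\msc,\ms)\,\ell(\SSC)\,B(\SSC) \leq 1\cdot 5 \cdot 24 rn^3 = 120 rn^3$, matching the claimed bound.

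The main obstacle is not the lemma itself but the role-by-role analysis that feeds into~(\ref{neutral})--(\ref{reconnecting}); once those estimates and $\ell(\SSC)=5$ are in hand, the only genuinely new observation is that both chains have uniform stationary distributions and identical per-edge transition probabilities, which drives $D(\msc,\ms)\leq 1$. After that, the lemma is pure arithmetic, with the slightly delicate point being the verification that $(8r^2+1)n^2$ is dominated by $4rn^3$ in the regime $n\geq 2r+1$, $r\geq 2$, so that the three contributions fit neatly into the uniform $rn^3$ scale.
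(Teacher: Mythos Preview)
Your proposal is correct and follows essentially the same route as the paper: apply Theorem~\ref{thm:twostagedirect} with $\ell(\SSC)=5$, $D(\msc,\ms)\leq 1$ (from the uniform stationary distributions and identical non-loop transition probabilities), and $B(\SSC)\leq 24rn^3$ obtained by summing the three role-by-role bounds~(\ref{neutral})--(\ref{reconnecting}). Your explicit check that $(8r^2+1)n^2\leq 4rn^3$ is a slightly more detailed version of the paper's ``using the fact that $r\geq 2$ and $n\geq 2r+1\geq 5$''.
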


\begin{proof}
First, by adding together (\ref{neutral}) -- (\ref{reconnecting}),
the maximum number $B(\SSC)$
of simulation paths containing a given connected switch satisfies
\[
B(\SSC)	\leq (8r^2+1)n^2 + 9r n^3 + 11 rn^3
	\leq 24 r n^3,
\]
using the fact that $r\geq 2$ and $n\geq 2r+1\geq 5$.
Next we calculate the simulation gap $\DS = D(\msc,\ms)$.
Recall that for all $uv\in E(\msc)$ and $zw\in E(\ms)$ we have $u\neq v$ and $z\neq w$,
and hence
\[ \PSC(u,v) = \PS(z,w) = \frac{1}{3 a_{n,r}},\]
by definition of both chains.
The state space of $\ms$ is $\OS$ and the state space of $\msc$ is
$\OF$, with $\OF\subseteq \OF$.
Since both $\ms$ and $\msc$ have uniform stationary distribution, 
\[\DS = \max_{\substack{uv\in E(\msc)\\zw\in E(\ms)}}
         \frac{|\OF|\, \PS(z,w)}{|\OS|\, \PSC(u,v)} \leq
            \max_{\substack{uv\in E(\msc)\\ zw\in E(\ms)}}\,
         \frac{\PS(z,w)}{\PSC(u,v)} =
   1.\]
Substituting these values and (\ref{ellSSC}) into Theorem~\ref{thm:twostagedirect}
gives
\begin{align*}
\bar{\rho}(\GSC)&\leq 
   D_{SC}\, \ell(\Sigma_{SC})\, B(\Sigma_{SC})\, \bar{\rho}(\GS) \\
	&\leq 120\, r n^3\, \bar{\rho}(\GS),
\end{align*}
as required.
\end{proof}

Finally, we may prove Theorem~\ref{thm:flip-mixing-time}.

\begin{proof}[Proof of Theorem~\emph{\ref{thm:flip-mixing-time}}]\
Combining the bounds of Lemma~\ref{lem:BF} and Lemma~\ref{lem:BSSC} gives
\begin{equation}
\label{relaxation}
\frac{\bar{\rho}(\GF)}{\bar{\rho}(\GS)} \leq
	 480\,(2r)^{12} n^7 .
\end{equation}
The upper bound on the mixing
time of the switch chain given in~\cite{cooper05sampling,corrigendum}
has (an upper bound on) $\bar{\rho}(\GS)$ as a factor.
Therefore, multiplying this bound
by the right hand side of (\ref{relaxation}), we conclude that
the mixing time of the flip chain is at most
\[
   480\, (2r)^{35}\, n^{15}\, \left(2rn\log(2rn) + \log(\varepsilon^{-1})\right),
\]
completing the proof of Theorem~\ref{thm:flip-mixing-time}.
(This uses the fact that $|\Omega_S| \leq (rn)^{rn}$, which follows
from known asymptotic enumeration results~\cite{bendercanfield}.)
\end{proof}

It is unlikely that the bound of Theorem~\ref{thm:flip-mixing-time}
is tight.  Experimental evidence was presented in an earlier version
of this work~\cite{CDH} which provides support for the conjecture that
the true mixing time for the flip chain is~$\bigO{n\log n}$,
when $r$ is constant.
Perhaps $O(rn\log n)$ is a reasonable conjecture when $r$ grows with $n$.

\appendix
\section{Maximising 2-edge-cuts separating two vertices}

We now present the deferred proof of Lemma~\ref{lem:even-d}(iv).
Our aim here is to present a simple upper bound on $\Lambda(u,v)$
which holds for all relevant values of $n$ and $r$, though our
proof establishes tighter bounds than the one given in the statement
of Lemma~\ref{lem:even-d}(iv).

\begin{proof}[Proof of Lemma~\emph{\ref{lem:even-d}(iv)}]
Let $\Lambda(u,v)$ denote the number of 2-edge-cuts in $G$ which
separate $u$ and $v$.
By considering the connected component of $G$ which contains $u$,
if necessary, we may assume that $G$ is connected.
Consider the binary relation $\sim$ on $V$ defined by
$w_1\sim w_2$ if and only if there are 
at least three edge-disjoint paths
between $w_1$ and $w_2$ in $G$. Then $\sim$ is an equivalence
relation~\cite{tutte}
which partitions the vertex set $V$ into equivalence classes $U_0,\ldots, U_k$.
For $j=0,\ldots, k$ let $H_j = G[U_j]$ be the subgraph of $G$ induced
by $U_j$. Each $H_j$ is either a maximal 3-edge-connected induced
subgraph of $G$, or a single vertex.
Without loss of generality, suppose that $u\in H_0$ and $v\in H_k$.

Note that the number of edges from $H_i$ to $H_j$ is either 0, 1 or
2 for all $i\neq j$, by construction.
Define the \emph{node-link} multigraph $\widetilde{G}$ of $G$
by replacing each $H_j$ by a single vertex $h_j$, which we call a
\emph{node}, and replacing each edge from a vertex of $H_i$ to a vertex
of $H_j$ by an edge from $h_i$ to $h_j$, which we call a \emph{link}.
In particular, if there is a 2-edge-cut from $H_i$ to $H_j$ in $G$
then the link $h_ih_j$ has multiplicity 2 in $\widetilde{G}$.
Each node has even degree in $\widetilde{G}$, by Lemma~\ref{lem:even-d}(i).
Furthermore, every link in $\widetilde{G}$ belongs to a cycle (possibly
a 2-cycle, which is a double link), by Lemma~\ref{lem:even-d}(ii),
and these cycles in $\widetilde{G}$ must be edge-disjoint,
or the corresponding edge of $G$ cannot be part of a 2-edge-cut.
Therefore $\widetilde{G}$ is a planar multigraph (with edge multiplicity
at most two) which has a
tree-like structure, as illustrated in Figure~\ref{fig:two-cuts}:
the black squares represent the nodes of $\widetilde{G}$.
\begin{figure}[ht!]
\begin{center}
\begin{tikzpicture}[scale=0.67,state/.style={rounded rectangle,minimum size=1.8cm,minimum width=2.5cm}]
\draw[fill=black]  (1,2.2) rectangle  node[black] {} ++(0.25,0.25) ;
\draw[fill=black]  (5.5,2.2) rectangle  node[black] {} ++(0.25,0.25) ;
\draw[fill=black]  (4,2.2) rectangle  node[black] {} ++(0.25,0.25) ;
\draw[fill=black]  (-1,1) rectangle  node[black] {} ++(0.25,0.25) ;
\draw[fill=black]  (2.75,1) rectangle  node[black]{} ++(0.25,0.25) ;
\draw[fill=black]  (6.5,1) rectangle  node[black] {} ++(0.25,0.25) ;
\draw[fill=black]  (6.55,3.4) rectangle  node[black] {} ++(0.25,0.25) ;
\draw[fill=black]  (4.6,3.4) rectangle  node[black] {} ++(0.25,0.25) ;
\draw[fill=black]  (1,4.2) rectangle  node[black] {} ++(0.25,0.25) ;
\draw[fill=black]  (-3,1) rectangle  node[black] {} ++(0.25,0.25) ;
\draw[fill=black]  (8.5,1) rectangle  node[black] {} ++(0.25,0.25) ;
\draw[fill=black]  (10,1.95) rectangle  node[black] {} ++(0.25,0.25) ;
\draw[fill=black]  (10,-0.2) rectangle  node[black] {} ++(0.25,0.25) ;
\draw[fill=black]  (11.45,1) rectangle  node[black] {} ++(0.25,0.25) ;
\draw[fill=black]  (1,-0.45) rectangle  node[black] {} ++(0.25,0.25) ;
\draw[fill=black]  (4.75,-0.45) rectangle  node[black] {} ++(0.25,0.25) ;
\draw  (1.1,3.35) circle(1)  (7.6,1) circle(1) (5.7,3.35) circle(1) (-1.9,1) circle(1) ;
\draw (1,1) node[state,draw] {} (4.75,1) node[state,draw] {} (10.1,1) node[state,draw,scale=0.8] {};
\draw (1.1,-0.8) node  {$h_0$};
\draw (12.2,1.1) node  {$h_k$};
\end{tikzpicture}
\end{center}
\caption{The node-link graph $\widetilde{G}$}\label{fig:two-cuts}
\end{figure}
A node of $\widetilde{G}$ which belongs to more than one cycle is
a \emph{join node}.
The join nodes each have degree 4, and
all other nodes of $\widetilde{G}$ have degree two.

Since we seek an upper bound on $\Lambda(u,v)$
for a given number of vertices $n$, we may assume that
all nodes of $\widetilde{G}$ lie on some path from $h_0$ to $h_k$ in
$\widetilde{G}$. (This corresponds to trimming all ``branches'' of
$\widetilde{G}$ which do not lie on the unique ``path'' from $h_0$
to $h_k$ in $\widetilde{G}$, viewed as a tree.)  This only removes
2-edge-cuts of $G$ which do not separate $u$ and $v$.
Hence for an upper bound, by relabelling the nodes if necessary,
we can assume that $\widetilde{G}$ consists of $\ell$ cycles
$C_1,\ldots, C_\ell$,
where
$h_0$ belongs only to $C_1$ and $h_k$ belongs only to $C_\ell$,
such that $C_j$ and $C_{j+1}$ intersect
at a single join node $h_j$, for $j=1,\ldots, \ell-1$,
while all other cycles are disjoint.
This situation is illustrated in Figure~\ref{fig:cuts-path}.
(In particular, $h_0$ and $h_k$ are not join nodes.)
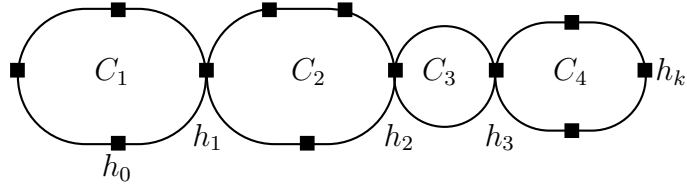
\begin{figure}[ht!]
\begin{center}
\begin{tikzpicture}[scale=0.67,state/.style={rounded rectangle,minimum size=1.8cm,minimum width=2.5cm}]
\draw[fill=black]  (1,2.2) rectangle  node[black] {} ++(0.25,0.25) ;
\draw[fill=black]  (5.5,2.2) rectangle  node[black] {} ++(0.25,0.25) ;
\draw[fill=black]  (4,2.2) rectangle  node[black] {} ++(0.25,0.25) ;
\draw[fill=black]  (-1,1) rectangle  node[black] {} ++(0.25,0.25) ;
\draw[fill=black]  (2.75,1) rectangle  node[black]{} ++(0.25,0.25) ;
\draw[fill=black]  (6.5,1) rectangle  node[black] {} ++(0.25,0.25) ;
\draw[fill=black]  (8.5,1) rectangle  node[black] {} ++(0.25,0.25) ;
\draw[fill=black]  (10,1.95) rectangle  node[black] {} ++(0.25,0.25) ;
\draw[fill=black]  (10,-0.2) rectangle  node[black] {} ++(0.25,0.25) ;
\draw[fill=black]  (11.45,1) rectangle  node[black] {} ++(0.25,0.25) ;
\draw[fill=black]  (1,-0.45) rectangle  node[black] {} ++(0.25,0.25) ;
\draw[fill=black]  (4.75,-0.45) rectangle  node[black] {} ++(0.25,0.25) ;
\draw (7.6,1) circle(1); 
\draw (1,1) node[state,draw] {} (4.75,1) node[state,draw] {} (10.1,1) node[state,draw,scale=0.8] {};
\draw (12.1,1.1) node  {$h_k$};
\draw (1.1,-0.8) node  {$h_0$};
\draw (2.9,-0.2) node  {$h_1$};
\draw (6.7,-0.2) node  {$h_2$};
\draw (8.7,-0.2) node  {$h_3$};
\draw (1,1.1) node  {$C_1$} (4.9,1.1) node  {$C_2$} (7.5,1.1) node  {$C_3$}
(10.1,1.1) node  {$C_4$};
\end{tikzpicture}
\end{center}
\caption{The node-link graph after trimming: an example with $\ell=4$ cycles. }\label{fig:cuts-path}
\end{figure}

Each cycle $C_j$ can be considered as the disjoint union of
two paths between $h_{j-1}$ and $h_j$,
or between $h_{\ell-1}$ and $h_k$ if $j=\ell$.
Denote by $a_j$, $a_j'$ the lengths of these
two paths around $C_j$.
Any 2-edge-cut which separates $u$ from $v$
in $G$ corresponds to a pair of links in $\widetilde{G}$ which both
belong to some cycle $C_j$, with one link from each of the two paths
which comprise $C_j$.  Hence
$\Lambda(u,v) = \sum_{j=1}^\ell a_j a_j'$.
If the length $a_j+a_j'$ of each cycle is fixed then
the quantity $a_j a'_j$ is maximised when $|a_j-a_j'|\leq 1$.
Therefore
$\Lambda(u,v)\leq \sum_{i=1}^\ell s_i^2$, where $s_j = |C_j|/2\geq 1$.
(This upper bound holds even when some cycles have odd length.)
Now, the total number of nodes in $\widetilde{G}$ is
$N=2\left(\sum_{j=1}^\ell s_j\right) -(\ell-1)$, by inclusion-exclusion.
Thus, for a given number of nodes $N$ and a given value of $\ell$,
the bound $S$ is maximised by setting $s_j=1$ for $j=2,\ldots,\ell$, and
$s_1 = (N - \ell+1)/2$.   Therefore
\begin{equation}
\label{Tdef}
 \Lambda(u,v)\leq T = \frac{(N - \ell + 1)^2}{4} + \ell-1.
\end{equation}

First suppose that $r\geq 3$. Then every subgraph $H_j$
must contain a vertex
which is only adjacent to other vertices of $H_j$, since every node in
$\widetilde{G}$ has degree at most 4. Therefore each $H_j$ must contain
at least
$2r+1$ vertices, so there are $N\leq n/(2r+1)$ nodes in $\widetilde{G}$.
Note that $\deriv{T}{\ell} = -\ell + 1$ which equals zero when $\ell=1$
and is negative when $\ell >1$. So $T$ is largest when
$\ell=1$, giving
$\Lambda(u,v)\leq T\leq N^2/4 \leq n^2/(4r+2)^2$,
as claimed.
An extremal graph $G$ is shown in Figure~\ref{fig:maxnumber}: each
grey block is $K_{2r+1}$ minus an edge. This construction gives an
extremal example for any positive $n=0\bmod (4r+2)$, so this upper
bound is asymptotically tight.
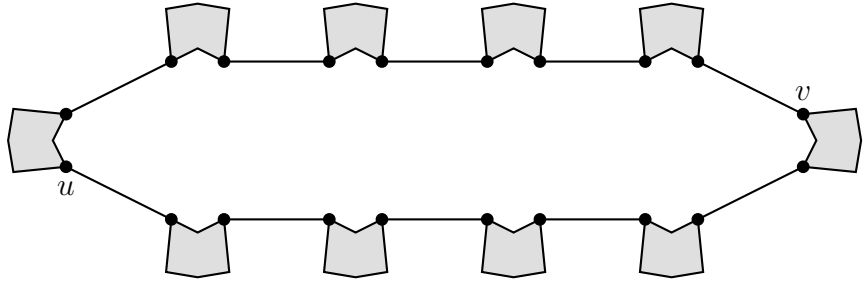
\begin{figure}[ht!]
\begin{center}
\begin{tikzpicture}[scale=0.7]
\begin{scope}[rotate=90]
\draw [fill=lightgray, fill opacity=0.5] (0,0)--(-0.1,1)--(0.5,1.1)--(1.1,1)--(1,0)--(0.5,0.25)--cycle;
\draw [fill] (0,0) circle (0.1) node (a1) {} (-.4,0) node {$u$} ;
\draw [fill] (1,0) circle (0.1) node (a2) {};
\end{scope}
\begin{scope}[xshift=2cm,yshift=2cm]
\draw [fill=lightgray, fill opacity=0.5] (0,0)--(-0.1,1)--(0.5,1.1)--(1.1,1)--(1,0)--(0.5,0.25)--cycle;
\draw [fill] (0,0) circle (0.1) node (a3) {}  ;
\draw [fill] (1,0) circle (0.1) node (a4) {};
\end{scope}
\begin{scope}[xshift=5cm,yshift=2cm]
\draw [fill=lightgray, fill opacity=0.5] (0,0)--(-0.1,1)--(0.5,1.1)--(1.1,1)--(1,0)--(0.5,0.25)--cycle;
\draw [fill] (0,0) circle (0.1) node (a5) {}  ;
\draw [fill] (1,0) circle (0.1) node (a6) {};
\end{scope}
\begin{scope}[xshift=8cm,yshift=2cm]
\draw [fill=lightgray, fill opacity=0.5] (0,0)--(-0.1,1)--(0.5,1.1)--(1.1,1)--(1,0)--(0.5,0.25)--cycle;
\draw [fill] (0,0) circle (0.1) node (a7) {}  ;
\draw [fill] (1,0) circle (0.1) node (a8) {};
\end{scope}
\begin{scope}[xshift=11cm,yshift=2cm]
\draw [fill=lightgray, fill opacity=0.5] (0,0)--(-0.1,1)--(0.5,1.1)--(1.1,1)--(1,0)--(0.5,0.25)--cycle;
\draw [fill] (0,0) circle (0.1) node (a9) {}  ;
\draw [fill] (1,0) circle (0.1) node (a10) {};
\end{scope}
\begin{scope}[xshift=14cm,yshift=1cm,rotate=-90]
\draw [fill=lightgray, fill opacity=0.5] (0,0)--(-0.1,1)--(0.5,1.1)--(1.1,1)--(1,0)--(0.5,0.25)--cycle;
\draw [fill] (0,0) circle (0.1) node (b10) {} (-.4,0) node {$v$} ;
\draw [fill] (1,0) circle (0.1) node (b9) {};
\end{scope}
\begin{scope}[xshift=3cm,yshift=-1cm,rotate=180]
\draw [fill=lightgray, fill opacity=0.5] (0,0)--(-0.1,1)--(0.5,1.1)--(1.1,1)--(1,0)--(0.5,0.25)--cycle;
\draw [fill] (0,0) circle (0.1) node (b8) {}  ;
\draw [fill] (1,0) circle (0.1) node (b7) {};
\end{scope}
\begin{scope}[xshift=6cm,yshift=-1cm,rotate=180]
\draw [fill=lightgray, fill opacity=0.5] (0,0)--(-0.1,1)--(0.5,1.1)--(1.1,1)--(1,0)--(0.5,0.25)--cycle;
\draw [fill] (0,0) circle (0.1) node (b6) {}  ;
\draw [fill] (1,0) circle (0.1) node (b5) {};
\end{scope}
\begin{scope}[xshift=9cm,yshift=-1cm,rotate=180]
\draw [fill=lightgray, fill opacity=0.5] (0,0)--(-0.1,1)--(0.5,1.1)--(1.1,1)--(1,0)--(0.5,0.25)--cycle;
\draw [fill] (0,0) circle (0.1) node (b4) {}  ;
\draw [fill] (1,0) circle (0.1) node (b3) {};
\end{scope}
\begin{scope}[xshift=12cm,yshift=-1cm,rotate=180]
\draw [fill=lightgray, fill opacity=0.5] (0,0)--(-0.1,1)--(0.5,1.1)--(1.1,1)--(1,0)--(0.5,0.25)--cycle;
\draw [fill] (0,0) circle (0.1) node (b2) {}  ;
\draw [fill] (1,0) circle (0.1) node (b1) {};
\end{scope}
\draw (a2)--(a3) (a4)--(a5) (a6)--(a7) (a8)--(a9) (a10)--(b10)
(b9)--(b2) (b1)--(b4)  (b3)--(b6) (b5)--(b8) (b7)--(a1) ;
\end{tikzpicture}
\end{center}
\caption{Extremal graph when $r\geq 3$.}\label{fig:maxnumber}
\end{figure}
Certainly the weaker bound $\Lambda(u,v)\leq n^2/(15 r^2)$ also holds
when $r\geq 3$.

Now suppose that $r=2$ and consider the induced subgraphs $H_0,\ldots, H_k$.
Again we may assume that $s_2 = \cdots = s_\ell = 1$, giving (\ref{Tdef}).
If $h_j$ is a join node of $\widetilde{G}$ then $H_j$ may be as small as
a single vertex, since each join node of $\widetilde{G}$ has degree 4.
We call such a join node $h_j$ a \emph{singleton}.
Otherwise, if $H_j$ contains a vertex whose neighbourhood
is contained within $H_j$ then $H_j$ contains at least $2r+1=5$ vertices.
In particular, this must hold for $h_k$ and for all nodes in $C_1$ except $h_1$.
A final option (only available when $r=2$) is that a join node $h_j$
corresponds to to a subgraph $H_j$ which is isomorphic to $K_4$,
and hence has precisely four vertices: two of these vertices are joined
to a vertex in $H_{j-1}$ and two are joined to $H_{j+1}$ (or $H_k$,
if $j=\ell-1$).   Here we use the fact that any part $U_j$ of
the partition of $V$ with more than one vertex must have at least 4 vertices,
as $G[U_j]$ must be 3-edge-connected.
Since $G$ has no repeated vertices, at most every second join node can
be a singleton, with the remaining join nodes contributing at least 4 vertices
to $G$.

It follows that $\Lambda(u,v) \leq T = s_1^2 + \ell-1$  and
\[ n \geq 10s_1 + \left\lceil\frac{\ell-1}{2}\right\rceil
         + 4\left\lceil\frac{\ell-1}{2}\right\rceil
  = \begin{cases} 10s_1 + 5(\ell-1)/2 & \text{ if $\ell$ is odd,}\\
     10s_1 + 5\ell/2 - r & \text{ if $\ell$ is even.}
\end{cases}
\]
For an upper bound on $\Lambda(u,v)$ we take equality the above expression
for $n$ and assume that $\ell$ is even, leading to
$\Lambda(u,v) \leq T = s_1^2  - 4 s_1 + 2(n+4)/5$.
(If $\ell$ is odd then only the term which is independent of $s_1$ changes.)
Hence $\deriv{T}{s_1}$ is negative when $s_1=1$, zero when $s_1=2$
and positive for $s_1\geq 3$. It follows that for a given value of $n$,
the maximum possible value of $T$ occurs when $s_1=1$ or when $s_1$ is
as large as possible.

When $s_1=1$ we have $\Lambda(u,v)\leq \ell\leq 2(n-6)/5$, at least
when $\ell$ is even.
A graph $G$ on $n$ vertices which meets this bound
can be constructed whenever $n\geq 11$ and $n=1\bmod 5$.
The example shown in Figure~\ref{fig:n=21} has $n=21$ vertices
and $\ell=6$ 2-edge-cuts between $u$ and $v$.
\begin{figure}[ht!]
\begin{center}
\begin{tikzpicture}[scale=0.6]
\draw [fill] (-1,1) circle (0.1) node (a0) {};
\draw [fill] (0,0) circle (0.1) node (a1) {} (-1.4,1) node {$u$} ;
\draw [fill] (2,0) circle (0.1) node (a2) {};
\draw [fill] (2,2) circle (0.1) node (a3) {};
\draw [fill] (0,2) circle (0.1) node (a4) {};
\draw [fill] (4,1) circle (0.1) node (a5) {};
\draw (a4)--(a0)--(a1) (a2)--(a0)--(a3) ;
\draw (a1)--(a2) (a3)--(a4)--(a1)--(a3) (a2)--(a4) (a2)--(a5)--(a3);
\draw [fill] (6,0) circle (0.1) node (b1) {};
\draw [fill] (8,0) circle (0.1) node (b2) {};
\draw [fill] (8,2) circle (0.1) node (b3) {};
\draw [fill] (6,2) circle (0.1) node (b4) {};
\draw [fill] (10,1) circle (0.1) node (b5) {};
\draw (b1)--(b2)--(b3)--(b4)--(b1)--(b3) (b2)--(b4) (b2)--(b5)--(b3) (b1)--(a5)--(b4);
\draw [fill] (12,0) circle (0.1) node (c1) {};
\draw [fill] (14,0) circle (0.1) node (c2) {};
\draw [fill] (14,2) circle (0.1) node (c3) {};
\draw [fill] (12,2) circle (0.1) node (c4) {};
\draw [fill] (16,1) circle (0.1) node (c5) {};
\draw (c1)--(c2)--(c3)--(c4)--(c1)--(c3) (c2)--(c4) (c2)--(c5)--(c3) (c1)--(b5)--(c4);
\draw [fill] (18,0) circle (0.1) node (d1) {};
\draw [fill] (20,0) circle (0.1) node (d2) {};
\draw [fill] (20,2) circle (0.1) node (d3) {};
\draw [fill] (18,2) circle (0.1) node (d4) {} (21.4,1) node {$v$} ;
\draw (d1)--(d2)--(d3)--(d4) (d1)--(d3) (d2)--(d4) (d1)--(c5)--(d4);
\draw [fill] (21,1) circle (0.1) node (d0) {};
\draw (d4)--(d0)--(d1) (d2)--(d0)--(d3) ;
\end{tikzpicture}
\end{center}
\caption{Maximum number of separating 2-edge-cuts for $n=21$}\label{fig:n=21}
\end{figure}
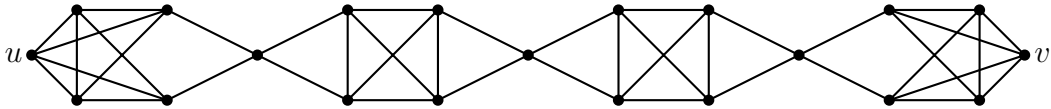
Since $2(n-6)/5\leq n^2/60$ for all values of $n$, this establishes the
required bound.

Next, consider the subcase when $s_1$ is as large as possible, for
a fixed value of $n$. This is achieved by setting $\ell=1$, so
there are no join nodes. Then every $H_j$ has
at least 5 vertices, so for an upper bound we take
$s_1 = n/10$ and $\Lambda(u,v)\leq s_1^2 = n^2/100 < n^2/60$.
This completes the proof.
\end{proof}

\bibliographystyle{plain}

\end{document}